\author{Haris Angelidakis}
\author{Dylan Hyatt-Denesik}
\author{Laura Sanit{\`{a}}}
\affil{TU Eindhoven}
\title{Node Connectivity Augmentation via Iterative Randomized Rounding}
\date{}
\newtheorem{theorem}{Theorem}
\newtheorem{lemma}{Lemma}
\newtheorem{definition}{Definition}
\newtheorem{observation}{Observation}
\newtheorem{corollary}{Corollary}
\newtheorem{fact}{Fact}
\newcommand{\E}{\mathbf{E}}
\newcommand{\OPT}{\mathtt{OPT}}
\newcommand{\LP}{\mathrm{LP}}
\begin{document}

\maketitle

\begin{abstract}
Many fundamental network design problems deal with the design of low-cost networks that are resilient to the failure of their elements (such as nodes or links). One such problem is Connectivity Augmentation, where the goal is to cheaply increase the connectivity of a given network from a value $k$ to $k+1$. 

The most studied setting focuses on edge-connectivity, and it is well-known that it can be reduced to the case $k=2$, called Cactus Augmentation. From an approximation perspective, Byrka, Grandoni, and Jabal Ameli (2020) were the first who managed to break the 2-approximation barrier for this problem, by exploiting a connection to the Steiner Tree problem, and by tailoring the analysis of the iterative randomized rounding technique for Steiner Tree to the specific instances arising from this connection. Very recently, Nutov (2020) observed that a similar reduction to Steiner Tree also holds for a node-connectivity problem called Block-Tree Augmentation, where the goal is to add edges to a given spanning tree in order to make the resulting graph 2-node-connected. Block-Tree Augmentation can be seen as the direct generalization of the well-studied Tree Augmentation problem ($k=1$) to the node-setting. Combining Nutov's result with the algorithm of Byrka, Grandoni, and Jabal Ameli yields a 1.91-approximation for Block-Tree Augmentation, that is the best bound known so far.

In this work, we give a 1.892-approximation algorithm for the problem of augmenting the node-connectivity of any given graph from 1 to 2. As a corollary, we improve upon the state-of-the-art approximation factor for Block-Tree Augmentation. Our result is obtained by developing a different and simpler analysis of the iterative randomized rounding technique when applied to the Steiner Tree instances arising from the aforementioned reductions. Our results also imply a 1.892-approximation algorithm for Cactus Augmentation. While this does not beat the best approximation factor by Cecchetto, Traub, and Zenklusen (2021) that is known for this problem, a key point of our work is that the analysis of our approximation factor is quite simple compared to previous results in the literature. In addition, our work gives new insights on the iterative randomized rounding method, that might be of independent interest.
\end{abstract}

\section{Introduction}

Survivable network design deals with the design of a low-cost network that is resilient to the failure of some of its elements (like nodes or links), and finds application in various settings such as, e.g., telecommunications and transportation. One of the most fundamental problems in this area is Connectivity Augmentation, which asks to cheaply increase the connectivity of a given network from a value $k$ to $k+1$. The most classical setting considers \emph{edge}-connectivity: here the input consists of a $k$-edge-connected graph $G$ and a set $L$ of extra links, and the goal is to select the smallest subset $L' \subseteq L$ of links such that if we add $L'$ to graph $G$, then its connectivity increases by 1, i.e., it becomes $(k+1)$-edge-connected. This Edge-Connectivity Augmentation problem has a long history. It was observed long ago (see Dinitz et al.~\cite{Dinic76}, as well as Cheriyan et al.~\cite{DBLP:conf/esa/CheriyanJR99} and Khuller and Thurimella~\cite{DBLP:journals/jal/KhullerT93}) that there is an approximation-preserving reduction from the Edge-Connectivity Augmentation problem for an arbitrary $k$ to the case where $k = 1$, if $k$ is odd, and to the case where $k = 2$, if $k$ is even. When $k = 1$, the problem is known as the Tree Augmentation problem (TAP), since the input graph in that case can be assumed to be a tree without loss of generality, while the case of $k = 2$ is called the Cactus Augmentation problem (CacAP), since the input graph in that case can be assumed to be a cactus graph\footnote{A cactus is a connected graph where every edge is part of exactly one cycle.}.
It is easy to see that TAP is a special case of CacAP\footnote{For that, we can simply add two parallel copies of each edge in the input tree of a TAP instance and the resulting instance is equivalent to a CacAP instance.}. This, combined with the aforementioned reduction, implies that an $\alpha$-approximation algorithm for CacAP yields an $\alpha$-approximation algorithm for the general Edge-Connectivity Augmentation problem. Both TAP and CacAP admit simple 2-approximation algorithms~\cite{DBLP:conf/soda/GoemansGPSTW94,DBLP:journals/combinatorica/Jain01,DBLP:journals/jal/KhullerT93}. Approximation algorithms with approximation ratio better than $2$ have been discovered for TAP in a long line of research spanning three decades~\cite{DBLP:journals/talg/Adjiashvili19,DBLP:journals/corr/abs-2012-00086,DBLP:journals/algorithmica/CheriyanG18,DBLP:journals/algorithmica/CheriyanG18a,DBLP:journals/orl/CheriyanKKK08,DBLP:journals/tcs/CohenN13,DBLP:journals/talg/EvenFKN09,DBLP:conf/soda/Fiorini0KS18,DBLP:journals/siamcomp/FredericksonJ81,DBLP:conf/stoc/0001KZ18,DBLP:journals/jal/KhullerT93,DBLP:journals/talg/KortsarzN16,DBLP:journals/dam/KortsarzN18,DBLP:journals/dam/Nagamochi03,DBLP:conf/esa/Nutov17}, and more recently for CacAP~\cite{DBLP:conf/stoc/Byrka0A20,DBLP:journals/corr/abs-2012-00086,DBLP:journals/corr/abs-2009-13257}.

The above problems naturally extend to their node-connected variants. However, approximation results on Node-Connectivity Augmentation are more scarce, even for the most basic generalization, known as the Block-Tree Augmentation problem (Block-TAP), which is the direct extension of TAP to the node-connected case. We now define it formally.

\begin{definition}[Block-TAP]
Let $T=(V, E)$ be a tree, and $L \subseteq \binom{V}{2}$. The goal is to compute a minimum cardinality set $L'\subseteq L$ such that $G = (V, E \cup L')$ is a $2$-node-connected graph.
\end{definition}

Similar to CacAP, a $2$-approximation algorithm was known for quite some time~\cite{DBLP:journals/siamcomp/FredericksonJ81,DBLP:journals/jal/KhullerT93} for Block-TAP, and until recently, it was an open question to design an approximation algorithm with ratio better than $2$. Very recently, Nutov~\cite{nutov20202nodeconnectivity} observed that Block-TAP can be reduced to special instances of the (unweighted) Node Steiner Tree problem, extending the techniques of Basavaraju et al.~\cite{DBLP:conf/icalp/BasavarajuFGMRS14} and Byrka et al.~\cite{DBLP:conf/stoc/Byrka0A20} used for CacAP. We recall that the (unweighted) Node Steiner Tree problem takes as input a graph $G$ and a subset of nodes (called terminals), and asks to find a tree spanning the terminals which minimizes the number of non-terminal nodes included in the tree. These special Node Steiner Tree instances exhibit some crucial properties, similar to the Steiner Tree instances constructed by~\cite{DBLP:conf/stoc/Byrka0A20} for CacAP, and hence the $1.91$-approximation of~\cite{DBLP:conf/stoc/Byrka0A20} for CacAP also yields a $1.91$-approximation to Block-TAP. This is the first result breaking the barrier of $2$ on its approximability, and the best bound known so far.

\paragraph{Our results and techniques.} In this work, we consider the problem of augmenting the node-connectivity of a graph from 1 to 2, that is slightly more general than Block-TAP.

\begin{definition}[1-Node-CAP]
Let $G=(V, E)$ be a $1$-node-connected graph, and $L \subseteq \binom{V}{2}$. The goal is to compute a minimum cardinality set $L'\subseteq L$ such that $G' = (V, E \cup L')$ is a $2$-node-connected graph.
\end{definition}

Our main result is the following.
\begin{theorem}\label{thm:approx-1node-tap}
There exists a $1.892$-approximation algorithm for 1-Node-CAP.
\end{theorem}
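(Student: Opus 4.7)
The plan is to follow the path outlined in the introduction: reduce 1-Node-CAP to an unweighted Node Steiner Tree instance with special structure, apply iterative randomized rounding, and then bound its expected performance via a new, simpler analysis tailored to that structure.

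For the reduction, I would first pass to the block-cut tree $T$ of the input graph $G$, whose nodes are the blocks and cut vertices of $G$. A set of links $L'\subseteq L$ makes $G$ 2-node-connected if and only if the natural image of $L'$ in $T$ satisfies the Block-TAP connectivity condition, and this correspondence is cost-preserving. Hence 1-Node-CAP reduces to Block-TAP in an approximation-preserving way, and by Nutov's reduction~\cite{nutov20202nodeconnectivity} the latter further reduces to an unweighted Node Steiner Tree instance $(H,R)$ whose optimum equals the 1-Node-CAP optimum up to an additive constant. Crucially, this instance inherits the structural properties of the CacAP-derived Steiner Tree instances of Byrka et al.~\cite{DBLP:conf/stoc/Byrka0A20}, in particular a very restricted shape for the possible full components incident to each terminal; this rigidity is what will make the improved analysis possible.

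On $(H,R)$ I would then run iterative randomized rounding: solve the directed-component LP relaxation $\LP$, sample a full component with probability proportional to its LP value (scaled by a parameter), contract it, and iterate. A standard calculation reduces the approximation factor to bounding, at each iteration, the expected cost of the sampled component in terms of the current LP value and the number of remaining terminals. The main obstacle, and the step where the analysis is expected to be substantially simpler than in~\cite{DBLP:conf/stoc/Byrka0A20}, is to replace their global witness-tree/MST charging by a single \emph{local} inequality that holds at every iteration. I plan to derive such an inequality directly from the tree-like topology of the input to 1-Node-CAP: the terminals of $(H,R)$ come from the block-cut tree, so contracting a full component affects the LP only along a short path in that tree, and a pairing argument among "parallel" terminals should let me bound the LP drop tightly against the component's cost. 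Plugging the resulting inequality into the iterative template and optimizing the sampling parameter should then yield the $1.892$ bound. The technical difficulty is carrying out the case analysis over the (few) possible Steiner component shapes so that the per-iteration inequality is tight; once it is in place, the rest of the proof should be a routine computation.
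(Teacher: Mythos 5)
There are two genuine gaps in your proposal. First, your reduction step asserts that 1-Node-CAP reduces to (unweighted) Block-TAP via the block-cut tree in a cost-preserving way. This is exactly the point the paper flags as \emph{not} working in the unweighted setting: to make the block-cut tree instance equivalent, one must add links of weight $0$ between every pair of cut vertices sharing a block, so the intermediate instance is a $\{0,1\}$-weighted Block-TAP instance, and one then has to verify that Nutov's incidence-graph reduction can absorb these weight-$0$ links and still output an \emph{unweighted} CA-Node-Steiner-Tree instance (this is the content of Lemma~\ref{lemma:reduction-to-block-cut} and the reduced incidence-graph construction in Appendix~\ref{appendix:approx-preserve-block}). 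Also, the correspondence between link sets and Steiner-node sets is exact, not ``up to an additive constant.''

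Second, and more seriously, the heart of your argument --- replacing the witness-tree charging by a ``single local inequality'' bounding the per-iteration LP drop against the sampled component's cost --- is not carried out and is unlikely to go through as described. Controlling the LP decrease per iteration directly is precisely the difficulty that the witness-tree machinery of Byrka et al.\ is designed to circumvent; the terminals of the Steiner Tree instance correspond to leaf edges of the block-cut tree and a sampled component does \emph{not} in general affect the LP only ``along a short path'' in that tree, so your pairing argument has no evident basis. The paper does not abandon the witness tree: it keeps the framework (expected deletion time of a Steiner node $v$ is $H(w(v))\cdot M$, Lemma~\ref{lem:probability_distr}) but constructs the witness tree \emph{deterministically}, by decomposing the optimal tree into final-components and, inside each, contracting minimum-weight root-to-leaf paths where node weights are inverse degrees (Algorithm~\ref{alg:computing_w}); the bound $\gamma\le 1.8917$ then follows from an induction on subtrees (Lemma~\ref{lem:invariant}). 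Without this construction, or a concrete substitute for it with a proved per-node bound on $H(w(v))$, the $1.892$ factor is not established.
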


Note that Block-TAP is a particular case of 1-Node-CAP where the input graph $G$ is a tree\footnote{It is stated in~\cite{nutov20202nodeconnectivity} that 1-Node-CAP and Block-TAP are equivalent by relying on constructing the so-called block-cut tree of a given graph. While this is immediate if one allows \emph{weights} for the links (see also~\cite{DBLP:journals/siamcomp/FredericksonJ81,DBLP:journals/jal/KhullerT93}), the same reduction does not seem to hold in the unweighted setting. We refer to~\ref{appendix:1node-block-tap} for more details.}. Therefore as a corollary, with our result we improve upon the approximation bound for Block-TAP.

\begin{corollary}\label{thm:approx-block-tap}
There exists a $1.892$-approximation algorithm for Block-TAP.
\end{corollary}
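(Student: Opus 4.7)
The plan is to derive the corollary immediately from Theorem~\ref{thm:approx-1node-tap} by exhibiting Block-TAP as a special case of 1-Node-CAP. Given an instance $(T, L)$ of Block-TAP, I would feed the pair $(T, L)$ into the 1-Node-CAP algorithm by setting $G := T$. The only thing to verify is that this is a valid 1-Node-CAP input, i.e., that $T$ is 1-node-connected. This holds because any tree on at least two vertices is connected, hence 1-node-connected; and we may clearly assume $|V(T)| \geq 3$, as otherwise the instance is trivial (no feasible augmentation to a 2-node-connected graph exists on fewer than three vertices).

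With this embedding in place, the feasibility condition and the objective of the two problems coincide verbatim: a set $L' \subseteq L$ is feasible for the Block-TAP instance $(T, L)$ if and only if it is feasible for the 1-Node-CAP instance $(T, L)$, namely in both cases $L'$ must satisfy that $(V, E(T) \cup L')$ is 2-node-connected, and the objective is $|L'|$ in both. Consequently, the optima of the two instances are equal, and the solution produced by the algorithm of Theorem~\ref{thm:approx-1node-tap} is a feasible Block-TAP solution of cardinality at most $1.892 \cdot \mathtt{OPT}$.

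I do not expect any real obstacle; the only pitfall to avoid is to not conflate this (trivial) direction with the reverse one flagged in the footnote. Reducing a general 1-Node-CAP instance back to a Block-TAP instance (e.g., via the block-cut tree construction) is the subtle direction in the unweighted setting, but it is not needed here: the approximation guarantee transfers instance-by-instance in the direction we are using, so the corollary follows in one line once Theorem~\ref{thm:approx-1node-tap} is established.
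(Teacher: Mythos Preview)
Your proposal is correct and matches the paper's own justification: the paper states explicitly, right before the corollary, that Block-TAP is a particular case of 1-Node-CAP where the input graph is a tree, and derives the corollary from Theorem~\ref{thm:approx-1node-tap} exactly as you do. Your remark about the footnote is also apt---only the trivial direction (tree $\Rightarrow$ connected) is needed here, not the subtle block-cut tree reduction.
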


Moreover, as another corollary of our techniques and prior results, we also get the following result about CacAP.
\begin{theorem}\label{thm:approx-cacap}
There exists a $1.892$-approximation algorithm for CacAP.
\end{theorem}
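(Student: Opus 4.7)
The plan is to re-use the reduction of Byrka, Grandoni, and Jabal Ameli from CacAP to a structured class of (unweighted) Steiner Tree instances, and then feed those instances through the same iterative randomized rounding scheme and analysis that we develop for \mbox{1-Node-CAP}. Concretely, given a CacAP instance $(G,L)$ with $G$ a cactus, I would first invoke the known construction (the same one used to obtain the $1.91$-bound in \cite{DBLP:conf/stoc/Byrka0A20}) that produces a Steiner Tree instance in which the terminal set and the candidate non-terminals encode the cycles of the cactus and the links of $L$, respectively, and in which feasible Steiner trees are in one-to-one correspondence with feasible CacAP augmentations, preserving cost up to a constant additive term that is absorbed by any optimal solution. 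This is a black-box step that requires no new argument.

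Next I would verify that the Steiner Tree instances produced by this reduction satisfy the exact structural properties that our improved analysis for \mbox{1-Node-CAP} exploits: in particular, the restricted form of full components (e.g.\ bounded size or a specific tree-like shape of the non-terminal part) and the specific LP/witness structure that lets us bound the expected gain of each iteration of randomized rounding against the Steiner LP. Since the CacAP-derived instances in \cite{DBLP:conf/stoc/Byrka0A20} and the \mbox{Block-TAP}/1-Node-CAP instances in \cite{nutov20202nodeconnectivity} are both designed to share these combinatorial features (this is exactly why the $1.91$-bound of \cite{DBLP:conf/stoc/Byrka0A20} transfers verbatim to Block-TAP), the verification is mostly a structural check rather than a new technical contribution.

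Having confirmed the structural compatibility, I would then run the same iterative randomized rounding algorithm used for Theorem~\ref{thm:approx-1node-tap} on the resulting Steiner Tree instance and apply the new, simpler analysis developed in the body of the paper. The expected cost of the returned Steiner tree is at most $1.892 \cdot \mathrm{LP}$, and by the cost-preservation of the reduction, converting this tree back into an augmentation yields a CacAP solution of expected size at most $1.892 \cdot \OPT$. Standard derandomization (by conditional expectations, as in \cite{DBLP:conf/stoc/Byrka0A20}) turns this into a deterministic polynomial-time $1.892$-approximation, establishing Theorem~\ref{thm:approx-cacap}.

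The main obstacle, and the only place where any real work is needed beyond quoting prior results, is the structural verification in the second step: I need to make sure that every assumption the new \mbox{1-Node-CAP} analysis places on the Steiner Tree instance (on full components, on the shape of the support of the LP solution, and on the feasibility of the random witnesses used to bound the drop in the LP value per iteration) is preserved by the CacAP reduction. If any such property is slightly weaker on the CacAP side, I would either adapt the reduction (e.g.\ by subdividing or duplicating edges of the cactus, in the spirit of the TAP-to-CacAP embedding recalled in the introduction) or strengthen the analysis to cover both families uniformly; either adjustment should be routine given the close parallel between the two reductions already noted by Nutov~\cite{nutov20202nodeconnectivity}.
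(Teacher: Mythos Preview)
Your proposal is correct and follows essentially the same route as the paper. The paper streamlines your ``structural verification'' step by introducing the CA-Node-Steiner-Tree abstraction (Definition~\ref{def:ca-node-steiner}) upfront: the entire iterative randomized rounding analysis is carried out once for this abstract class, and then Theorem~\ref{cor:approx-preserve} (the Byrka--Grandoni--Jabal Ameli reduction) certifies that CacAP instances map into it, so no separate per-problem check or adaptation is required.
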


The starting point of our work is the reduction of Block-TAP and CacAP to the previously mentioned Node Steiner Tree instances of~\cite{DBLP:conf/icalp/BasavarajuFGMRS14,DBLP:conf/stoc/Byrka0A20,nutov20202nodeconnectivity}, which from now on we call CA-Node-Steiner-Tree instances (formally defined in Definition~\ref{def:ca-node-steiner}). We first observe that instances of the more general 1-Node-CAP can be reduced to CA-Node-Steiner-Tree instances. We clarify that these instances, obtained by the above reduction, can also be treated as Edge Steiner Tree instances (as done by~\cite{DBLP:conf/stoc/Byrka0A20}) but we will view them as Node Steiner Tree instances, since this allows for a more direct correspondence between ``links to add'' for 1-Node-CAP/Block-TAP/CacAP, and ``Steiner nodes to select'' for Steiner Tree. This view helps us to give a cleaner analysis of the iterative randomized rounding technique. Thus, the main task for proving Theorems~\ref{thm:approx-1node-tap} and~\ref{thm:approx-cacap} is to design a $1.892$-approximation for CA-Node-Steiner-Tree.

Besides giving a 1.892 approximation for 1-Node-CAP, which improves upon the state-of-the-art approximation of $1.91$ by Nutov~\cite{nutov20202nodeconnectivity} for Block-TAP, one key point of our work is that the analysis of our approximation bound is quite simple compared to the existing results in the literature for CacAP that achieve a better than 2 approximation. More precisely, only the algorithm of~\cite{DBLP:journals/corr/abs-2012-00086} gives a better approximation factor than our algorithm---in fact, much better ---but its analysis is far more involved, spanning over 70 pages. Furthermore, our work gives some new insights on the iterative randomized rounding method introduced by Byrka et al.~\cite{DBLP:journals/jacm/ByrkaGRS13} that might be of independent interest. We give a few more details of this iterative rounding next.

The iterative randomized rounding technique, applied to CA-Node-Steiner-Tree, at each iteration uses an (approximate) LP-relaxation to sample a set of Steiner nodes connecting part of the terminals, contract them, and iterate until all terminals are connected. Roughly speaking, the heart of the analysis lies in bounding the expected number of iterations of the algorithm until a Steiner node of a given initial optimal solution is not needed anymore in the current (contracted) instance. This is achieved by a suitably chosen spanning tree on the set of terminals (called the \emph{witness tree}). In the original Steiner Tree work given by~\cite{DBLP:journals/jacm/ByrkaGRS13}, as well as in the work by~\cite{DBLP:conf/stoc/Byrka0A20}, the witness tree is chosen (mostly) randomly. Later an edge-deletion process over this tree is mapped to an edge-deletion process over the edges of an optimal solution. In contrast, we give a \emph{purely deterministic} way to construct the witness tree, and then map an edge-deletion process over this tree to a node-deletion process over the Steiner nodes of an optimal solution. The deterministic method of constructing the witness tree that we introduce here relies on computing some \emph{minimum weight paths} from the Steiner nodes in an optimal solution to terminals, according to some node-weights that take into account two factors: the number of nodes in the path, and the degree of each internal node in the path.

Moreover, our techniques can be refined and give a $(1.8\bar{3} + \varepsilon)$-approximation algorithm for what we call leaf-adjacent Block-TAP instances; these are Block-TAP instances where at least one endpoint of each link is a leaf. We note here that Nutov~\cite{DBLP:journals/corr/abs-2009-13257} recently gave a $1.6\bar{6}$-approximation for leaf-to-leaf Block-TAP instances; there are instances where both endpoints of each link are leaves. Thus, our $(1.8\bar{3} + \varepsilon)$-approximation algorithm deals with a strictly larger set of instances compared to~\cite{DBLP:journals/corr/abs-2009-13257}, albeit with a worse approximation factor.

\begin{theorem}\label{thm:approx-leaf-adjacent-block-tap}
For any fixed $\varepsilon >0$, there exists a $(1.8\bar{3} + \varepsilon)$-approximation algorithm for leaf-adjacent Block-TAP.
\end{theorem}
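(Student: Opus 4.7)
The plan is to specialize both the reduction to CA-Node-Steiner-Tree and the iterative randomized rounding analysis of Theorem~\ref{thm:approx-1node-tap} to leaf-adjacent Block-TAP, exploiting the extra structural constraint that every link has at least one leaf endpoint to improve the per-Steiner-node contribution bound. First I would carry out the Nutov-style reduction and observe that, when the input instance is leaf-adjacent, every Steiner node of the constructed CA-Node-Steiner-Tree instance has a particularly simple neighborhood of terminals: the terminals it touches all correspond to tree edges incident to at least one leaf of the input tree. This restricts the shape of the paths used inside the deterministic witness-tree construction from the proof of Theorem~\ref{thm:approx-1node-tap}: the minimum weight paths from optimal Steiner nodes to terminals meet terminals at ``leaf-type'' positions, which gives a uniformly lower degree-weighted path cost.

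Second, I would feed this restricted family of instances to exactly the same iterative randomized rounding procedure used for the main theorem: solve an $(1+\varepsilon)$-approximate LP relaxation of the Node Steiner Tree instance, sample a Steiner component with probability proportional to its LP value, contract, and iterate. The analysis of Theorem~\ref{thm:approx-1node-tap} bounds the expected cost by an integral $\int_0^1 \E[\text{cost at time } t]\, dt$, whose value is in turn controlled by the rate at which each Steiner node of a fixed optimal solution becomes redundant when mapped through the witness tree. My task is to show that under leaf-adjacency this rate improves from what yields $1.892$ in the general case to what yields $11/6$ here.

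Third, to obtain the $11/6$ bound concretely, I would rerun the combinatorial case analysis over the witness tree, but now every Steiner node corresponding to a leaf-adjacent link uses a shorter weighted path with at most one high-degree internal node. Substituting the sharper path statistics into the per-node contribution inequality and integrating yields $11/6$, matching (not by accident, I suspect) Zelikovsky's classical ratio for unweighted Steiner Tree on instances of bounded component size. The additive $\varepsilon$ absorbs the loss from solving the LP relaxation approximately, as in~\cite{DBLP:journals/jacm/ByrkaGRS13}, and from the usual truncation to components of bounded size.

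The main obstacle is the third step: verifying that the degree-weighted minimum-path argument used to build the witness tree really does tighten to $11/6$ under the leaf-adjacency hypothesis, and not to some slightly worse constant. Concretely, one has to check that every worst-case configuration of optimal Steiner components that forced the $1.892$ bound for general 1-Node-CAP is genuinely excluded, or strictly improved upon, once each link is forced to be incident to a leaf, and that no new bad configurations arise from the modified sampling distribution after contraction. I expect the case analysis to boil down to a small number of local configurations around each Steiner node whose worst-case weight profile can be computed explicitly, yielding the claimed $11/6$ threshold up to the LP-relaxation loss $\varepsilon$.
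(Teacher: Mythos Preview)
Your high-level framework is right: reduce to CA-Node-Steiner-Tree, run the same iterative randomized rounding, and show that the witness-tree analysis gives $H(3)=11/6$ instead of $1.8917$. But you have misidentified \emph{where} the improvement comes from, and as a result your ``main obstacle'' (re-running the degree-weighted minimum-path case analysis with sharper constants) is both unnecessary and not obviously tractable as stated.

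The actual mechanism in the paper is much simpler and more structural. The reduction sends a leaf-adjacent Block-TAP instance to a CA-Node-Steiner-Tree instance in which \emph{every} Steiner node is adjacent to at least one terminal. One then argues (by a one-line edge-swap) that there is an optimal Steiner tree $T^*$ in which every Steiner node is adjacent to a terminal \emph{inside} $T^*$, and after splitting at non-leaf terminals, every terminal is a leaf. In the language of Section~\ref{sec:witness-tree}, this means every Steiner node of $T^*$ is a \emph{final} Steiner node. Consequently, when you decompose $T^*[S^*]$ into final-components, each component is a single edge. So Case~2 of Section~\ref{sec:main-analysis} (Algorithm~\ref{alg:computing_w}, Lemma~\ref{lem:invariant}) is never invoked; only Case~1 applies, and Lemma~\ref{lem:h3} with $\gamma'=H(3)$ already gives $\E[H(w)]\le H(3)$ for the resulting witness tree. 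Theorems~\ref{cor:approx-preserve-block} and~\ref{thm:main-approx} then finish the job.

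In short: you do not ``tighten'' the path analysis; you \emph{bypass} it entirely because the final-components degenerate to edges. Your proposal to re-examine the minimum-weight-path configurations would be working in the wrong place, and your allusion to Zelikovsky's $11/6$ is a coincidence of numerology rather than a mechanism. The missing step in your plan is the observation that leaf-adjacency forces every Steiner node to be final in some optimal tree; once you have that, the proof is immediate from Lemma~\ref{lem:h3}.
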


Interestingly, we can also provide some concrete limits on how much our techniques can be further pushed. We show that there exist leaf-to-leaf Block-TAP instances for which our choice of the witness tree is the \emph{best} possible, and yields a tight bound of $1.8\bar{3}$. This shows that any approximation bound strictly better than $1.8\bar{3}$ needs substantially different arguments.

%

\paragraph{Related work.} We clarify here that the $2$-approximation algorithms mentioned for TAP, CacAP, Block-TAP work even with \emph{weighted} instances, where each link comes with a non-negative cost and the goal is to minimize the overall cost of the selected links. In fact, the $2$-approximation algorithms for Block-TAP~\cite{DBLP:journals/siamcomp/FredericksonJ81,DBLP:journals/jal/KhullerT93} works more generally for (weighted) 1-Node-CAP. Differently, all other aforementioned algorithms only apply to the unweighted version of TAP, CacAP and Block-TAP (some of them extend to the weighted version with bounded integer weights). Traub and Zenklusen~\cite{DBLP:journals/corr/abs-2104-07114,DBLP:journals/corr/abs-2107-07403} very recently presented a $1.7$-approximation algorithm and a $(1.5 + \varepsilon)$-approximation algorithm for the general weighted version of TAP, thus breaking the long-standing barrier of $2$ for weighted TAP. For more details about the history of these problems, we refer the reader to~\cite{DBLP:journals/corr/abs-2012-00086,DBLP:conf/stoc/0001KZ18} and the references contained in them.

Regarding CacAP, last year Byrka et al.~\cite{DBLP:conf/stoc/Byrka0A20} managed to break the $2$-approximation barrier and obtained a $1.91$-approximation, thus resulting in the first algorithm with an approximation factor strictly smaller than $2$ for the general Edge-Connectivity Augmentation problem. As mentioned, to get this result, they exploited a reduction of Basavaraju et al.~\cite{DBLP:conf/icalp/BasavarajuFGMRS14} to the (unweighted) Edge Steiner Tree problem and utilized the machinery developed for that problem by Byrka et al.~\cite{DBLP:journals/jacm/ByrkaGRS13}. We recall here that the (unweighted) Edge Steiner Tree problem takes as input a graph $G$ and a subset of nodes (called terminals), and asks for a minimum size subtree of $G$ spanning the terminals. Specifically, Byrka et al.~\cite{DBLP:conf/stoc/Byrka0A20} tailored the analysis of the iterative randomized rounding algorithm for Steiner Tree in~\cite{DBLP:journals/jacm/ByrkaGRS13} to the specific Steiner Tree instances arising from the reduction. Nutov~\cite{DBLP:journals/corr/abs-2009-13257} then showed how one can use any algorithm for the Edge Steiner Tree problem in a black-box fashion in order to obtain approximation algorithms for CacAP, at the expense of a slightly worse approximation bound; in particular, he obtained a $1.942$-approximation by applying the algorithm of~\cite{DBLP:journals/jacm/ByrkaGRS13} as a black box.
Soon after, Cecchetto et al.~\cite{DBLP:journals/corr/abs-2012-00086} (relying on a completely different approach, more in line with the techniques used for TAP) gave a very nice unified algorithm that gives a state-of-the-art 1.393-approximation for CacAP, and thus, also for TAP and the general Edge-Connectivity Augmentation problem.

 Regarding Node-Connectivity Augmentation for $k > 1$, we refer the reader to~\cite{DBLP:journals/siamcomp/CheriyanV14,DBLP:conf/soda/Nutov20} and the references contained in them.

\paragraph{Organization of material.} The rest of this work is organized as follows. In Section 2, we start by formally defining CA-Node-Steiner-Tree instances, and state the connection to CacAP, Block-TAP, and 1-Node-CAP. We also explicitly (re)prove that CA-Node-Steiner-Tree instances admit a hypergraph LP-relaxation which is based on so-called $k$-restricted components. We then describe the iterative randomized rounding procedure for these node-based instances in Section 3. Some missing proofs from Sections 2 and 3 can be found in Appendix~\ref{appendix:proofs}. Section 4 contains our simpler witness tree analysis. Finally, Section~\ref{sec:leaf-adjacent-upper-bound} reports the improved approximation for leaf-adjacent Block-TAP instances, while Section~\ref{sec:lower-bound} contains lower bound constructions that demonstrate the limits of our techniques.

\section{From Connectivity Augmentation to Node Steiner Tree}

\subsection{Reduction to CA-Node-Steiner-Tree instances}


As a reminder, in the Node Steiner Tree problem, we are given a graph $G = (V, E)$ and a subset $R\subseteq V$ of nodes, called \emph{terminals}, and the goal is to compute a tree $T$ of $G$ that contains all the terminals and minimizes the number of non-terminal nodes (so-called \emph{Steiner nodes}) contained in $T$. We will refer to the number of Steiner nodes contained in a tree $T$ as the \emph{cost} of the solution, and indicate it with $cost(T)$.

In general, the Node Steiner Tree problem is as hard to approximate as the Set Cover problem, and it admits a $O(\log |R|)$ approximation algorithm (which holds even in the more general weighted version~\cite{KLEIN1995104}). However, the instances that arise from the reductions of~\cite{DBLP:conf/icalp/BasavarajuFGMRS14,DBLP:conf/stoc/Byrka0A20,nutov20202nodeconnectivity} have special properties that allow for a constant factor approximation. We now define the properties of these instances. We use the notation $N_G(u)$ to denote the set of nodes that are adjacent to a node $u$ in a graph $G$ ($u$ is not included in $N_G(u)$).

\begin{definition}[CA-Node-Steiner-Tree]
\label{def:ca-node-steiner}
Let $G=(V,E)$  be an instance of Node Steiner Tree, with $R\subseteq V$ being the set of terminals. The instance $G$ is a \textbf{CA-Node-Steiner-Tree} instance if the following hold:
\hfill
\begin{enumerate}
    \item For each terminal $v \in R$, we have $N_G(v) \cap R = \emptyset$.
    \item For each Steiner node $\ell \in V \setminus R$, we have $|N_G(\ell) \cap R| \leq 2$.
    \item For each terminal $v \in R$, the set $N_G(v)$ forms a clique in $G$.
\end{enumerate}
\end{definition}

As already mentioned, the starting point of this work is the reduction from CacAP and Block-TAP to CA-Node-Steiner-Tree. An overview of the reduction from CacAP to CA-Node-Steiner-Tree by Byrka et al.~\cite{DBLP:conf/stoc/Byrka0A20} is found in Appendix~\ref{appendix:approx-preserve}. In Appendix~\ref{appendix:approx-preserve-block} we extend the reduction from Block-TAP to CA-Node-Steiner-Tree by Nutov~\cite{nutov20202nodeconnectivity}, to deal with 1-Node-CAP instances. The following two theorems make these connections formal.

\begin{theorem}[\cite{DBLP:conf/stoc/Byrka0A20}]
\label{cor:approx-preserve}
The existence of an $\alpha$-approximation algorithm for CA-Node-Steiner-Tree implies the existence of an $\alpha$-approximation algorithm for CacAP.
\end{theorem}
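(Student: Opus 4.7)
The plan is to describe the reduction from CacAP to CA-Node-Steiner-Tree due to Byrka, Grandoni, and Jabal Ameli~\cite{DBLP:conf/stoc/Byrka0A20}, adapted to the node-based view of this paper, and verify it is approximation-preserving. Given a CacAP instance $(G,L)$ with cactus $G$, I would construct in polynomial time a CA-Node-Steiner-Tree instance $(G',R)$ together with polynomial-time mappings between feasible solutions such that (a) $\OPT(G',R) \le \OPT(G,L)$ and (b) any feasible Steiner tree $T$ in $G'$ can be converted into a feasible CacAP solution $L_T \subseteq L$ with $|L_T| \le cost(T)$. Composing (a) and (b), an $\alpha$-approximate Steiner tree $T^{*}$ for $(G',R)$ yields $|L_{T^{*}}| \le cost(T^{*}) \le \alpha\,\OPT(G',R) \le \alpha\,\OPT(G,L)$, which is the desired $\alpha$-approximation for CacAP.

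For the construction, I would place a terminal in $R$ for each ``cut-feature'' of the cactus---in the simplest implementation, one terminal per cycle of $G$. Each link $\ell \in L$ is then represented in $G'$ by a Steiner-node gadget that encodes the sequence of cycles $\ell$ covers, arranged so that each internal Steiner node is adjacent to at most two terminals (Property~2 of Definition~\ref{def:ca-node-steiner}): concretely, if $\ell$ traverses cycles $C_{i_1},\dots,C_{i_t}$ in the cactus, I would attach a path of Steiner nodes $s^{\ell}_1,\dots,s^{\ell}_{t-1}$ with $s^{\ell}_j$ adjacent exactly to the terminals of $C_{i_j}$ and $C_{i_{j+1}}$. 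Turning the Steiner-neighborhood of each terminal into a clique then enforces Property~3, while Property~1 holds by construction. The crucial structural fact driving feasibility is that every 2-edge-cut of a cactus is a pair of edges in a single cycle, so a subset $L' \subseteq L$ makes $G+L'$ 3-edge-connected iff every cycle of $G$ is ``crossed'' by some link of $L'$, which in $G'$ is equivalent to the gadgets of $L'$ jointly spanning every terminal.

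To establish (a), I would take an optimal CacAP solution $L^{*}$, take the union of the corresponding gadgets in $G'$, and prune to a spanning Steiner tree; by the structural observation above, this subgraph reaches every terminal, and the cost will be bounded by $|L^{*}|$ after the prune. To establish (b), I would decompose a feasible Steiner tree $T$ into the maximal sub-pieces that live inside a single gadget, read off the underlying link for each such piece, and output $L_T$; every terminal spanned by $T$ is crossed by one of these links, so $L_T$ is CacAP-feasible. The main obstacle is the precise cost bookkeeping in these two charging arguments, since a single link's gadget may contribute several Steiner nodes to $T$ (one per additional cycle it covers). The reduction is designed so that these ``extra'' Steiner nodes are exactly what is needed to extend the Steiner tree to additional terminals, so that they are absorbed by the spanning structure and no multiplicative slack creeps into the two inequalities. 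This bookkeeping leverages the tree-of-cycles structure of the cactus and is deferred to Appendix~\ref{appendix:approx-preserve}, as done in the paper itself.
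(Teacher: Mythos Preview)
Your reduction is not the one used in the paper, and as written it does not work. The central issue is the feasibility characterization you rely on. You claim that ``a subset $L' \subseteq L$ makes $G+L'$ $3$-edge-connected iff every cycle of $G$ is `crossed' by some link of $L'$,'' and you implement this by placing one terminal per cycle. But this is far too weak: a single cycle on $n$ nodes already has $\binom{n}{2}$ two-edge-cuts, and to kill all of them the projections of the chosen links onto that cycle must collectively separate every pair of edges. One link ``crossing'' the cycle does not suffice, so spanning your terminal set (one terminal per cycle) is not equivalent to CacAP feasibility. Consequently neither direction (a) nor (b) of your plan goes through.

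The paper's reduction (following~\cite{DBLP:conf/icalp/BasavarajuFGMRS14,DBLP:conf/stoc/Byrka0A20}) is structurally different in two ways. First, the terminals are the \emph{degree-$2$ nodes} of the cactus, not the cycles. Second, each link $\ell\in L$ becomes a \emph{single} Steiner node (not a path gadget), with an edge to each degree-$2$ endpoint of $\ell$ and an edge to every other link $\ell'$ that \emph{crosses} $\ell$ (two links cross if some pair of their cycle-projections cross on a common cycle). The key lemma is then that $L'\subseteq L$ is CacAP-feasible iff $G_{ST}[R\cup L']$ is connected, and since each link contributes exactly one Steiner node, the cost correspondence $|L'|\leftrightarrow cost(T)$ is exact with no pruning or bookkeeping needed. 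Your path-gadget construction, by contrast, charges a link that traverses $t$ cycles a cost of $t-1$, which breaks the inequality $\OPT(G',R)\le\OPT(G,L)$ even before the feasibility issue.
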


\begin{theorem}[Extending~\cite{nutov20202nodeconnectivity}]
\label{cor:approx-preserve-block}
The existence of an $\alpha$-approximation algorithm for CA-Node-Steiner-Tree implies the existence of an $\alpha$-approximation algorithm for 1-Node-CAP (and hence Block-TAP).
\end{theorem}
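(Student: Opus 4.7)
The plan is to directly construct a CA-Node-Steiner-Tree instance from a given 1-Node-CAP instance $(G,L)$ without first passing through a Block-TAP instance, since (as the footnote highlights) the ``natural'' collapse to Block-TAP is problematic in the unweighted regime. The construction starts from the block-cut tree $T_{BC}$ of $G$, whose nodes are the maximal 2-node-connected blocks of $G$ together with the cut vertices of $G$, and whose edges join each cut vertex to the blocks containing it. The key structural observation, used throughout the node-connectivity literature, is that a set $L'\subseteq L$ makes $G+L'$ 2-node-connected if and only if every edge of $T_{BC}$ is covered by some link of $L'$, where a link $\{u,v\}$ covers an edge $e$ of $T_{BC}$ precisely when $u$ and $v$ lie on opposite sides of $e$ in $T_{BC}$. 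Thus 1-Node-CAP is already, combinatorially, a tree-edge covering problem on $T_{BC}$, which is exactly the form to which Nutov's reduction for Block-TAP can be applied.

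Following that template, I would introduce one terminal for each pendant block (leaf of $T_{BC}$) and, for each link $\ell=\{u,v\}\in L$, one Steiner node $s_\ell$ adjacent to the (at most two) terminals whose leaf-blocks are the endpoints of the $u$--$v$ path in $T_{BC}$. Property~1 of Definition~\ref{def:ca-node-steiner} is immediate since terminals are pairwise non-adjacent by construction, Property~2 holds because each Steiner node $s_\ell$ is adjacent to at most the two endpoint leaf-blocks of the path, and Property~3 is enforced by adding, for each terminal $v$, all edges between Steiner nodes that share $v$ as a neighbor; these extra edges do not change the Steiner-tree cost because they introduce no new nodes. The cost-preservation is then proved by a two-way translation. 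In one direction, a feasible augmentation $L'$ yields the Steiner-node set $\{s_\ell:\ell\in L'\}$ of the same cardinality which, together with the terminals, induces a connected subgraph because the paths of the selected links jointly cover every edge of $T_{BC}$ and hence reach every leaf-block. In the other direction, a feasible Steiner tree of cost $\alpha\cdot\OPT$ decodes directly into a link set $L'\subseteq L$ of the same cardinality that covers every edge of $T_{BC}$, i.e., a feasible 1-Node-CAP solution.

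The main obstacle I expect is the subtle cardinality bookkeeping in the unweighted setting: a priori, distinct links of $L$ may produce Steiner nodes with identical neighborhoods, and a feasible Steiner tree could in principle use a Steiner node whose decoding back to a single link of $L$ is ambiguous. The reason the reduction still works here, while a direct reduction to Block-TAP may fail, is precisely that we retain one Steiner node per link of $L$ rather than identifying links with coincident block-cut-tree paths; this is also what makes the constructed Steiner Tree instance ``CA-Node-Steiner-Tree'' rather than an arbitrary Node Steiner Tree instance. Once this one-to-one correspondence is in place, an $\alpha$-approximate Node Steiner tree gives a feasible 1-Node-CAP solution of cost at most $\alpha\cdot\OPT$, and the specialization to the case where $G$ is already a tree recovers Nutov's result for Block-TAP as stated in~\cite{nutov20202nodeconnectivity}, establishing the theorem.
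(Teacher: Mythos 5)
There is a genuine gap, and it is at the foundation of your construction. Your ``key structural observation'' --- that $G+L'$ is $2$-node-connected if and only if every edge of the block-cut tree $T_{BC}$ is covered by some link of $L'$ --- is the characterization of $2$-\emph{edge}-connectivity (i.e., of TAP feasibility), not of $2$-\emph{node}-connectivity. For node-connectivity one needs that for every cut vertex $c$, the components of $(T_{BC}\cup f_G(L'))\setminus\{c\}$ are mutually connected by links, which is strictly stronger than each component having some link leaving it. Concretely, take $T_{BC}$ to be a star with center $c$ and leaves $a,b,d,e$, with $L'=\{\{a,b\},\{d,e\}\}$: every tree edge is covered, yet deleting $c$ leaves the two components $\{a,b\}$ and $\{d,e\}$, so the augmented graph is not $2$-node-connected. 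Since your entire two-way cost translation rests on this equivalence, the reduction as written does not establish feasibility in either direction. This is precisely why the paper does not use edge covering: it proves (Lemma~\ref{lemma:reduction-to-block-cut}) that $G\cup L'$ is $2$-node-connected iff $T_{BC}\cup f_G(L')\cup L_0$ is, where $L_0$ is an extra set of \emph{weight-zero} links joining every pair of cut vertices inside a common block, and then characterizes $2$-node-connectivity of the augmented tree via reachability in an incidence graph (Lemma~\ref{lemma:connected-block}), proved by induction on path length.

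A second, related problem is that your Steiner instance is underspecified in exactly the place where the work happens. You give each Steiner node $s_\ell$ edges only to terminals (leaf-blocks at the ends of the path of $\ell$ --- note these endpoints need not be leaves at all, in which case $s_\ell$ is isolated from the terminals), and the only Steiner--Steiner edges you add are those forced by Property~3 of Definition~\ref{def:ca-node-steiner}. With no adjacency between links whose tree paths merely share internal edges or interact through a block, a connected subgraph of your instance does not certify anything about $G+L'$, and conversely a feasible $L'$ need not induce a connected subgraph. The paper's construction obtains these adjacencies from the $(L,E_T)$-incidence graph (a link is adjacent to every tree edge on its path), then short-cuts the tree-edge nodes and contracts away the weight-zero links of $L_0$ so that the final instance is unweighted and satisfies Definition~\ref{def:ca-node-steiner}. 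Your instinct to keep one Steiner node per original link (to avoid the unweighted-collapse issue flagged in the footnote) is the right one, but the correctness argument needs the node-connectivity characterization and the incidence-graph adjacencies, not edge covering.
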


The above two theorems imply that from now on we can therefore concentrate on the CA-Node-Steiner-Tree problem.

\subsection{An approximate relaxation for CA-Node-Steiner-Tree}
\label{subsection:k-restricted}
From now on we focus on the CA-Node-Steiner-Tree problem. A crucial property that we will use is that, similar to the Edge Steiner Tree problem, one can show that the \emph{k-restricted} version of the problem provides a $(1+\varepsilon)$-approximate solution to the original problem, for an arbitrarily small $\varepsilon >0$. 

To explain this in more detail, we first recall that any Steiner tree can be seen as the union of components, where a component is a subtree whose leaves are all terminals, and whose internal nodes are all Steiner nodes. We note that two components of such a union are allowed to share nodes and edges. A component is called $k$-restricted if it has at most $k$ terminals. A $k$-restricted Steiner tree is a collection of $k$-restricted components whose union gives a feasible Steiner tree. The key theorem of this section is the following.

\begin{theorem}\label{thm:k-restricted-decomp}
Consider a CA-Node-Steiner-Tree instance and let $\OPT$ be its optimal value. For any integer $m \geq 1$, there exists a $k$-restricted Steiner tree $Q(k)$, for $k = 2^m$, whose cost satisfies
\begin{equation*}
    cost(Q(k)) \leq \left(1 + \frac{4}{\log k}\right) \OPT.
\end{equation*}
\end{theorem}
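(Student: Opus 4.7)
The plan is to adapt the classical Borchers--Du style $k$-restricted decomposition theorem (originally formulated for edge-weighted Steiner Tree) to the node-weighted setting we have here. The starting point is an optimal Steiner tree $T^*$ with $\OPT$ Steiner nodes, which I would decompose canonically into \emph{full components}, i.e., maximal subtrees whose leaves are all terminals and whose internal nodes are all Steiner nodes. Cutting $T^*$ at every terminal of degree at least two produces such a decomposition, and distinct full components share only terminals, never Steiner nodes. In particular, the total number of Steiner nodes, summed over full components, is exactly $\OPT$, so if every full component already contains at most $k$ terminals we are done with no loss at all.

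Next, I would process each full component $F$ with more than $k$ terminals independently and replace it by a union of $k$-restricted components via a recursive split. The standard tool is a \emph{terminal-centroid} of $F$: a node $v$ such that every connected component of $F-v$ contains at most roughly half the terminals of $F$; such a $v$ always exists by a classical centroid argument applied to the Steiner topology weighted by terminal leaves. If $v$ is a terminal we split $F$ at $v$ and reattach $v$ to each resulting piece at no cost, since terminals are allowed to be shared among components. If $v$ is a Steiner node, we either split at the nearest terminal on each branch (duplicating the Steiner nodes on a short detour) or keep $v$ in every sub-piece, incurring one extra duplicated Steiner node per split. Recursing until every piece has at most $k$ terminals yields a $k$-restricted Steiner tree $Q(k)$.

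The main obstacle, and the source of the $4/\log k$ overhead (rather than the cleaner $1/\log k$ of Borchers--Du for edges), is the accounting. Each recursive split may duplicate one or more Steiner nodes, and I must argue that the total number of duplications across the whole process is at most $(4/\log k)\cdot\OPT$. I would charge each duplicated Steiner node to its original copy in $T^*$ and observe that, because the centroid split always halves the terminal count of a piece, any Steiner node $s\in T^*$ can be duplicated in at most $O(\log(t_F/k))$ levels of recursion, where $t_F$ is the number of terminals in the full component of $T^*$ containing $s$. Summing over the $m=\log_2 k$ relevant levels and over all Steiner nodes, and absorbing the constant-factor loss from the imperfect balance of the centroid and from splitting at a Steiner node rather than a terminal, gives the claimed bound
\[
cost(Q(k)) \le \left(1 + \frac{4}{\log k}\right)\OPT.
\]

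A safer fallback, should the direct node-based accounting above prove delicate, is a black-box reduction: convert the CA-Node-Steiner-Tree instance into an equivalent edge-Steiner-Tree instance by subdividing and assigning appropriate weights so that the cost of a tree equals the number of Steiner nodes it uses (up to an additive constant per component), invoke the classical edge Borchers--Du theorem on this transformation, and translate the resulting $k$-restricted tree back. The factor $4$ in the bound then comfortably absorbs the discrepancy between the edge- and node-weighted cost functions introduced by the conversion, in particular the fact that a single node duplication may correspond to duplicating several edges in the transformed instance.
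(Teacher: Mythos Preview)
Your plan diverges substantially from the paper's proof, which follows the original Borchers--Du \emph{level-labeling} argument rather than centroid recursion: it binarizes the optimal tree (expanding each high-degree Steiner node into a small binary gadget of zero-weight copies), labels the Steiner layers cyclically mod $m$, builds one $k$-restricted tree $Q_j$ per label, and shows by summing over $j$ that $\sum_{j=0}^{m-1}\mathrm{cost}(Q_j)\le (m+4)\OPT$, so some $Q_j$ meets the bound.

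More importantly, there is a genuine gap. The theorem is stated for CA-Node-Steiner-Tree instances, and the paper explicitly notes that it is \emph{false} for general Node Steiner Tree. Your argument never uses any of the three CA properties. The one that matters for the factor $4$ is that each Steiner node is adjacent to at most two terminals, which yields $\OPT\ge t/2$ where $t$ is the number of terminals. This inequality is precisely what converts a duplication count of order $t$ into a bound of order $\OPT$. Without it a single Steiner node adjacent to $t$ terminals already has $\OPT=1$, yet any $k$-restricted cover must contain $\Omega(t/k)$ copies of that node, so no bound of the form $(1+c/\log k)\OPT$ can hold. Since both your centroid accounting and your black-box edge-reduction fallback are phrased for arbitrary Node Steiner Tree, neither can be correct as written.

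A secondary issue: splitting at a Steiner centroid of degree $d$ produces $d$ sub-pieces, each needing its own copy of the centroid (or a detour through further Steiner nodes), so a single split can cost $d-1$ duplications, not one. Your claim that every Steiner node is duplicated $O(\log(t_F/k))$ times therefore needs a bound on the total degree of all chosen centroids, which you do not provide. In the paper this is exactly where the CA structure enters: after binarization, the degree-sum identity $\sum_{d(v)\ge 3}(d(v)-2)\le t$ combined with $\OPT\ge t/2$ yields the precise additive $4\,\OPT$. Your ``absorb the constant'' step hides the very place where the instance-specific hypothesis is required.
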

We stress here that $cost(Q(k))$ is equal to the number of Steiner nodes in $Q(k)$ counted with multiplicities; an example demonstrating this is given in Figure~\ref{fig:k-restricted-example}.

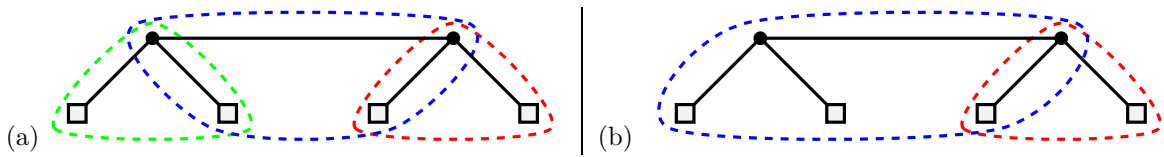
\begin{figure}[ht]
\centering
\begin{tabular}{c|c}
     (a)\begin{tikzpicture}
            
        
        \draw[very thick,color=green,dashed]
            plot [smooth cycle] coordinates {(0,0.2) (-1.3,-1.2) (1.3,-1.2)};
        \draw[very thick,color=red,dashed  ]            
            plot [smooth cycle] coordinates {(4,0.2) (2.7,-1.2) (5.3,-1.2)};
        \draw[very thick,color=blue,dashed]            
            plot [smooth cycle] coordinates {(-0.2,0.2) (4.2,0.2) (3.2,-1.2) (0.8,-1.2)};
            
        \filldraw[very thick]
            (0,0) circle(2pt) -- (4,0) circle(2pt)
            
            (0,0) -- (-1,-1) node[draw=black, fill=black!5]{}
            (0,0) -- (1,-1) node[draw=black, fill=black!5]{}
            
            (4,0) -- (3,-1) node[draw=black, fill=black!5]{}
            (4,0) -- (5,-1) node[draw=black, fill=black!5]{};

      \end{tikzpicture}
      &  (b)\begin{tikzpicture}
            
        
        \draw[very thick,color=red,dashed]
            plot [smooth cycle] coordinates {(4,0.2) (2.7,-1.2) (5.3,-1.2)};
            
       \draw[very thick,color=blue,dashed]
            plot [smooth cycle] coordinates {(-0.2,0.2) (4.2,0.2) (3.2,-1.2) (-1.2,-1.2)};
            
        \filldraw[very thick]
            (0,0) circle(2pt) -- (4,0) circle(2pt)
            
            (0,0) -- (-1,-1) node[draw=black, fill=black!5]{}
            (0,0) -- (1,-1) node[draw=black, fill=black!5]{}
            
            (4,0) -- (3,-1) node[draw=black, fill=black!5]{}
            (4,0) -- (5,-1) node[draw=black, fill=black!5]{};

      \end{tikzpicture}
\end{tabular}
      
        \endpgfgraphicnamed
        
        \caption{(a) An example of a 2-restricted CA-Node-Steiner tree, of cost $4$. (b) An example of a 3-restricted CA-Node-Steiner tree, of cost $3$. In both images, the square nodes represent the terminals and the components are the nodes grouped together by the dashed lines.}
        
\label{fig:k-restricted-example}
\end{figure}

In proving the above theorem, we crucially use all the properties stated in Definition~\ref{def:ca-node-steiner}, as the statement is not true for general Node Steiner Tree instances. The proof of the above theorem is inspired by the result of Borchers and Du~\cite{DBLP:journals/siamcomp/BorchersD97} regarding the Edge Steiner Tree problem, and can be found in Appendix~\ref{appendix:k-restricted-decomp}.

The above theorem shows that CA-Node-Steiner-Tree can be approximated using $k$-restricted Steiner trees, with a small loss in the objective value. Based on this, we now present a linear programming relaxation for the $k$-restricted Node Steiner Tree problem, which we call the Directed Components Relaxation (or, in short, DCR), as it mimics the DCR relaxation of Byrka et al.~\cite{DBLP:journals/jacm/ByrkaGRS13} for the Edge Steiner Tree problem.

Let $G = (V, E)$ be a Node Steiner Tree instance, where $R \subseteq V$ is the set of terminals. Let $k > 0$ be an integer parameter. For each subset $R' \subseteq R$ of terminals of size at most $k$, and for every $c' \in R'$, we define a directed component $C'$ as the minimum Node Steiner tree on terminals $R'$ with edges directed towards $c'$. More precisely, let $S(R') \subseteq V \setminus R$ be a minimum-cardinality set of Steiner nodes such that $G[R' \cup S(R')]$ is connected. We define $C'$ by taking a spanning tree in $G[R' \cup S(R')]$ and directing the edges towards $c'$, and we let $cost(C') = |S(R')|$. We call $c'$ the \emph{sink} of the component and all other terminals in $R' \setminus \{c'\}$ the \emph{sources} of the component.
Let $\mathbf{C}$ be the set of all such directed components $C'$, with all possible sinks $c'\in C'$. Note that $\mathbf{C}$ has an element for each subset of terminals $R' \subseteq R$, $|R'| \leq k$, and each terminal $c' \in R'$. We introduce one variable $x(C')$ for each $C' \in \mathbf{C}$. Finally, we choose an arbitrary terminal $r \in R$ as the root. The $k$-DCR relaxation is the following.
\begin{align*}
\tag{$k$-DCR}
    \min: &\quad\sum_{C' \in \mathbf{C}} cost(C') \cdot x(C')\\
    \textrm{s.t.:}&\quad \sum_{\substack{C' \in \mathbf{C}:  \textrm{ sink}(C') \notin U \\\textrm{ and sources}(C') \cap U \neq \emptyset } } x(C') \geq 1, \quad\quad \forall U \subseteq R \setminus \{r\}, U \neq \emptyset,\\
                &\hspace{93.5pt} x(C') \geq 0, \quad\quad\forall C' \in \mathbf{C}.
\end{align*}

Let $\OPT_{\LP}(k)$ be the optimal value of the above relaxation for $k \in \mathbb{N}_{>0}$, 
and $\OPT$ be the value of an optimal (integral) solution to our original (unrestricted) CA-Node-Steiner-Tree instance. The next theorem states that the $k$-DCR LP can be solved in polynomial time for any fixed $k$ and its optimal solution yields a $(1+\varepsilon)$-approximation of $\OPT$. Its proof can be found in Appendix~\ref{appendix:proof-eps_approx}.
\begin{theorem}
\label{thm:eps_approx}
For any fixed $\varepsilon > 0$, there exists a $k = k(\varepsilon) > 0$ such that $\OPT_{\LP}(k) \leq (1+\varepsilon) \OPT$, and moreover, an optimal solution to the $k$-DCR LP can be computed in polynomial time.
\end{theorem}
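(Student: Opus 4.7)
}
The plan is to prove the approximation bound and the polynomial-time solvability separately. For the bound $\OPT_{\LP}(k) \leq (1+\varepsilon)\OPT$, I would first invoke Theorem~\ref{thm:k-restricted-decomp} with $k = 2^m$ where $m = \lceil 4/\varepsilon \rceil$, so that $4/\log k \leq \varepsilon$. This yields an integral $k$-restricted Steiner tree $Q(k)$ with $cost(Q(k)) \leq (1+\varepsilon)\OPT$. The task then reduces to exhibiting a feasible $k$-DCR solution $\bar x$ whose objective does not exceed $cost(Q(k))$.

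The construction I have in mind is the standard ``orient each component towards its closest-to-root terminal'' recipe: root the spanning subgraph $Q(k)$ at the LP-root $r$, and for each component $C$ of the decomposition let $c_C$ be the terminal of $C$ that is closest to $r$ in $Q(k)$. Orient $C$ towards $c_C$, obtaining a directed component $\vec C \in \mathbf{C}$, and set $\bar x(\vec C) = 1$. The cost of $\bar x$ equals $\sum_C cost(\vec C) \leq cost(Q(k))$, since $cost(\vec C)$ is defined as the minimum number of Steiner nodes needed to connect the terminal set of $C$. For feasibility, fix a non-empty $U \subseteq R \setminus \{r\}$, let $v \in U$ be the terminal of $U$ closest to $r$ in $Q(k)$, and let $v'$ be the first terminal encountered when walking from $v$ up towards $r$ in $Q(k)$; by the choice of $v$ we have $v' \notin U$. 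All internal Steiner nodes on the $v$-to-$v'$ sub-path lie inside a common decomposition component $C^*$, so both $v$ and $v'$ are terminals of $C^*$. Then $c_{C^*}$, being the root-most terminal of $C^*$, lies at or above $v'$, hence $c_{C^*} \notin U$; and $v \in U$ is a source of $\vec{C^*}$, so $\vec{C^*}$ covers the cut constraint for $U$.

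For the polynomial-time claim, for fixed $k$ the number of candidate terminal subsets of size at most $k$ is $O(|R|^k)$, and each produces at most $k$ directed components, so $|\mathbf{C}| = \mathrm{poly}(|V|)$. Computing $cost(C')$ for a single $C' \in \mathbf{C}$ is a Node Steiner Tree instance with at most $k$ terminals, solvable in time polynomial in $|V|$ by a Dreyfus--Wagner-style dynamic program (exponential only in $k$). The LP itself has polynomially many variables but exponentially many cut constraints, so I would run the ellipsoid method with a max-flow separation oracle: given a candidate $\bar x$, construct for each terminal $t \neq r$ a directed auxiliary graph in which each $\vec C$ with $\bar x(\vec C) > 0$ is replaced by arcs from its sources to its sink with capacity $\bar x(\vec C)$, and test whether the minimum $t$-to-$r$ cut is at least $1$; a violated $t$-cut yields a violated LP inequality.

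The step I expect to need the most care is the cut-covering argument, specifically the claim that the internal Steiner nodes along any terminal-to-terminal sub-path of $Q(k)$ all belong to a single decomposition component. This is where the ``counted with multiplicities'' phrasing of Theorem~\ref{thm:k-restricted-decomp} becomes relevant: if Steiner nodes may be shared across components, I would first canonicalize the decomposition by assigning each Steiner node to a unique component and deleting the duplicates, which only decreases the cost. After canonicalization, two adjacent Steiner nodes on the path cannot sit in different components (each component is a subtree of $G$ whose only terminal nodes are leaves), so the entire sub-path lies in one component $C^*$ and the rest of the argument goes through verbatim.
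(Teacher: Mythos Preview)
Your proposal is correct and follows essentially the same route as the paper: both pick $k = 2^{\lceil 4/\varepsilon \rceil}$, invoke Theorem~\ref{thm:k-restricted-decomp}, orient the resulting $k$-restricted components toward their root-most terminals, and solve the LP via a polynomial number of components plus a flow-based separation oracle. The only cosmetic difference is that the paper factors the bound through the intermediate inequality $\OPT_{\LP}(|R|) \leq \OPT$ and then decomposes each component of an $|R|$-DCR solution, whereas you build an integral $k$-DCR solution directly from $Q(k)$; in doing so your explicit cut-covering argument (including the canonicalization step) actually fills in the feasibility claim that the paper simply asserts.
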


Given the above theorem, all that remains to show is the following: given an optimal fractional solution to the $k$-DCR LP, design a rounding procedure that returns an integral CA-Node-Steiner-Tree solution whose cost is at most $\gamma$ times larger than the cost of an optimal fractional solution, for as small $\gamma \geq 1$ as possible. Following the chain of reductions discussed in this section, such a rounding scheme would immediately imply a $(\gamma+\varepsilon)$-approximation algorithm for 1-Node-CAP and CacAP, given that $\gamma$ is constant. In the following sections, we present such a scheme that gives $\gamma \leq 1.8917$.

\section{The iterative randomized rounding algorithm}

Our rounding scheme for the $k$-DCR relaxation consists in applying the iterative randomized rounding technique of Byrka et al.~\cite{DBLP:journals/jacm/ByrkaGRS13}, first applied to the Edge Steiner Tree problem, and is described in Algorithm~\ref{alg:rounding-scheme}.

\begin{algorithm}
\For{$i = 1, 2, 3,\dots$}
    {
    Compute an optimal solution $x^i$ to the $k$-DCR LP w.r.t.~the current instance;\\[3pt]
    Sample a component $C^i$ from the set $\mathbf{C}^i$ of current components, where $C^i =C$ with probability $x^i_C/\sum_{C'\in\mathbf{C}^i}x^i_{C'}$, and contract $C^i$ into its sink node;\\[3pt]    
    If a single terminal remains, return the sampled components $\bigcup^i_{j=1} C^j$;
    }
\caption{The iterative randomized rounding scheme for the $k$-DCR LP}
\label{alg:rounding-scheme}
\end{algorithm}

We will now analyze the cost of the output solution. Throughout this section, we follow the analysis of \cite{DBLP:journals/jacm/ByrkaGRS13} and slightly modify it wherever needed.

Let $G = (V, E)$ be a given CA-Node-Steiner-Tree instance, where $V = R \cup S$; $R$ is the set of terminals and $S$ is the set of Steiner nodes. Let $\varepsilon > 0$ be a fixed small constant and $k = k(\varepsilon)$ as in the proof of Theorem~\ref{thm:eps_approx}. For the sake of the analysis, we can assume that $\sum_{C'\in\mathbf{C}^i}x^i_{C'}$ is the same for all iterations $i$, as in \cite{DBLP:journals/jacm/ByrkaGRS13}\footnote{Roughly speaking, in order to ensure that the sum is the same during all iterations, we can slightly modify the algorithm and add a dummy component of zero cost whose $x$-value is set in such a way so that the summation is equal to some $M > 0$. A coupling argument shows that all expected properties of the output are the same for both versions of the algorithm. See [page 12, \cite{DBLP:journals/jacm/ByrkaGRS13}] for more details.}. So let $M =\sum_{C'\in\mathbf{C}^i}x^i_{C'}$.

Let $T = (R \cup S^*, E^*)$, $S^* \subseteq S$ and $E^* \subseteq E$, be an optimal solution to the given CA-Node-Steiner-Tree instance $G$ of cost $cost(T) = |S^*|$. We will analyze the cost of the output of Algorithm~\ref{alg:rounding-scheme} with respect to $T$. For that, we define a sequence of subgraphs of $T$, one for each iteration of the algorithm, in the following way: if, during the $i^\textrm{th}$ iteration, we sample $C^i$, then we delete a subset of nodes of $T$ and what remains is the subgraph $T^i$ defined for that iteration. What must be specified is which nodes of $T$ are deleted at each iteration, which will be explained shortly. Let $T=T^0  \sqsupseteq  T^1  \sqsupseteq  T^2  \sqsupseteq  ...$ be sequence of subgraphs remaining after each iteration, where the notation $T^{i+1} \sqsubseteq T^i$ means that $T^{i+1}$ is a (not necessarily strict) subgraph of $T^i$. We show that there exists some universal constant $\gamma \in \mathbb{R}_{\geq 1}$ and a choice of optimal solution $T$ such that the following two properties are satisfied:
\begin{enumerate}
    \item[(a)] $T^i$ plus the components sampled until iteration $i$ form a connected subgraph spanning the terminals.
    \item[(b)] On average, a Steiner node in $T$ is deleted after $M \gamma $ iterations.
\end{enumerate}

Similar to \cite{DBLP:journals/jacm/ByrkaGRS13}, we can show that conditions (a) and (b) yield that the iterative randomized rounding algorithm is in fact a $(\gamma + \varepsilon)$-approximation algorithm, for any fixed $\varepsilon > 0$. This is achieved by relying on the construction of a \emph{witness tree} $W$, which is a particular kind of spanning tree on the set of terminals. However, the main differences with respect to \cite{DBLP:journals/jacm/ByrkaGRS13} are that (i) we delete \emph{nodes} of $T$ instead of edges, and (ii) we have a \emph{purely deterministic} way to construct the witness tree, and thus we need an explicit averaging argument for (b). 

We now discuss the details of our deletion process. As mentioned,  given $T$ we construct a witness tree $W = (R, E_W)$ that spans the set of terminals. For each component $C$, let $\mathcal B_W(C)$ be the family of maximal edge sets $B \subseteq E_W$ such that $(W\setminus B) \cup C$ forms a connected subgraph spanning the terminals. In each iteration $i$, we ``mark'' a subset of edges of $W$ that correspond to a randomly chosen set in $\mathcal B_W(C^i)$. For a positive integer $t$, let $H(t) \coloneqq \sum_{j=1}^t \frac{1}{j}$ be the $t^\textrm{th}$ harmonic number. The following lemma is proved in~\cite{DBLP:journals/jacm/ByrkaGRS13} (more precisely, see Lemmas 19 and 20 in~\cite{DBLP:journals/jacm/ByrkaGRS13}).

\begin{lemma}[\cite{DBLP:journals/jacm/ByrkaGRS13}]
\label{lem:probability_distr}
For each component $C$, there exists a probability distribution over $\mathcal B_W(C)$ such that the following holds: 
for any $\widetilde W \subseteq E_W$, the expected number of iterations until all edges of $\widetilde W$ are marked is bounded by $H(|\widetilde W|) \cdot M$, where the expectation is over the random choices of the algorithm and the distributions over $\{\mathcal B_W(C)\}_C$.
\end{lemma}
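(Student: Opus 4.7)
The plan follows the template introduced by BGRS, adapted to our node-based setting. I would proceed in three stages: first construct a probability distribution $\mathcal D_C$ over $\mathcal B_W(C)$ for each component $C$; second combine this distribution with the $k$-DCR LP feasibility to obtain a per-iteration drift bound on the number of unmarked edges of $\widetilde W$; and third apply a potential-function argument to convert this drift into the claimed harmonic bound.

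For the distribution $\mathcal D_C$: fix a root $r$ of $W$, and let $C$ be a component with sink $r_C$ and sources $s_1,\dots,s_j$. The set $\mathcal B_W(C)$ consists of those $B\subseteq E_W$ obtained by removing one edge from each of the $j$ fundamental cycles that $C$ forms with $W$, equivalently one edge from each $s_i$--$r_C$ path $P_i \subseteq W$. BGRS construct $\mathcal D_C$ via a carefully \emph{coordinated} sampling scheme along the paths $P_1,\dots,P_j$ (the naive ``uniform and independent per path'' choice is provably too weak). Its key property, the content of BGRS Lemma~19, is the following: under $\mathcal D_C$, for every edge $e\in E_W$ with $U_e$ denoting the component of $W-e$ not containing $r$, the resulting marginals combine with any LP-feasible $x^i$ to yield a per-edge marking rate of at least $1/M$, i.e.,
\[
\Pr[e \textrm{ marked in iter } i\mid\mathcal F_{i-1}] \;=\; \sum_{C\in\mathbf C^i}\frac{x^i_C}{M}\Pr_{B\sim\mathcal D_C}[e\in B] \;\geq\; \frac{1}{M},
\]
with the bound flowing from the $k$-DCR LP constraint at $U=U_e$, namely $\sum_{C:\,\textrm{sink}(C)\notin U_e,\,\textrm{sources}(C)\cap U_e\neq\emptyset} x^i_C \geq 1$. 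Since the combinatorics of $W\cup C$ depend only on $W$ being a tree on the terminals, not on the edge-vs.-node nature of the underlying Steiner instance, this construction transfers verbatim from the edge-Steiner setting of BGRS.

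Now for the drift and harmonic bound. Let $X_i$ denote the number of edges of $\widetilde W$ still unmarked after iteration $i$, and set $\tau = \min\{i:X_i = 0\}$. Linearity of expectation, applied to the per-edge bound over the $X_{i-1}$ unmarked edges at the start of iteration $i$, gives $\E[X_{i-1}-X_i\mid\mathcal F_{i-1}] \geq X_{i-1}/M$. I would then introduce the potential $\Phi_i = M\cdot H(X_i)$ with $H(0)=0$; the elementary inequality $H(k)-H(k-y)\geq y/k$ for $0 \leq y \leq k$ combined with the above drift yields
\[
\E[\Phi_{i-1}-\Phi_i \mid \mathcal F_{i-1}] \;\geq\; M\cdot\frac{\E[X_{i-1}-X_i \mid \mathcal F_{i-1}]}{X_{i-1}} \;\geq\; 1 \quad\textrm{on the event }\{X_{i-1}\geq 1\},
\]
so $\Phi_i + i\wedge\tau$ is a supermartingale. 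Optional stopping (whose integrability prerequisite is routine via the geometric tail $\Pr[\tau > i] \leq |\widetilde W|(1-1/M)^i$) delivers $\E[\tau] = \E[\Phi_\tau + \tau] \leq \Phi_0 = M\cdot H(|\widetilde W|)$, which is the claim.

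The main obstacle is the first stage: establishing the per-edge marking rate $\Pr[e \textrm{ marked}\mid\mathcal F_{i-1}] \geq 1/M$ uniformly in $e$ and history. A naive choice (e.g.\ a single uniform edge per path $P_i$, independently across paths) would only guarantee a rate of $\Omega(1/(LM))$, where $L$ bounds the length of the $P_i$'s, which is too weak. BGRS's Lemma~19 circumvents this via a coordinated sampling whose marginals aggregate correctly against every LP-feasible $x^i$. Once this lemma is granted, stages two and three are a clean drift/coupon-collector computation with no problem-specific subtleties.
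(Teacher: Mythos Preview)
The paper does not prove this lemma at all; it simply cites Lemmas~19 and~20 of~\cite{DBLP:journals/jacm/ByrkaGRS13} and moves on. Your proposal is a correct reconstruction of that cited argument: you correctly identify BGRS Lemma~19 (the coordinated sampling giving the per-edge marking rate $\geq 1/M$ via the LP cut constraint) as the substantive step, and your potential/supermartingale derivation of the $H(|\widetilde W|)\cdot M$ bound is exactly the content of BGRS Lemma~20, so there is nothing to contrast.
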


We will delete Steiner nodes from $T$ using the marked edges in $W$, as follows. For each Steiner node $v \in S^*$, we define
\begin{equation*}
    W(v) \coloneqq \{(p,q) \in E_W: v \textrm{ is an internal node of the path between }p \textrm{ and }q \textrm{ in } T\},
\end{equation*}
and $w(v) \coloneqq |W(v)|$; less formally, $w(v)$ is equal to the number of edges $e$ of $W$ such that the path between the endpoints of $e$ in $T$ contains $v$. From now on, we will say that the vector $w: S^* \to \mathbb{N}_{\geq 0}$ is the vector \emph{imposed} on $S^*$ by $W$. A Steiner node $v \in S^*$ is deleted in the first iteration where all the edges in $W(v)$ become marked. Thus, each subgraph $T^i$, $i > 0$, defined in the beginning of this section, is the subgraph obtained from $T^{i-1}$ after the $i^\textrm{th}$ iteration according to this deletion process.

\begin{lemma}
\label{lem:connectivity}
For every $i > 0$, $T^i \cup C^1 \cup \dots \cup C^i$ spans the terminals.
\end{lemma}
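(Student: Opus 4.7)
The plan is to prove the statement via induction on $i$, establishing the stronger invariant that for every edge $e = (p,q) \in E_W$, the endpoints $p$ and $q$ are connected in $T^i \cup C^1 \cup \dots \cup C^i$. Since $W$ is a spanning tree on $R$, this invariant implies connectivity of the entire terminal set and hence the lemma. The base case $i=0$ is immediate, because $T^0 = T$ contains the path $P_e$ joining the endpoints of every $W$-edge.

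For the inductive step, I would split on whether $e$ is marked. The easy case is $e \notin M^i$: any Steiner node $v$ lying internally on $P_e$ satisfies $e \in W(v)$, hence $W(v) \not\subseteq M^i$, so $v$ has not been deleted and $P_e \subseteq T^i$ gives the required connectivity. The harder case is $e \in M^i$, handled by applying the defining property of $\mathcal B_W(C^j)$ at any iteration $j \leq i$ with $e \in B^j$: the graph $(W \setminus B^j)\cup C^j$ is connected and spans $R$, so there is a walk $\pi$ from $p$ to $q$ in it. The walk decomposes into $C^j$-subpaths (which are contained in $C^1 \cup \dots \cup C^i$, trivially in the target graph) and individual $W$-edges $f \in W\setminus B^j$. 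For each such $f$, the inductive claim must be invoked recursively.

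The main obstacle is ensuring that this recursion is well-founded, since $f \in W \setminus B^j$ may itself lie in $M^i$ (marked at some other iteration) and the straightforward induction on $i$ does not immediately shrink. I would resolve this by the following extremal argument: suppose for contradiction the invariant fails for the smallest $i$, so there is a nonempty cut $(S, \bar S)$ of $R$ with no edges of $T^i \cup C^1\cup\dots\cup C^i$ crossing it. No $C^j$ with $j\leq i$ can cross (its terminals would already be connected in the target), and no $e \in W \setminus M^i$ can cross (as $P_e \subseteq T^i$ would cross). Since $W$ spans $R$, some crossing $W$-edge exists and must lie in $M^i$. Take a crossing edge with maximum first-marking iteration $j^*$ and note that $(W\setminus B^{j^*})\cup C^{j^*}$ has exactly $|R|-1$ edges (using $|B^{j^*}| = |R^{j^*}|-1$) and is connected, hence a spanning tree on $R$; its (necessarily existing) crossing edge cannot lie in $C^{j^*}$, so it is some $g \in W\setminus B^{j^*}$ crossing $(S,\bar S)$, with first-mark iteration strictly smaller than $j^*$ by maximality. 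Iterating this argument (or equivalently packaging it as an induction on the multiset of first-mark iterations of crossing edges) yields the contradiction.

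Once the invariant is established, the lemma follows: any two terminals $u,u'$ are joined by a path in $W$, each of whose edges contributes a connecting path in $T^i\cup C^1\cup\dots\cup C^i$ by the above, so $u$ and $u'$ are connected in $T^i\cup C^1\cup\dots\cup C^i$ as claimed.
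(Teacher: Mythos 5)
Your base case, your treatment of unmarked edges, and the overall reduction to ``unmarked edges of $W$ plus the sampled components span $R$'' all match the paper; the paper, however, dispenses with that last claim in one phrase (``by construction''), deferring to the marking process of Byrka et al., while you try to derive it from the stated defining property of $\mathcal B_W(\cdot)$ alone. This is where the proposal has a genuine gap. Your extremal step produces, from a crossing edge $e^*$ with maximal first-mark iteration $j^*$, another crossing $W$-edge $g\in W\setminus B^{j^*}$ with $\mu(g)<j^*$. But when you then apply the same step to $g$ using $B^{\mu(g)}$, the new crossing edge $g'$ is only guaranteed to satisfy $\mu(g')\neq\mu(g)$ and $\mu(g')\le j^*$; it can bounce back up to $j^*$. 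The set of crossing edges is fixed by the cut, so neither the iteration nor the ``multiset of first-mark iterations'' decreases, and the recursion is not well-founded.

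The gap is not merely presentational: the claim is false if one only assumes that each $B^j$ is a maximal set with $(W\setminus B^j)\cup C^j$ connected and spanning. Take $W$ to be the path $r_1r_2r_3r_4r_5$, let $C^1$ connect $\{r_1,r_3,r_5\}$ with $B^1=\{r_1r_2,\,r_4r_5\}$, and let $C^2,C^3$ each connect $\{r_2,r_4\}$ with $B^2=\{r_2r_3\}$ and $B^3=\{r_3r_4\}$. Each $(W\setminus B^j)\cup C^j$ is connected and each $B^j$ is maximal, yet $(W\setminus(B^1\cup B^2\cup B^3))\cup C^1\cup C^2\cup C^3$ splits into $\{r_1,r_3,r_5\}$ and $\{r_2,r_4\}$: every $W$-edge crosses this cut and is marked, while no $C^j$ crosses it. So any correct proof must use structure beyond the defining property of $\mathcal B_W(C^j)$ -- in the construction of Lemmas 19--20 of Byrka et al., $B^i$ is a bridge set of the \emph{residual} witness tree (with the terminal groups already connected by $C^1,\dots,C^{i-1}$ contracted), hence disjoint from previously marked edges, and the residual tree remains a spanning tree of the contracted terminal set after every iteration. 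With that invariant the composition is an immediate induction on $i$; note also that the offending sequence above cannot arise in the algorithm, since after $C^2$ the terminals $r_2,r_4$ are identified and $C^3$ would connect a single super-terminal. Your proof needs to invoke this structure explicitly rather than argue from the abstract definition of $\mathcal B_W(\cdot)$.
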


\begin{proof}
Let $E_W^i \subseteq E_W$ be the set of edges of $W$ that have not been marked at iteration $i$. By construction, 
$E_W^i \cup C^1 \cup \dots \cup C^i$ spans the terminals. For every $(p,q) \in E_W^i$, the unique path between $p$ and $q$ in $T$ is still present in $T^i$, as none of the nodes in this path have been deleted. Hence $p$ and $q$ are connected in $T^i$. The result follows.
\end{proof}

We now address the universal constant $\gamma$ that was mentioned above. For a CA-Node-Steiner-Tree instance $G=(V, E)$, where $R \subseteq V$ is the set of terminals and $S = V \setminus R$ is the set of Steiner nodes, we define
\begin{equation*}
    \gamma_G \coloneqq \min_{\substack{T = (R \cup S^*, E^*):\; T \textrm{ is}\\\textrm{optimal Steiner tree}}} \min_{\substack{W:\; W\textrm{ is }\\\textrm{witness tree}}} \frac{\sum_{v \in S^*} H(w(v))}{cost(T)},
\end{equation*}
where $w$ is the vector imposed on $S^*$ by each witness tree $W$ considered. The constant $\gamma$ that will be used to pinpoint the approximation ratio of the algorithm is defined as
\begin{equation*}
    \gamma \coloneqq \sup \{\gamma_G: G\textrm{ is an instance of CA-Node-Steiner-Tree}\}.
\end{equation*}

In Section~\ref{sec:witness-tree} we will prove the following theorem.

\begin{theorem}\label{thm:witness-tree}
$\gamma \leq 1.8917$.
\end{theorem}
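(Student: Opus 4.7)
The plan is to exhibit, for every CA-Node-Steiner-Tree instance $G$, a pair $(T, W)$ consisting of an optimal Steiner tree $T = (R \cup S^*, E^*)$ and a witness spanning tree $W$ on $R$ such that
\begin{equation*}
    \sum_{v \in S^*} H(w(v)) \leq 1.8917 \cdot cost(T),
\end{equation*}
where $w$ is the vector imposed on $S^*$ by $W$. First, I fix an arbitrary optimal tree $T$ and, using the defining properties of a CA-Node-Steiner-Tree instance (Definition~\ref{def:ca-node-steiner}: no two terminals are adjacent, and each Steiner node is adjacent to at most two terminals), perform a light preprocessing so that every leaf of $T$ is a terminal; this preserves optimality. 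I then define a node weight function $\omega$ on the Steiner nodes of $T$ that simultaneously penalizes the number of Steiner nodes on a path and the degree of each internal node in $T$, so that short paths through low-degree Steiner nodes are preferred when routing between terminals.

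The witness tree $W$ is built deterministically from minimum-$\omega$-weight paths. For each Steiner node $v \in S^*$, I identify two terminals, one on either ``side'' of $v$, that are reached by minimum-$\omega$ paths through $v$; this defines a candidate witness edge $(t,t')$ whose unique $T$-path crosses $v$. A cycle-avoiding aggregation of these candidate edges, processed in a carefully chosen order, then yields a spanning tree $W$ of $R$; standard tree-connectivity arguments verify that the resulting $W$ is indeed a spanning tree on the terminals. Crucially, the construction is purely combinatorial and hence deterministic, in contrast to the randomized choices in~\cite{DBLP:journals/jacm/ByrkaGRS13,DBLP:conf/stoc/Byrka0A20}.

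To bound $\sum_{v \in S^*} H(w(v))$, I perform a local analysis around each Steiner node. By construction, $w(v)$ is controlled by the local structure of $T$ near $v$, namely the number of distinct minimum-weight paths crossing $v$, which can be bounded in terms of $\deg_T(v)$ and the behavior of nearby Steiner nodes. I then charge each term $H(w(v))$ to a constant-size local configuration around $v$, thereby reducing the global inequality to a worst-case local optimization. The main obstacle lies here: Steiner nodes of high degree in $T$ can naturally host many crossing paths and thus contribute values $H(w(v))$ well above $1$. The degree-dependent penalty in $\omega$ must be calibrated precisely so that $W$ avoids routing unnecessarily through such hubs, and so that the many low-$w$ nodes on the incident Steiner branches amortize the few high-$w$ hubs. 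I expect the constant $1.8917$ to emerge from solving the resulting worst-case local tradeoff, most likely via a small discrete optimization or LP over local configurations of $T$.
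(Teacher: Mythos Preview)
Your proposal captures the broad spirit of the paper's approach---a deterministic witness tree built from minimum-weight paths with degree-dependent node weights---but it leaves genuine gaps at the two places where the paper does real work.

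First, the construction. You propose, for each Steiner node $v$, to pick two terminals ``one on either side of $v$'' via minimum-$\omega$ paths and then aggregate with cycle-avoidance. The paper does something structurally different and much more controlled: it roots $T[S^*]$, decomposes it into \emph{final-components} (rooted subtrees whose root and leaves are exactly the Steiner nodes adjacent to terminals), and inside each component assigns weight $1/\deg(u)$ to every node and computes, for each non-final Steiner node $u$, a single minimum-weight path $P(u)$ to a \emph{descendant} leaf. Contracting each $P(u)$ into its leaf yields the witness tree directly; no separate cycle-avoidance step is needed, and the resulting $W$ has the crucial property that $w^u(u)$ equals the number of children of $u$ minus one (Lemma~\ref{lem:increase}). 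Your ``two-sided'' construction does not obviously give such tight control over $w(v)$.

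Second, and more seriously, the analysis is not local in the sense you describe. The paper does \emph{not} charge $H(w(v))$ to a constant-size neighborhood. Instead it proves by induction on subtrees $Q_u$ the strengthened invariant
\[
h(Q_u) + \sum_{\ell \in P(u)} \frac{1}{d_\ell} + \delta \;<\; 1.8917 \cdot |Q_u|, \qquad \delta = \tfrac{7}{120},
\]
where the extra potential $\sum_{\ell \in P(u)} 1/d_\ell + \delta$ is precisely what is needed to absorb the increase in $w$-values along the marked path when one moves from children to parent. The constant $1.8917$ then falls out of maximizing $\psi(p) = 2H(p+1) - (p-1)(\tfrac{1}{3}+\delta) - H(2)$ over $p \in \mathbb{N}_{>0}$, attained at $p=4$. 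Without this potential-function strengthening, a purely local charging argument has no mechanism to pay for the accumulation of witness paths along the marked child's branch, which is exactly where $w$-values exceed the local degree bound. Your proposal does not identify this obstacle or the potential that resolves it.
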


We are ready to prove the main theorem of this section; throughout its proof, we use the notation introduced in this section.



\begin{theorem}\label{thm:main-approx}
For any fixed $\varepsilon > 0$, the iterative randomized rounding algorithm (see Algorithm~\ref{alg:rounding-scheme}) yields a $(\gamma + \varepsilon)$-approximation.
\end{theorem}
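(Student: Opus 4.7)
The plan is to charge the expected cost of the algorithm, iteration by iteration, to the deletion process of Steiner nodes of a fixed optimal tree, and then invoke Lemma~\ref{lem:probability_distr} and Theorem~\ref{thm:witness-tree} to bound this charge. Fix an optimal Steiner tree $T = (R \cup S^*, E^*)$ and a witness tree $W$ that jointly attain the minimum in the definition of $\gamma_G$. The sampling rule of Algorithm~\ref{alg:rounding-scheme} gives, conditional on the state at the start of iteration $i$, $\E[\mathrm{cost}(C^i)] = \OPT_{\LP}^{\,i}/M$, where $\OPT_{\LP}^{\,i}$ denotes the $k$-DCR LP value for the current (contracted) instance. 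By Theorem~\ref{thm:eps_approx}, $\OPT_{\LP}^{\,i} \leq (1+\varepsilon)\, \OPT^{\,i}$, so it suffices to exhibit a cheap integer feasible solution for the current instance.

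The crucial feasibility step is to argue that $S^* \cap V(T^{i-1})$ is a feasible solution for the contracted Node Steiner Tree instance present at iteration $i$. By Lemma~\ref{lem:connectivity} (applied at iteration $i-1$), $T^{i-1} \cup C^{1} \cup \dots \cup C^{i-1}$ spans all original terminals; once each sampled component $C^{j}$ is identified with its sink, $T^{i-1}$ alone connects all current terminals. Combined with Theorem~\ref{thm:eps_approx}, this yields
\begin{equation*}
\E\bigl[\mathrm{cost}(C^{i})\bigr] \;\leq\; \frac{1+\varepsilon}{M}\, \E\bigl[|S^* \cap V(T^{i-1})|\bigr].
\end{equation*}

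I would next sum over iterations $i\ge 1$ and exchange sums by Tonelli. Writing $|S^* \cap V(T^{i-1})| = \sum_{v \in S^*} \mathbf{1}[v \in T^{i-1}]$ and noting that $\sum_{i\ge 1} \mathbf{1}[v \in T^{i-1}] = \tau_v$, where $\tau_v$ is the first iteration at which $v \notin T^i$ (the deletion time of $v$), we obtain $\sum_{i\ge 1} \E[|S^* \cap V(T^{i-1})|] = \sum_{v \in S^*} \E[\tau_v]$. By construction, $v$ is deleted precisely when every edge of $W(v)$ has been marked, so Lemma~\ref{lem:probability_distr} applied with $\widetilde W = W(v)$ yields $\E[\tau_v] \leq H(w(v))\cdot M$. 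Plugging in and invoking Theorem~\ref{thm:witness-tree} with the minimizing choice of $(T,W)$ gives
\begin{equation*}
\E\Bigl[\sum_{i\ge 1} \mathrm{cost}(C^{i})\Bigr] \;\leq\; (1+\varepsilon) \sum_{v \in S^*} H(w(v)) \;\leq\; (1+\varepsilon)\,\gamma \cdot \OPT.
\end{equation*}
Replacing $\varepsilon$ by $\varepsilon/\gamma$ (and noting that $\gamma \leq 1.8917$ is an absolute constant) converts this into the claimed $(\gamma+\varepsilon)\OPT$ bound. Polynomial running time follows because each ``real'' iteration reduces the number of current terminals by at least one, and each iteration runs in polynomial time by Theorem~\ref{thm:eps_approx}.

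The step I expect to require the most care is the feasibility claim: $T^{i-1}$ can be highly disconnected as a subgraph of the original graph, and becomes a connected spanning Steiner tree of the current instance only after the previously sampled components have been identified with their sinks. One must therefore be precise about what ``spans the terminals'' means in the contracted graph and verify that the integer optimum of the contracted Node Steiner Tree instance is indeed upper-bounded by $|S^* \cap V(T^{i-1})|$. Once this feasibility step is pinned down, the remaining ingredients---the conditional expectation formula, Tonelli's theorem, and the two invoked results---combine in a routine manner.
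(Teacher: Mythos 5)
Your proposal is correct and follows essentially the same route as the paper: fix the pair $(T,W)$ attaining $\gamma_G$, bound the expected per-iteration cost by the LP value of the contracted instance via Theorem~\ref{thm:eps_approx}, use the surviving part of $T$ (whose feasibility is exactly what Lemma~\ref{lem:connectivity} certifies) as the integral comparison solution, and sum deletion times via Lemma~\ref{lem:probability_distr} and Theorem~\ref{thm:witness-tree}. The only cosmetic differences are your indexing by $T^{i-1}$ versus the paper's $T^i$ (harmless, since the subgraphs are nested and both deletion-time quantities are bounded by $H(w(v))\cdot M$) and your rescaling of $\varepsilon$ by $\gamma$ where the paper uses $\varepsilon'=\varepsilon/2$.
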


\begin{proof}
Let $G = (R \cup S, E)$ be a CA-Node-Steiner-Tree instance. We run Algorithm~\ref{alg:rounding-scheme} with $k = 2^{\lceil 4/\varepsilon' \rceil}$, as in the proof of Theorem~\ref{thm:eps_approx}, and $\varepsilon' \coloneqq   \frac{\varepsilon}{2}$. It is easy to see that the algorithm runs in polynomial time.

Let $T = (R \cup S^*)$ and $W$ be the optimal Steiner tree and corresponding witness tree, respectively, such that $\gamma_G = \frac{\sum_{v \in S^*} H(w(v))}{cost(T)}$. Clearly, $\gamma_G \leq \gamma$. For each Steiner node $v \in S^*$, let $D(v) \coloneqq \max\{i: v\in T^i\}$. By Lemma~\ref{lem:probability_distr}, we have $\E[D(v)] \leq H(w(v)) \cdot M$. 
We have
\begin{align*}
\sum_{i}\E[cost(C^i)] &=\sum_{i}\E\Big[\sum_{C} \frac{x^i_C}{M} cost(C)\Big] \leq \frac{1+\varepsilon'}{M} \sum_{i} \E[cost(T^i)]= \frac{1+\varepsilon'}{M}\sum_{v \in S^*} \E[D(v)]\\
                     &\leq (1+\varepsilon') \sum_{v \in S^*} H(w(v)) = \gamma_G \left(1 +\frac{\varepsilon}{2} \right) cost(T) \leq (\gamma +\varepsilon) cost(T).
\end{align*}

The first inequality holds since (i) Lemma~\ref{lem:connectivity} implies that an optimal solution to the CA-Node-Steiner-Tree instance defined in iteration $i$ has cost at most $cost(T^i)$, and (ii) Theorem~\ref{thm:eps_approx} shows that the $k$-DCR LP provides a $(1+\varepsilon')$-approximate solution to it. The first inequality of the second line follows since $\sum_{v \in S^*} \E[D(v)] \leq M \cdot \sum_{v \in S^*} H(w(v))$, while the last inequality follows by the definition of $\gamma$ and Theorem~\ref{thm:witness-tree}.
\end{proof}

Putting everything together, we can now easily prove our main two theorems, Theorems~\ref{thm:approx-block-tap} and~\ref{thm:approx-cacap}.
\begin{proof}[Proof of Theorems~\ref{thm:approx-block-tap} and~\ref{thm:approx-cacap}]
We first design a $1.892$-approximation algorithm for the CA-Node-Steiner-Tree problem. For that, we pick $\varepsilon$ to be sufficiently small (e.g., $\varepsilon = 0.0003$ suffices) and by Theorems~\ref{thm:witness-tree} and \ref{thm:main-approx}, we get that the iterative randomized rounding algorithm (Algorithm~\ref{alg:rounding-scheme}) is a $1.892$-approximation algorithm for CA-Node-Steiner-Tree. By Theorems~\ref{cor:approx-preserve} and~\ref{cor:approx-preserve-block}, we also get a $1.892$-approximation algorithm for 1-Node-CAP and CacAP. This concludes the proof.
\end{proof}

\section{A deterministic construction of the witness tree}
\label{sec:witness-tree}

In this section, for any given CA-Node-Steiner-Tree instance $G=(V, E)$, we consider a specific optimal Steiner tree and describe a deterministic construction of a witness tree $W$ that shows that $\gamma_G < 1.8917$. By the definition of $\gamma$, this immediately implies Theorem~\ref{thm:witness-tree}. 

Throughout this section, we use the following notation: given is a CA-Node-Steiner-Tree instance $G=(R \cup S, E)$ and an optimal solution $T = (R \cup S^*, E^*)$. The goal is to construct a witness tree $W = (R, E_W)$ spanning the terminals, such that $\E[H(w)] \coloneqq \frac{1}{|S^*|} \sum_{v \in S^*} H(w(v))$ is minimized, where $w: S^* \to \mathbb{N}_{\geq 0}$ is the vector imposed by $W$. From now on, we will refer to $\E[H(w)]$ as the average $H(w)$-value of the Steiner nodes. Similarly, we will refer to $w(v)$ as the $w$-value of a node $v$. Since the proof of Theorem~\ref{thm:witness-tree} consists of several steps, we will present them separately, and at the end of the section we will put everything together and formally prove Theorem~\ref{thm:witness-tree}.

We start by observing that every CA-Node-Steiner-Tree instance has an optimal solution in which the terminals are leaves. This follows by Definition~\ref{def:ca-node-steiner}, since (i) there are no edges between terminals and (ii) adjacent Steiner nodes of a terminal form a clique. Thus, from now on we assume that $T$ is an optimal solution in which the terminals are leaves. Given this, the first step is to show that we can reduce the problem of constructing a witness tree spanning the terminals to the problem of constructing a witness tree spanning the Steiner nodes that are adjacent to terminals. This will allow us to remove the terminals from $T$, and only focus on $T[S^*]$, which is a (connected) tree.

\subsection{Removing terminals from $T$}\label{sec:removing-terminals}

Let $F \subseteq S^*$ be the set of Steiner nodes contained in $T$ with at least one adjacent terminal. We call $F$ the set of \emph{final} Steiner nodes. In order to simplify notation, we will remove the terminals from $T$, and compute a tree $W$ spanning the final nodes of $T[S^*]$ (note that the set of final nodes contains all the leaves of $T[S^*]$). Any such tree $W$ directly corresponds to a terminal spanning tree as follows: for each final Steiner node $f \in F$, we arbitrarily pick one of the (at most) two terminals adjacent to $f$, which we denote as $\mathtt{rep}(f)$. Given a spanning tree $W$ of the final Steiner nodes, we can now generate a terminal spanning tree $W'$ as follows (see Figure~\ref{fig:Removing-terminals}a):
\begin{itemize}
    \item we replace every edge $(f, f')$ of $W$ with the edge $(\mathtt{rep}(f), \mathtt{rep}(f'))$,
    \item in case a final Steiner node $f$ has a second terminal $t$ adjacent to it besides $\mathtt{rep}(f)$, we also add the edge $(\mathtt{rep}(f), t)$ to the tree.  
\end{itemize}

It is easy to see that the vectors $w$ and $w'$ imposed by $W$ and $W'$, respectively, on the Steiner nodes are the same for all nodes except for the final Steiner nodes, where the $w$-values differ by $1$ in the case where a final Steiner node has two terminals adjacent to it. Thus, from now on, we focus on computing a tree $W$ spanning the final Steiner nodes, and we simply increase the corresponding imposed $w$-value of each final Steiner node by $1$. More formally, if $W = (S^*, E_W)$ is a tree spanning the final nodes of $T[S^*]$, we have for every $v \in S^*$:
\begin{equation*}
    w(v) \coloneqq |\{(p,q) \in E_W: v \textrm{ is a node of the path between }p \textrm{ and }q \textrm{ in } T[S^*]\}| + \mathbbm{1}[v \in F],
\end{equation*}
where $\mathbbm{1}[v \in F]$ denotes the indicator of the event ``$`v \in F$". This expression corresponds to the (worst-case) assumption that every final Steiner node is incident to two terminals; this is without loss of generality since, if the average $H(w)$-value computed in this way is less than a given value $\gamma' > 0$, then clearly the average $H(w')$-value imposed by the terminal spanning tree $W'$ is also bounded by $\gamma'$, as the harmonic function $H$ is an increasing function.

The next step is to root $T[S^*]$ and then decompose it into a collection of rooted subtrees of $T[S^*]$ such that every final Steiner node appears as either a leaf or a root (or both); we call these subtrees \emph{final-components}. This decomposition is nearly identical to the one used in Section \ref{subsection:k-restricted} where we decomposed a tree into components where terminals appeared as leaves (here we use final Steiner nodes instead of terminals), with the added element that now these components are also rooted.

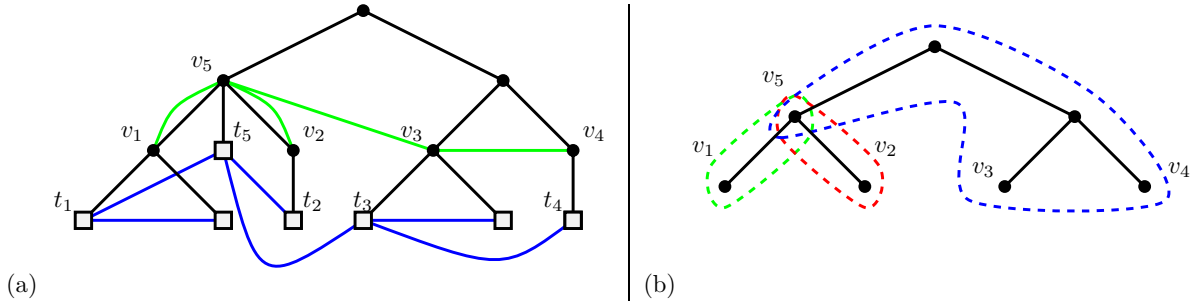
\begin{figure}[ht]
    \centering
    \scalebox{0.93}{
    
    \begin{tabular}{c|c}
        (a)
        \begin{tikzpicture}
            \draw[color=green,very thick]
                (1,1) node{} .. controls (1.25,1.65) .. (2,2) node{}
                
                (5,1) node{} -- (7,1) node{}
                (2,2) .. controls(2,2) .. (5,1)
                (2,2) .. controls (2.75,1.65) .. (3,1);
                
            \draw[color= blue,very thick] 
                (0,0) .. controls (0,0) .. (2,1)
                (2,1) .. controls (2.5,-1) .. (4,0)
                (2,1) .. controls (2,1) .. (3,0)
                (4,0) .. controls (6,-3/4) .. (7,0)
                (0,0) .. controls (0,0) .. (2,0)
                (4,0) .. controls (6,0) .. (6,0);
  
              \filldraw[very thick,color=black]
                (0,0)
            --  (1,1)
            
                (1,1)
            --  (2,0)
            
                (3,1)
            --  (3,0)
            
                (5,1)
            --  (4,0)
                (5,1)
            --  (6,0)
            
                (7,1)
            --  (7,0) node[draw=black, fill=black!5]{}
            
                (2,2)
            --  (2,1)
            
                (1,1) circle(2pt) node[align=left,   anchor = south east] {$v_1$}
            --  (2,2) circle(2pt) node[align=left,   anchor =  south east] {$v_5$} 
            --  (4,3) circle(2pt) 
            --  (6,2) circle(2pt) 
            --  (7,1) circle(2pt) node[align=left,   anchor = south west] {$v_4$} 
                (2,2)
            --  (3,1) circle(2pt) node[align=left,   anchor = south west] {$v_2$}
                (6,2)
            --  (5,1) circle(2pt) node[align=right,   anchor = south east]{$v_3$}
            
                (0,0) node[draw=black, fill=black!5]{}
                (0,0) node[align=left,   anchor = south east ] {$t_{1}$}
                
                (2,0) node[draw=black, fill=black!5]{}
                
                (3,0) node[draw=black, fill=black!5]{}
                (3,0) node[align=left,   anchor = south west] {$t_{2}$}
                
                (4,0) node[draw=black, fill=black!5]{}
                (4,0) node[align=left,   anchor = south] {$t_{3}$}
                
                (6,0) node[draw=black, fill=black!5]{}
                
                (7,0) node[draw=black, fill=black!5]{}
                (7,0) node[align=left,   anchor = south east] {$t_{4}$}
                
                (2,1) circle(2pt) node[draw=black, fill=black!5]{}
            
                (2,1) node[align=right,   anchor = south west]{$t_5$};

        \end{tikzpicture}
         &
         (b)
         \begin{tikzpicture}
            \draw[very thick,color=green,dashed]
                plot [smooth cycle] coordinates {(1,0.7 ) (0.8,1.3 ) (2,2.3 ) (2.2,1.7 )};
            
            \draw[very thick,color=red,dashed]
                plot [smooth cycle] coordinates {(3,0.7) (3.2,1.3) (2,2.3) (1.8,1.7)};
            
            \draw[very thick,color=blue,dashed]
                plot [smooth cycle] coordinates {(2,2.3) (1.8,1.7) (4.3,2.2) (4.5,0.8) (7.3,0.8) (6.3,2.3) (4,3.3)};

            \filldraw[very thick]
                (1,1) circle(2pt) 
            --  (2,2) circle(2pt) 
            --  (4,3) circle(2pt)
            --  (6,2) circle(2pt)
            --  (7,1) circle(2pt) 
                (1,1.3) node[align=left,   anchor = south east] {$v_1$}
                (2,2.3) node[align=left,   anchor =  south east] {$v_5$} 
                (7.2,1) node[align=left,   anchor = south west] {$v_4$}

                (2,2) 
            --  (3,1) circle(2pt) 
                (3,1.3) node[align=left,   anchor = south west] {$v_2$}
            
                (6,2) 
            --  (5,1) circle(2pt) 
                (5,1) node[align=right,   anchor = south east]{$v_3$};
                
        \draw[color=white]
                (7,-1/2);
            
        \end{tikzpicture}
    \end{tabular}
    
    }
    
\caption{(a) Given is a Steiner tree, marked with black, where the terminals are marked as square nodes. The set of final nodes is $\{v_1,v_2,v_3,v_4,v_5\}$, and a tree spanning them is denoted with the green edges. In order to obtain a terminal spanning tree, each green edge $(v_i,v_j)$ is replaced with the edge $(t_i,t_j)$,  marked with blue, where $t_i= \mathtt{rep}(v_i)$ for every $i\in[5]$. Finally, every terminal with a ``sibling" terminal has a corresponding edge added, marked again with blue. (b) An example of the the Steiner tree from (a) decomposed into its final-components as defined in Section~\ref{sec:decomposition}.}
\label{fig:Removing-terminals}
\end{figure}

\subsection{Decomposing $T$ into final-components}\label{sec:decomposition}

To simplify notation, let $T = (V, E)$ be a tree where $F \subseteq V$ is the set of final Steiner nodes, again noting that $F$ contains all the leaves of $T$. We start by rooting the tree $T$ at an arbitrary leaf $r$; note that $r \in F$. We decompose $T$ into a set of final-components $T_1, \dots, T_{\tau}$, where a final-component is a rooted maximal subtree of $T$ whose root and leaves are the only nodes in $F$. We clarify here that for a rooted tree, we call a vertex a leaf if it has no children (thus, although the root may have degree $1$, it is not referred to as a leaf).

More precisely, we start with $T$ and consider the maximal subtree $T_1$ of $T$ that contains the root $r$, such that all leaves of $T_1$ are final Steiner nodes, and including the root, are the only final Steiner nodes. For each leaf $f \in F \cap T_1$, we do the following. If $f$ has children $\{x_1, \ldots, x_p\}$ in $T$, for some $p \in \mathbb{N}_{p \geq 1}$, we first create $p$ copies of $f$; this means that, along with the original node $f$, we now have $p+1$ copies of $f$, and one copy of it remains in $T_1$. We then remove $T_1$ from $T$, and for each tree $T(x_i)$, $i \in [p]$, rooted at $x_i$, that is part of the forest $T \setminus T_1$, we connect $x_i$ with one distinct copy of $f$, and set this copy to be the root of the modified tree $T(x_i)$. Doing this for every leaf of $T_1$, we end up with a forest. We repeat this procedure recursively for each tree of this forest, until all trees of the forest are final-components. A simple example of this decomposition is shown in Figure~\ref{fig:Removing-terminals}b. Let $T_1, \ldots, T_\tau$ be the resulting final-components. Without loss of generality, we assume that for every $i \in [\tau]$, $\bigcup_{j\leq i} T_j$ is connected. We clarify that in this union of trees, duplicate copies of the same node are ``merged" back into one node.

Given this collection $T_1, \ldots, T_\tau$ of final-components, we will now show how to compute a tree $W_i$ spanning the final Steiner nodes of each $T_i$, and then show how to combine all of them to generate a tree $W$ that spans the final Steiner nodes of $T$, while bounding the average $H(w)$-value imposed on $S^*$ by $W$.

\subsection{Computing a tree $W$ spanning the final Steiner nodes of $T$}\label{sec:main-analysis}

We start by describing how to compute a tree $W_i$ that spans the final Steiner nodes of each final-component $T_i$, as well as how to join these trees $\{W_i\}_{i = 1}^\tau$ to obtain a tree $W$ spanning the whole set of final Steiner nodes of $T$. We construct $W$ iteratively as follows. We start with $W =\emptyset$, process the final-components $T_1, \ldots, T_\tau$ one by one according to the index-order, and for each $i$, compute a tree $W_i$ for $T_i$ and merge it with the current $W$; again, in this merging, multiple copies of the same node are merged back into one node.

To analyze this procedure, we fix a final-component $T_i$ rooted at $r_i$, and let $(r_i, v)$ be the unique edge incident to $r_i$ inside $T_i$. Let $T' = \bigcup_{j < i} T_j$, and $W'$ be the constructed witness tree for $T'$. Let $w'$ be the vector imposed on the Steiner nodes of $T'$ by $W'$. We do some case analysis based on whether $v \in F$ or not; throughout this analysis, we use the notation $|T|$ to denote the number of nodes of a tree $T$.

\bigskip
\noindent \textbf{Case 1: $v \in F$.} In this case, the edge $(r_i,v)$ is the only edge of $T_i$, so $W_i$ simply consists of the nodes $\{r_i, v\}$ and the edge $(r_i,v)$. The following lemma shows that merging $W_i$ with $W'$ (by simply merging the two copies of $r_i$ and taking the union of the edges of $W'$ and $W_i$, in the case where $W' \neq \emptyset$) increases the average $H(w)$-value by at most a negligible amount. Let $w''$ be the vector imposed by $W'' = W' \cup W_i$. 

\begin{lemma}\label{lem:h3}
Let $\gamma' \geq H(3)$. If $\sum_{u \in T'} \frac{H(w'(u))}{|T'|} < \gamma'$, then
$\sum_{u \in T'\cup T_i} \frac{H(w''(u))}{|T' \cup T_i|} < \gamma'$.
\end{lemma}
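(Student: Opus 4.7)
The plan is to track precisely how both the numerator $\sum_u H(w(u))$ and the denominator $|\cdot|$ in the average change when we pass from $W'$ to $W'' = W'\cup W_i$, and to verify that the numerator grows by at most $\gamma'$ times the denominator's growth. Since $T_i$ consists of only the two final nodes $r_i$ and $v$ joined by a single edge (Case 1), and since $r_i$ already lies in $T'$ (it is a leaf of some previously processed final-component $T_j$, $j<i$, because we assumed $\bigcup_{j\le i}T_j$ is connected), the denominator increases by exactly one: $|T'\cup T_i|=|T'|+1$. Only $r_i$ and $v$ can have their $w$-values affected, because $W_i$ consists of the single edge $(r_i,v)$ and the path between $r_i$ and $v$ in $T[S^*]$ contains no internal node.

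First I would compute the change $\Delta := \sum_{u\in T'\cup T_i}H(w''(u)) - \sum_{u\in T'}H(w'(u))$. For $r_i$, the new edge $(r_i,v)$ adds exactly one to the count of witness edges having $r_i$ on their path in $T[S^*]$, so $w''(r_i)=w'(r_i)+1$. The node $v$ is new in $T'\cup T_i$, and no path in $T[S^*]$ between two endpoints of edges of $W'$ can pass through $v$ (since $T'$ is connected and $v\notin T'$); thus $w''(v)$ picks up $1$ from the endpoint contribution of $(r_i,v)$ together with $1$ from the indicator $\mathbbm{1}[v\in F]$, giving $w''(v)=2$. All other $w$-values are unchanged, so
\begin{equation*}
    \Delta \;=\; H(2) \;+\; \bigl(H(w'(r_i)+1)-H(w'(r_i))\bigr) \;=\; H(2)+\tfrac{1}{w'(r_i)+1}.
\end{equation*}

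The main obstacle is the lower bound $w'(r_i)\ge 2$. I would argue it as follows: the term $\mathbbm{1}[r_i\in F]$ contributes $1$ since $r_i$ is a final Steiner node; and since $T'\supseteq T_1$ contains at least two final nodes (the root and a leaf of $T_1$), the tree $W'$ is a spanning tree on $\ge 2$ nodes and so $r_i$ has degree at least $1$ in $W'$, contributing at least one more to the first term of $w'(r_i)$ (endpoints lie on their own paths). Therefore $\Delta\le H(2)+\tfrac{1}{3}=H(3)\le \gamma'$.

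Finally I would combine with the strict hypothesis $\sum_{u\in T'}H(w'(u))<\gamma'|T'|$ to obtain
\begin{equation*}
    \sum_{u\in T'\cup T_i}H(w''(u)) \;=\; \sum_{u\in T'}H(w'(u))+\Delta \;<\; \gamma'|T'|+\gamma' \;=\; \gamma'\,|T'\cup T_i|,
\end{equation*}
which is the desired inequality. The base case $i=1$ (where $T'=\emptyset$ and the hypothesis is vacuous) is handled separately by direct computation: $W_1$ is the two-node tree with both $w$-values equal to $2$, giving average $H(2)<H(3)\le\gamma'$.
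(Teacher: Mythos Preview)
Your proof is correct and follows essentially the same approach as the paper's: split into the base case $i=1$ (direct computation giving average $H(2)$) and the case $i>1$, observe that $|T'\cup T_i|=|T'|+1$, that only $w(r_i)$ and $w(v)$ change with $w''(v)=2$ and $w''(r_i)=w'(r_i)+1$, use $w'(r_i)\ge 2$ to bound the increment by $H(2)+\tfrac{1}{3}=H(3)\le\gamma'$, and combine with the strict hypothesis. The paper presents the same computation, just writing the final fraction directly rather than isolating the increment $\Delta$.
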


\begin{proof}
We do some case analysis. If $i = 1$, then $W' = \emptyset$ and $r_i = r$. In this case, we have $w''(r_i) = w''(v) = 2$. This means that $\sum_{u \in T'\cup T_i} \frac{H(w''(u))}{|T' \cup T_i|} = H(2) < \gamma'$.

So, suppose now that $i > 1$, which implies that $W' \neq \emptyset$. Note that $w''(v) = 2$, $w''(r_i) = w'(r_i) + 1$, and for each $u \in T'\setminus \{r_i\}$, $w''(u)= w'(u)$. Moreover, we have $w'(r_i) \geq 2$; to see this, note that $r_i$ is a final Steiner node of $T'$ and has degree at least $1$ in $W'$. Thus, $H(w''(r_i)) \leq H(w'(r_i)) + \frac{1}{3}$. Therefore,
\begin{equation*}\sum_{u \in T'\cup T_i} \frac{H(w''(u))}{|T' \cup T_i|} \leq \frac{H(w''(v)) + 1/3  + \sum_{u \in T'} H(w'(u))}{|T' \cup T_i|} =
\frac{H(3)  + \sum_{u \in T'} H(w'(u))}{|T'| + 1} < \gamma'.
\end{equation*}
We conclude that the lemma always holds.
\end{proof}

\bigskip
\noindent \textbf{Case 2: $v \notin F$.} In this case, the witness tree $W_i$ of $T_i$ is computed deterministically by a procedure described in Algorithm~\ref{alg:computing_w}; we provide an example of the output of Algorithm~\ref{alg:computing_w} in Figure~\ref{fig:deterministic-witness-trees}.

\begin{algorithm}[ht]
Assign an arbitrary ordering to the leaves of $T_i$;\\[3pt]
Assign a weight to each node $u \in T_i$ that is equal to the inverse of its degree in $T_i$;\\[3pt]
For each non-final Steiner node $u \in T_i$, compute a minimum-weight path $P(u)$ from $u$ to a leaf that is a descendant of $u$ in $T_i$, where the weight of the path is given by the sum of weights of its nodes. Break ties in favor of the smallest leaf according to the ordering computed at step (1);\\[3pt]
For each non-final Steiner node $u \in T_i$, contract the path $P(u)$ computed in step (3) into its leaf endpoint, and obtain a tree spanning the final Steiner nodes of $T_i$; use this as the witness tree $W_i$;

\caption{Computing the tree $W_i$}
\label{alg:computing_w}
\end{algorithm}  

\begin{figure}[ht!]
\centering
    \begin{tabular}{c|c|c}
    (a)
        \begin{tikzpicture}[scale=0.75]
            \filldraw[very thick, color=red]
                (0,0) node{} -- (1,1) node{}
                (1,1) node{} -- (2.5,2) node{}
                (4,1) node{} -- (3.5,0) node{}
                (2,0) node{} -- (1,-1) node{};
                
            \draw[very thick,color=black]
                
                (2.5,3) circle circle (2pt)node[align=left,   anchor= south west]{$1$} --
                (2.5,2) circle (2pt){} -- (4,1) circle (2pt){}
                                     -- (4.5,0) circle (2pt)node[align=left,   anchor= south west]{$5$}
                                        (4,1) circle (2pt){}
                                        (3.5,0) circle (2pt)node[align=left,   anchor= south east]{$4$}
                                        
                (0,0) circle (2pt)node[align=left,   anchor= south east]{$2$}  
                
                (1,0) circle (2pt)node[align=left,   anchor= south east]{$3$}
                    -- (1,1) circle (2pt){}
                    -- (2,0) circle (2pt){}
                    -- (2,-1) circle (2pt)node[align=left,   anchor= south east]{$7$}
                    
                (2,0) node{} -- (3,-1) circle (2pt)node[align=left,   anchor= south west]{$8$}
                (1,-1) circle (2pt)node[align=left,   anchor= south east]{$6$};

            \filldraw[ultra thick,color=red]
                (0,0) circle (2pt) node[rectangle, draw=red!60, fill=red!5, very thick, minimum size=1.1mm]{}
                (1,0) circle (2pt) node[rectangle, draw=red!60, fill=red!5, very thick, minimum size=1.1mm]{}
                (1,-1) circle (2pt) node[rectangle, draw=red!60, fill=red!5, very thick, minimum size=1.1mm]{}
                (2,-1) circle (2pt) node[rectangle, draw=red!60, fill=red!5, very thick, minimum size=1.1mm]{}
                (3,-1) circle (2pt) node[rectangle, draw=red!60, fill=red!5, very thick, minimum size=1.1mm]{}
                (3.5,0) circle (2pt) node[rectangle, draw=red!60, fill=red!5, very thick, minimum size=1.1mm]{} 
                (4.5,0) circle (2pt) node[rectangle, draw=red!60, fill=red!5, very thick, minimum size=1.1mm]{}
                (2.5,3) circle (2pt) node[rectangle, draw=red!60, fill=red!5, very thick, minimum size=1.1mm]{};

        \end{tikzpicture}
        & 
        
        (b)
        \begin{tikzpicture}[scale=0.75]
            \draw[ultra thick,color=black]
                
                (6,4) node[align=left,  anchor= south west]{$1$} 
            --  (6,3) circle (2pt)node[align=left,   anchor= south west]{$2$} 
            --  (7.75,1) circle (2pt)node[align=left,   anchor= south west]{$4$}
            --  (8,0) circle (2pt)node[align=left,   anchor= south west]{$5$}
                                     
                (4,1) circle (2pt)node[align=left,   anchor= south east]{$3$}
            --  (6,3) node{} 
            -- (6,1) circle (2pt)node[align=left,   anchor= south east]{$6$}
            -- (5,0) circle (2pt)node[align=left,   anchor= south east]{$7$}
                
                (6,1) circle (2pt) -- (7, 0) circle (2pt)node[align=left,   anchor= south west]{$8$};
                
            \filldraw[ultra thick,color=red]
                (6,4) circle (2pt) node[rectangle, draw=red!60, fill=red!5, very thick, minimum size=1.1mm]{}
                (6,3) circle (2pt) node[rectangle, draw=red!60, fill=red!5, very thick, minimum size=1.1mm]{}
                (7.75,1) circle (2pt) node[rectangle, draw=red!60, fill=red!5, very thick, minimum size=1.1mm]{}
                (8,0) circle (2pt) node[rectangle, draw=red!60, fill=red!5, very thick, minimum size=1.1mm]{}
                (6,1) circle (2pt) node[rectangle, draw=red!60, fill=red!5, very thick, minimum size=1.1mm]{}
                (5,0) circle (2pt) node[rectangle, draw=red!60, fill=red!5, very thick, minimum size=1.1mm]{}
                (7,0) circle (2pt) node[rectangle, draw=red!60, fill=red!5, very thick, minimum size=1.1mm]{}
                (4,1) circle (2pt) node[rectangle, draw=red!60, fill=red!5, very thick, minimum size=1.1mm]{};
                
        \end{tikzpicture}
        & 
        (c)
        \begin{tikzpicture}[scale=0.75]
            \draw[very thick]
                (0,0) node{} -- (1,1) node{}
                             -- (2.5,2) node{}
                (4,1) node{} -- (3.5,0) node{}
                (2,0) node{} -- (1,-1) node{};
                
            \draw[very thick,color=black]
                
                (2.5,3) circle circle (2pt)node[align=left,   anchor= south west]{$1$} --
                (2.5,2) circle (2pt){} -- (4,1) circle (2pt){}
                                     -- (4.5,0) circle (2pt)node[align=left,   anchor= south west]{$5$}
                                        (4,1) circle (2pt){}
                                        (3.5,0) circle (2pt)node[align=left,   anchor= south east]{$4$}
                                        
                (0,0) circle (2pt)node[align=left,   anchor= south east]{$2$}  
                
                (1,0) circle (2pt)node[align=left,   anchor= south east]{$3$}
                    -- (1,1) circle (2pt){}
                    -- (2,0) circle (2pt){}
                    -- (2,-1) circle (2pt)node[align=left,   anchor= south east]{$7$}
                    
                (2,0) node{} -- (3,-1) circle (2pt)node[align=left,   anchor= south west]{$8$}
                (1,-1) circle (2pt)node[align=left,   anchor= south east]{$6$};
                
            \draw[color=blue,very thick]
                (0,0) node{} -- (1,0) node{}
                (0,0) node{} .. controls (1,2) .. (2.5,3) node{}
                (0,0) node{} .. controls (1/2,-1) .. (1,-1) node{}
                (0,0) node{} .. controls (2/3,1) .. (3.5,0) node{}
                (3.5,0) node{} .. controls (4,0).. (4.5,0) node{}
                (1,-1) node{} .. controls (1,-1) .. (2,-1) node{}
                (1,-1) node{} .. controls (2,-3/2) .. (3,-1) node{};                
                
            \draw[ultra thick] 
                (0,0) circle (2pt) node[rectangle, draw=black!60, fill=black!5, very thick, minimum size=1mm]{}
                (1,0) circle (2pt) node[rectangle, draw=black!60, fill=black!5, very thick, minimum size=1.1mm]{}
                (1,-1) circle (2pt) node[rectangle, draw=black!60, fill=black!5, very thick, minimum size=1.1mm]{}
                (2,-1) circle (2pt) node[rectangle, draw=black!60, fill=black!5, very thick, minimum size=1.1mm]{}
                (3,-1) circle (2pt) node[rectangle, draw=black!60, fill=black!5, very thick, minimum size=1.1mm]{}
                (3.5,0) circle (2pt) node[rectangle, draw=black!60, fill=black!5, very thick, minimum size=1.1mm]{} 
                (4.5,0) circle (2pt) node[rectangle, draw=black!60, fill=black!5, very thick, minimum size=1.1mm]{}
                (2.5,3) circle (2pt) node[rectangle, draw=black!60, fill=black!5, very thick, minimum size=1.1mm]{};

        \end{tikzpicture}
    \end{tabular}
     \caption{(a) Given is a graph $G=(V,E)$, with final Steiner nodes depicted as red squares. At each interior node $u$ we mark in red the $u$-to-leaf-path $P(u)$ with the least weight (sum of the inverse of the node degrees on the path); this corresponds to step (3) of Algorithm~\ref{alg:computing_w}. (b) At step (4) of Algorithm~\ref{alg:computing_w}, we contract the red edges and label the new contracted nodes with the name of the original (marked) leaf node. (c) A pictorial description of the witness tree obtained for the original graph.}
     \label{fig:deterministic-witness-trees}
\end{figure}
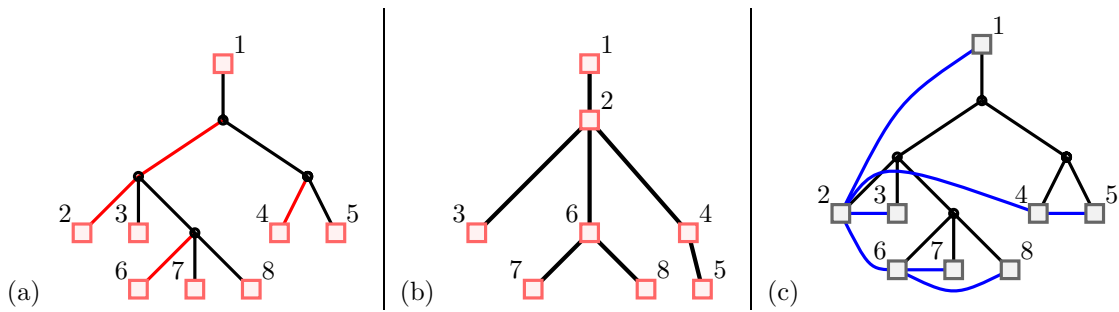

In order to bound the average $H(w)$-value that we get after adding $W_i$ to the current $W'$, we first introduce some notation. For any node $u \neq r_i$ of $T_i$, let $\mathtt{parent}(u)$ denote the parent of $u$ in $T_i$. Let $Q_u$ denote the subtree of $T_i$ rooted at $u$. For any non-final Steiner node $u$, $P(u)$ denotes the minimum-weight path from $u$ to a leaf, as computed at step (3) of Algorithm~\ref{alg:computing_w}. In the case where $u$ is a final Steiner node of $T_i$, we define $P(u) \coloneqq \{u\}$; in particular $P(r_i) \coloneqq \{r_i\}$. For every $u \in T_i$, let $q_u$ be the number of nodes of $P(u)$ (including the endpoints) and $l(u)$ denote the (unique) endpoint of this path that is a final Steiner node. For every non-final Steiner node $u$, exactly one child of $u$ is in this path, which we call the \emph{marked} child of $u$. For every $u \neq r_i$, we let $a(u)$ denote the node closest to $u$ on the path from $u$ to $r_i$ such that $l(u) \neq l(a(u))$. Examples of these definitions are given in Figure~\ref{fig:definitions}.

\begin{figure}[ht]
\centering

    \begin{tikzpicture}
        \draw [color=red, very thick]
            (0,0) .. controls (5.5,-0.7) .. (5,3/2);
          \filldraw[very thick]
            (0, 0)  node[draw=black, fill=black!5]{} node[align=left, anchor = south east] {$\ell(u)$}
        --  (1,1/2) circle(2pt) node[align=left, anchor = south east] {$u_1$}
        --  (2,2/2) circle(2pt) node[align=left, anchor = south east] {$u$}
        --  (3,3/2) circle(2pt) node[align=left, anchor = south east] {$\mathtt{parent}(u)$}
        --  (4,4/2) circle(2pt) node[align=left, anchor = south east] {$a(u)$}
        --  (5,5/2) circle(2pt) node[align=left, anchor = west] {$r_i$}node[draw=black, fill=black!5]{}
        
            (4,2) 
        --  (5,3/2) node[draw=black, fill=black!5]{}
            (5.2,1.5)  node[align=left, anchor = south ] {$\ell(a(u))$} 
            
            (1,1/2)
        --  (2,0/2) node[draw=black, fill=black!5]{}
        
            (2,2/2) 
        --  (3,1/2) circle(2pt) 
        --  (4,0/2) node[draw=black, fill=black!5]{};
    \end{tikzpicture}

    \endpgfgraphicnamed
    
    \caption{Given is an example of a node $u$; the minimum-weight path from $u$ to the leaf $\ell(u)$,  $P(u) = \{u,u_1,\ell(u)\}$; the parent of $u$, $\mathtt{parent}(u)$; the first node on the $u$ to $r_i$ path, $a(u)$, such that $\ell(u) \neq \ell(a(u))$; and finally, the edge $e_u = \{\ell(u), \ell(a(u))\}.$}
        
\label{fig:definitions}
\end{figure}
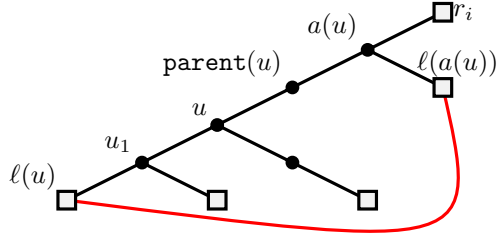

To begin the analysis, we first observe that for every node $u \neq r_i$ of $T_i$, $W_i$ contains an edge with endpoints $l(u)$ and $l(a(u))$; we call this edge $e_u$. Note that we can have $e_u = e_v$ for $u \neq v$. For every $u \neq r_i$, we now consider the subtree $W^u$ of $W_i$ on $Q_u$, defined as a set of edges as follows:
\begin{equation*}
W^u = \{e \in W_i :  \mbox{ both the endpoints of $e$ are in $Q_u$} \} \cup e_u.
\end{equation*}
We also define $W^{r_i} = W_i$. Observe that for every $u \in T_i$, $W^u$ indeed corresponds to a connected subtree of $W_i$. An example of $W^u$ is provided in Figure~\ref{fig:wu-description}, where we also clarify several properties of Lemma~\ref{lem:increase}, proved below. We let $w^u$ be the vector imposed by $W^u$, where for every $\ell \in Q_u$, we have
\begin{equation*}
    w^u(\ell) \coloneqq |\{(p,q) \in W^u: \ell \textrm{ is an internal node of the path between }p \textrm{ and }q \textrm{ in } T_i\}| + \mathbbm{1}[\ell \textrm{ is a leaf of }Q_u].
\end{equation*}
Let $h(Q_u) \coloneqq \sum_{\ell \in Q_u} H(w^u(\ell))$. We note here that $W^u$ might involve one final Steiner node, namely $l(a(u))$, that is not contained in $Q_u$. Finally, let $d_u$ be equal to $3$ if $u$ is a leaf, otherwise, let $d_u$ be equal to the degree of $u$ in $T_i$. 

\begin{figure}[ht]
\centering
      \begin{tikzpicture}[scale=0.8]
                
                \draw[color=red,very thick]
                    (-5,-4) .. controls(-4,-4.75) ..  (-2,-4)
                    (-5,-4) .. controls(-4,-4.85) ..  (1,-4)
                    (-5,-4) .. controls(-4,-4.95) ..  (4,-4)
                    
                    (-5,-4) .. controls(-6.5,-3.6) .. (-3,-.65)
                    
                    (-4,-4) -- (-5,-4)
                    (-1,-4) -- (-2,-4)
                    (1,-4) -- (2,-4)
                    (4,-4) -- (5,-4);
                    
                \filldraw[very thick]
                    (0,-1.35) circle(2pt) -- (4.5,-3) circle(2pt) 
                    (0,-1.35) -- (1.5,-3) circle(2pt)
                    (0,-1.35) -- (-1.5,-3) circle(2pt)
                    (0,-1.35) -- (-4.5,-3) circle(2pt)
                    
                    (-4.5,-3) -- (-4,-4) node[draw=black, fill=black!5]{}
                    (-4.5,-3) -- (-5,-4) node[draw=black, fill=black!5]{}
                    
                    (-1.5,-3) -- (-1,-4) node[draw=black, fill=black!5]{}
                    (-1.5,-3) -- (-2,-4) node[draw=black, fill=black!5]{}
                    
                    (1.5, -3) -- (1,-4) node[draw=black, fill=black!5]{}
                    (1.5, -3) -- (2,-4) node[draw=black, fill=black!5]{}
                    
                    (4.5, -3) -- (4,-4) node[draw=black, fill=black!5]{}
                    (4.5, -3) -- (5,-4) node[draw=black, fill=black!5]{};

                \draw[very thick, dashed]
                    (0,-1.35) -- (-3,-.65);

                \draw
                    (0.2,-1) node{$u$}
                    (-4.8,-2.8) node{$u_1$}
                    (-1.8,-2.8) node{$u_2$}
                    (1.7,-2.8) node{$u_3$}
                    (4.65,-2.8) node{$u_4$}
                    
                    (-4.5,-1.5) node{$e_u$}
                    (-2.8,-3.9) node{$e_{u_2}$}
                    (0.2,-3.9) node{$e_{u_3}$}
                    (3.2,-3.9) node{$e_{u_4}$};

      \end{tikzpicture}
        \endpgfgraphicnamed
        
        \caption{Given is the tree $Q_u$ of $T_i$. $W^u$ contains all of the red edges. Note that each $W^{u_i}$ is equal to the red edge with both endpoints below $u_i$ plus $e_{u_i}$, for $i = 1,2,3,4$. We can also see an example of the first three parts of Lemma~\ref{lem:increase}: (a) $w^u(u) = 4$; (b) for $j = 2,3,4$, and every $\ell\in Q_{u_j}$, $w^u(\ell) = w^{u_j}(\ell)$, and; (c) for each $\ell\in Q_{u_1}\setminus P(u_1), w^u(\ell) = w^{u_1}(\ell)$.}   
\label{fig:wu-description}
\end{figure}
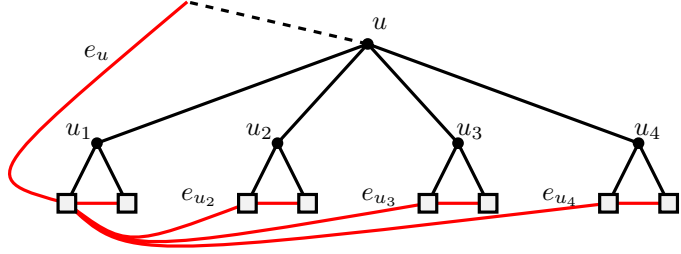


\begin{lemma}\label{lem:increase}
Let $u \in T_i$  be a non-final Steiner node and $u_1, \dots, u_p$ be the children of $u$, with $u_1$ being the marked child of $u$. Then:
\begin{itemize} 
\item[(a)] $w^u(u) = p$.
\item[(b)] For every $j \in \{2, \ldots, p\}$ and every Steiner node $\ell \in Q_{u_j}$, $w^u(\ell) = w^{u_j}(\ell)$.
\item[(c)] For every $\ell \in Q_{u_1} \setminus P(u_1)$, $w^u(\ell) = w^{u_1}(\ell)$.
\item[(d)] For every $\ell \in P(u_1)$, $H(w^u(\ell)) - H(w^{u_1}(\ell)) \leq \sum_{j=2}^{p} \frac{1}{d_{\ell} +j-2}$.
\end{itemize}
\end{lemma}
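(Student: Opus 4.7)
The plan is to analyze Lemma~\ref{lem:increase} via a direct case analysis of how the edges of $W^u$ decompose relative to the subtree witnesses $W^{u_1}, \ldots, W^{u_p}$. The key structural observation I would establish first is that, because $u_1$ is the marked child of $u$, we have $l(u_1)=l(u)$ and $a(u_1)=a(u)$, so $e_{u_1}=e_u$; and for each $j\geq 2$, $a(u_j)=u$, so $e_{u_j}$ has endpoints $l(u_j)\in Q_{u_j}$ and $l(u)=l(u_1)\in Q_{u_1}$, both lying inside $Q_u$. A short case distinction on the position of $x$ in $T_i$ then shows that the edges of $W_i$ whose both endpoints lie in $Q_u$ are precisely the edges internal to some $Q_{u_j}$ together with the ``cross'' edges $e_{u_2}, \ldots, e_{u_p}$. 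In particular, $W^{u_j}\subseteq W^u$ for every $j$, and $W^u = e_u \cup \bigcup_j W^{u_j}$ after accounting for the cross edges already recorded in each $W^{u_j}$.

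For parts (a), (b), and (c), I would simply enumerate, for the node $\ell$ under consideration, which edges of $W^u$ have $\ell$ as an internal node of their $T_i$-path. Edges internal to a single $Q_{u_j}$ never leave that subtree, while each cross edge $e_{u_j}$ ($j\geq 2$) travels from $Q_{u_j}$ up through $u$ and down $P(u_1)$ to $l(u_1)$, and $e_u$ travels from $l(u)$ up through $P(u_1)$ and $u$ out of $Q_u$. From this, exactly $e_u$ and the $p-1$ cross edges cross $u$ internally, and since $u$ is not a leaf of $Q_u$, we get $w^u(u)=p$, establishing (a). For $\ell$ in a sibling subtree $Q_{u_j}$ with $j\geq 2$, neither the cross edges nor $e_u$ touch $Q_{u_j}$ in $T_i$, so $w^u(\ell)=w^{u_j}(\ell)$, proving (b). For $\ell\in Q_{u_1}\setminus P(u_1)$, the cross edges travel through $Q_{u_1}$ only along $P(u_1)$ and so avoid $\ell$, yielding (c). In both (b) and (c) the leaf indicators coincide because the leaves of $Q_{u_j}$ and $Q_{u_1}$ are exactly the leaves of $Q_u$ contained in those subtrees.

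The main obstacle is (d), which converts a count into a harmonic-number bound. For $\ell=l(u_1)$, both $w^u(\ell)$ and $w^{u_1}(\ell)$ equal $1$ (only the leaf indicator contributes, since $l(u_1)$ has degree $1$ in $T_i$ and is never internal on any edge), so the bound is trivial. For $\ell\in P(u_1)\setminus\{l(u_1)\}$, each of the $p-1$ cross edges $e_{u_j}$ ($j\geq 2$) traverses $\ell$ internally, giving $w^u(\ell)=w^{u_1}(\ell)+(p-1)$. The key auxiliary step I would prove is the lower bound $w^{u_1}(\ell)\geq d_\ell-1$: the node $\ell$ has $d_\ell-2$ non-marked children in $T_i$, and for each such child $\ell'$ the edge $e_{\ell'}\in W^{u_1}$ has both endpoints $l(\ell')$ and $l(\ell)=l(u_1)$ inside $Q_{u_1}$ and traverses $\ell$ internally; additionally $e_{u_1}$ itself traverses $\ell$ since $\ell\in P(u_1)\setminus\{l(u_1)\}$. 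Plugging this lower bound into
\[ H(w^u(\ell))-H(w^{u_1}(\ell))=\sum_{i=1}^{p-1}\frac{1}{w^{u_1}(\ell)+i}\leq \sum_{i=1}^{p-1}\frac{1}{d_\ell-1+i}=\sum_{j=2}^{p}\frac{1}{d_\ell+j-2} \]
yields the desired bound, completing (d). The main delicacy is the counting for the lower bound on $w^{u_1}(\ell)$: one must verify that the $d_\ell-2$ edges coming from non-marked children of $\ell$ and the extra edge $e_{u_1}$ are all distinct and all belong to $W^{u_1}$, which uses only that $\ell\in P(u_1)$ and the structural description of $W^{u_1}$ from the first paragraph.
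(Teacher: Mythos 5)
Your overall strategy is the same as the paper's: decompose $W^u$ as the union of $W^{u_1},\dots,W^{u_p}$ with $e_{u_1}=e_u$ and the cross edges $e_{u_2},\dots,e_{u_p}$ joining $l(u_j)$ to $l(u_1)$, read off (a)--(c) by tracing which paths pass through which nodes, and prove (d) via $w^u(\ell)=w^{u_1}(\ell)+(p-1)$ together with the lower bound $w^{u_1}(\ell)\geq d_\ell-1$ obtained from the $d_\ell-2$ non-marked children of $\ell$ plus the edge $e_{u_1}$. All of that matches the paper's proof (the paper gets the same lower bound by applying (a) to $Q_\ell$ and using $W^\ell\subseteq W^{u_1}$).

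There is, however, one genuine problem: your treatment of the case $\ell=l(u_1)$ in (d). You claim $w^u(l(u_1))=w^{u_1}(l(u_1))=1$ because a leaf of $T_i$ is never an \emph{internal} node of any path. This follows the definition of $w^u$ as literally written, but it is not the convention the argument actually needs. When the witness tree on final Steiner nodes is lifted back to a terminal spanning tree (Section~\ref{sec:removing-terminals}), an edge $(f,f')$ becomes $(\mathtt{rep}(f),\mathtt{rep}(f'))$, whose path in $T$ contains $f$ and $f'$ as internal nodes; so an edge of $W^u$ must contribute to $w^u(\ell)$ whenever $\ell$ lies on the corresponding path in $T_i$, \emph{endpoints included}. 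This is the convention used in the base case of Lemma~\ref{lem:invariant} ($w^u(u)=2$ for a leaf) and in the paper's own proof of this lemma ($w^\ell(\ell)=2$: the edge $e_\ell$ plus the leaf indicator). Under it, $w^{u_1}(l(u_1))=2$ and $w^u(l(u_1))=2+(p-1)$, so the leaf case of (d) is not trivial: you need $d_{l(u_1)}=3$ (the convention for leaves) and $w^{u_1}(l(u_1))\geq 2=d_{l(u_1)}-1$, after which the same harmonic estimate
\begin{equation*}
H(w^u(\ell))-H(w^{u_1}(\ell))=\sum_{j=2}^{p}\frac{1}{w^{u_1}(\ell)+j-1}\leq\sum_{j=2}^{p}\frac{1}{d_\ell+j-2}
\end{equation*}
closes the case. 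If instead you kept your reading consistently, every leaf would receive $w$-value $1$, which undercounts the true imposed values and would break the base case of Lemma~\ref{lem:invariant} and the final accounting in Theorem~\ref{thm:witness-tree}. The fix is local, but as written this case of (d) is justified by a false equality rather than by the bound that is actually required.
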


\begin{proof}
    We first observe that $W^u$ is given by the disjoint union of $W^{u_1} \cup W^{u_2} \cup \ldots \cup W^{u_p}$. In fact, for $j=2, \ldots, p$, the edge $e_{u_j} \in W^{u_j}$ is precisely the edge with endpoints $l(u_1)$ and $l(u_j)$, and the edge $e_{u_1} \in W^{u_1}$ corresponds to $e_u$. Given these observations, $ w^u(u) = p$ as the edges of $W^u$ which contribute to $w^u(u)$ are precisely $e_{u_1}, \dots, e_{u_p}$. Hence (a) holds. 
    
    Statements (b) and (c) follow by observing that the only nodes whose $w^u$-value differs from the corresponding $w^{u_j}$-value are in $Q_{u_1}$, and in particular they are precisely the nodes of the path $P(u_1)$. It is not hard to see that for every $\ell \in P(u_1)$, we have $w^u(\ell) = w^{u_1}(\ell) + p-1$. Thus, we get that
    \begin{equation*}
        H(w^u(\ell)) - H(w^{u_1}(\ell)) = \sum_{j= 2}^{p} \frac{1}{w^{u_1}(\ell) + j - 1}.
    \end{equation*}
    We will now compute a lower bound for $w^{u_1}(\ell)$. If $\ell \neq l(u_1)$, then we can apply statement (a), which we just proved, to $Q_{u_\ell}$, and get that $d_{\ell} -1 = w^{\ell}(\ell) \leq w^{u_1}(\ell)$. Hence (d) holds in this case. We now consider the case $\ell = l(u_1)$, i.e., $\ell$ is a leaf of $Q_{u_1}$. In this case, we have $d_{\ell} =3$ and $w^{\ell}(\ell) =2$, where the latter equality holds since there is a unique edge $e_{\ell} \in W^{\ell}$ and $\ell$ is a leaf, which means that its $w$-value is increased by $1$. Thus, statement (d) holds in this case as well. This concludes the proof.
\end{proof}

The following lemma is the main technical part of our analysis.

\begin{lemma}\label{lem:invariant}
For each node $u \in T_i \setminus r_i$, and $\delta = 7/120$, we have
\begin{equation*}
h(Q_u) + \sum_{\ell \in P(u)} \frac{1}{d_{\ell}} + \delta < 1.8917 \cdot |Q_u|.
\end{equation*}
\end{lemma}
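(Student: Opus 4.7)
The plan is to prove the lemma by structural induction on $Q_u$ (equivalently, by induction on $|Q_u|$). For the base case, $u$ is a leaf of $T_i$: then $Q_u = \{u\}$, $P(u) = \{u\}$, $d_u = 3$, the witness subtree $W^u$ consists only of the edge $e_u$, and so $w^u(u) = 2$ and $h(Q_u) = H(2) = 3/2$. A direct computation yields
\[
\frac{3}{2} + \frac{1}{3} + \frac{7}{120} \;=\; \frac{227}{120} \;<\; 1.8917,
\]
settling the base case.

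For the inductive step, $u$ is a non-final Steiner node with children $u_1, \dots, u_p$, where $u_1$ is the marked child. Combining the four parts of Lemma~\ref{lem:increase} gives
\[
h(Q_u) \;\leq\; H(p) + \sum_{j=1}^{p} h(Q_{u_j}) + \sum_{\ell \in P(u_1)} \sum_{j=2}^{p} \frac{1}{d_\ell + j - 2}.
\]
I would then apply the inductive hypothesis to each $u_j$ and exploit two structural facts: (i) $P(u) = \{u\} \cup P(u_1)$ with $d_u = p+1$; and (ii) by the minimum-weight choice of $P(u_1)$ in step (3) of Algorithm~\ref{alg:computing_w}, $\sum_{\ell \in P(u_1)} 1/d_\ell \leq \sum_{\ell \in P(u_j)} 1/d_\ell$ for every $j \geq 2$. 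After rearranging using $|Q_u| = 1 + \sum_j |Q_{u_j}|$, the induction reduces to the single per-$p$ inequality
\[
H(p) + \frac{1}{p+1} - (p-1)\delta + \sum_{\ell \in P(u_1)} g_p(d_\ell) \;\leq\; 1.8917,
\]
where $g_p(d) := \sum_{j=2}^{p} \Bigl[\frac{1}{d + j - 2} - \frac{1}{d}\Bigr] \leq 0$, with equality only when $p = 2$.

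The main obstacle will be verifying this per-$p$ inequality. For $p = 2$ the $g_p$-term vanishes and it reduces to $\frac{3}{2} + \frac{1}{3} - \frac{7}{120} = \frac{213}{120} < 1.8917$. For $p \geq 3$, the harmonic term $H(p)$ alone eventually exceeds $1.8917$, so savings from the (negative) $g_p$-term are essential. The key observation is that $l(u_1) \in P(u_1)$ is always a leaf with $d_{l(u_1)} = 3$, contributing by itself $|g_p(3)| = \frac{p-2}{3} - (H(p+1) - H(3))$, which grows essentially linearly in $p$; adding further nodes to $P(u_1)$ only contributes additional non-positive terms, and hence cannot hurt. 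The worst case is therefore $|P(u_1)| = 1$ with $d_{l(u_1)} = 3$, and within this worst case the tightest value of $p$ is $p = 4$, where $|g_4(3)| = 13/60$ yields LHS exactly $\frac{227}{120}$, just beating $1.8917$ with a slack of roughly $3.3 \times 10^{-4}$. For $p \geq 5$, the linear growth of $(p-1)\delta + |g_p(3)|$ comfortably outpaces the logarithmic growth of $H(p) + 1/(p+1)$, leaving substantial slack. I would therefore verify the cases $p \in \{3, 4\}$ by direct arithmetic and handle $p \geq 5$ uniformly via this linear-vs-logarithmic dominance.
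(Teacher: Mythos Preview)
Your proposal is correct and follows essentially the same route as the paper: induction on $|Q_u|$, the same base-case computation $H(2)+1/3+\delta=227/120$, the same use of Lemma~\ref{lem:increase} together with the minimality of $P(u_1)$ to reduce to a per-$p$ inequality, and the same identification of the leaf $l(u_1)$ with $d=3$ as the binding term. Your function $g_p(d)$ and the ``worst case $|P(u_1)|=1$'' argument are just a repackaging of the paper's step of bounding $\sum_{j=2}^{p}\frac{1}{d_\ell+j-2}\le\frac{p-1}{d_\ell}$ for all $\ell\neq l(u)$; both arrive at exactly $\psi(p)=2H(p+1)-H(2)-(p-1)(1/3+\delta)$, and the paper closes by showing $\psi(p+1)-\psi(p)<0$ for $p\ge 4$ (your linear-vs-logarithmic dominance made precise) and checking $p\in\{1,2,3,4\}$ directly.
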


\begin{proof}
The proof is by induction on $|Q_u|$. The base case is $|Q_u| =1$, in which case $u$ is leaf of $T_i$. In this case, we have $d_u =3$, $w^u(u)=2$, and $P(u)=\{u\}$, giving
\begin{equation*}
h(Q_u) + \frac{1}{3}  + \delta  = H(2)  + \frac{1}{3} + \delta = 1.891\bar{6} < 1.8917.
\end{equation*}
    
For the induction step, let $u$ be a non-final Steiner node of $T_i$, let $u_1, \dots, u_p$ be its children, and let $u_1$ be its marked child. By Lemma~\ref{lem:increase} we have that
\begin{equation*}
h(Q_u) = \sum_{j=1}^p h(Q_{u_i}) + \sum_{\ell \in P(u_1)} \left(H(w^{u}(\ell)) - H(w^{u_1}(\ell))\right)+ H(p). 
\end{equation*}
By the induction hypothesis, for each $j \in [p]$ we have $h(Q_{u_j}) + \sum_{\ell \in P(u_j)} \frac{1}{d_{\ell}} + \delta < 1.8917 \cdot |Q_{u_j}|$. Combining these we get
\begin{align*}
    h(Q_u) &< 1.8917 \sum_{j=1}^p|Q_{u_j}| - \delta p - \sum_{j=1}^p \sum_{\ell \in P(u_j)} \frac{1}{d_{\ell}} + \sum_{\ell \in P(u_1)} (H(w^{u}(\ell)) - H(w^{u_1}(\ell)))+ H(p)\\
    &\leq 1.8917 \sum_{j=1}^p|Q_{u_j}| - \delta p - \sum_{j=1}^p \sum_{\ell \in P(u_j)} \frac{1}{d_{\ell}} + \sum_{\ell \in P(u_1)} \sum_{j = 2}^p \frac{1}{d_\ell + j - 2} + H(p),
\end{align*}
where the last inequality follows due to statement (d) of Lemma~\ref{lem:increase}. We now observe that, due to the greedy choice of the path $P(u)$ at step (3) of Algorithm \ref{alg:computing_w}, for each $j \in [p]$ we have
\begin{equation*}
    \sum_{\ell \in P(u_j) \setminus F} \frac{1}{d_{\ell}} + 1 \geq \sum_{\ell \in P(u_1) \setminus F} \frac{1}{d_{\ell}} + 1,
\end{equation*}
since there is exactly one final Steiner node in each path $P(u_j)$, $j \in [p]$. Thus, we conclude that $\sum_{\ell \in P(u_j)} \frac{1}{d_{\ell}} \geq \sum_{\ell \in P(u_1)} \frac{1}{d_{\ell}}$ for every $j \in [p]$. Therefore, $\sum_{j=1}^p \sum_{\ell \in P(u_j)} \frac{1}{d_{\ell}} \geq p \sum_{\ell \in P(u_1)} \frac{1}{d_{\ell}}$. Moreover, for every $\ell \in P(u_1)$, we have $\sum_{j = 2}^p \frac{1}{d_\ell + j - 2} \leq \frac{p-1}{d_\ell}$. Combining these we get
\begin{align*}
    h(Q_u) &< 1.8917 \sum_{j=1}^p|Q_{u_j}| - \delta p  - p \sum_{\ell \in P(u_1)} \frac{1}{d_{\ell}} + (p-1)\sum_{\ell \in P(u_1) \setminus \{l(u)\}} \frac{1}{d_\ell} + \sum_{j = 2}^p \frac{1}{d_{l(u)} + j - 2} + H(p)\\
    &\leq 1.8917 \sum_{j=1}^p|Q_{u_j}| - \delta p - \sum_{\ell \in P(u_1)} \frac{1}{d_{\ell}} - \frac{p-1}{d_{l(u)}} + \sum_{j = 2}^p \frac{1}{d_{l(u)} + j - 2} + H(p).
\end{align*}
Substituting $d_{l(u)}=3$ into this equation, we get
\begin{equation*}
    h(Q_u) < 1.8917 \sum_{j=1}^p|Q_{u_j}| - \delta p - \sum_{\ell \in P(u_1)} \frac{1}{d_{\ell}} - \frac{p-1}{3} + \sum_{j=2}^{p} \frac{1}{j+1} + H(p).
\end{equation*}
Rearranging, and using  $\sum_{\ell \in P(u)} \frac{1}{d_{\ell}} = \sum_{\ell \in P(u_1)} \frac{1}{d_{\ell}} + \frac{1}{d_{u}}$ and $d_u = p+1$, we get
\begin{align*}
    h(Q_u) + \sum_{\ell \in P(u)} \frac{1}{d_{\ell}} + \delta &< 1.8917 \sum_{j=1}^p|Q_{u_j}| - \delta (p-1) - \frac{p-1}{3} + \sum_{j=2}^{p} \frac{1}{j+1} + H(p) + \frac{1}{d_u}\\
    &= 1.8917 \sum_{j=1}^p|Q_{u_j}| - \delta (p-1) - \frac{p-1}{3} + \sum_{j=2}^{p} \frac{1}{j+1} + H(p+1).
\end{align*}
We will now show that
\begin{equation*}
    -\delta (p-1) - \frac{p-1}{3} + \sum_{j=2}^{p} \frac{1}{j+1} + H(p+1) \leq 1.8917.
\end{equation*}
For that, we define
\begin{equation*}
    \psi(p) \coloneqq 2H(p+1) - (p-1)\left(\frac{1}{3} + \delta \right) - H(2).
\end{equation*}
Note that the left-hand side of the inequality above is equal to $\psi(p)$. We now show that $\psi(p) \leq \psi(4)$ for every $p \in \mathbb{N}_{>0}$. For that, we first consider $p \geq 4$, and we have
\begin{align*}
    \psi(p+1) - \psi(p) &= 2H(p+2) - p\left(\frac{1}{3} + \delta \right) - H(2) - \left(2H(p+1) - (p-1)\left(\frac{1}{3} + \delta \right) - H(2) \right)\\
            &= \frac{2}{p+2} - \frac{1}{3} - \delta \leq \frac{2}{6} - \frac{1}{3} - \delta = -\delta < 0.
\end{align*}
Thus, $\psi(p)$ is decreasing for $p \geq 4$. This means that $\max_{p \in \mathbb{N}_{>0}} \psi(p) = \max\{\psi(1), \psi(2), \psi(3), \psi(4)\}$. It is easy to verify that $\max_{p \in \mathbb{N}_{>0}} \psi(p) = \psi(4) = 1.891\bar{6} < 1.8917$. Putting everything together, we conclude that
\begin{equation*}
    h(Q_u) + \sum_{\ell \in P(u)} \frac{1}{d_{\ell}} + \delta < 1.8917 \cdot |Q_u|.
\end{equation*}
\end{proof}

Finally, we observe that $W_i = W^{r_i}$. With the above lemma at hand, it is not difficult to show that merging $W_i$ with the current $W$ does not increase the average $H(w)$-value by too much. Once again, let $T'$ be the union of $T_1, \dots, T_{i-1}$, let $w'$ be the vector imposed by the current $W'$ before adding $W_i$, and $w''$ be the vector imposed by $W' \cup W_i$.

\begin{lemma}\label{lem:finalbound}
If $\sum_{u \in T'} \frac{H(w'(u))}{|T'|} < 1.8917$, then $\sum_{u \in T'\cup T_i} \frac{H(w''(u))}{|T' \cup T_i|} < 1.8917$.
\end{lemma}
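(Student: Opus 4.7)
The plan is to apply Lemma~\ref{lem:invariant} at the unique child $v$ of $r_i$ in $T_i$ (we are in Case~2, so $v \notin F$), and then carefully account for the only interaction between $W'$ and $W_i$, which takes place at the shared node $r_i$.

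First I would check that $W^v = W_i$. Since $r_i$ is a final Steiner node, $l(r_i) = r_i$; since $v$ is the only child of $r_i$ in $T_i$, we get $a(v) = r_i$ and hence $e_v = (l(v), r_i)$. Because $r_i$ has degree~$1$ in $T_i$, this is the unique edge of $W_i$ incident to $r_i$, and every other edge of $W_i$ has both endpoints in $Q_v = T_i\setminus\{r_i\}$. Hence $W_i$ coincides with $W^v$. From this I would then verify that $h(Q_v) = \sum_{\ell\in T_i\setminus\{r_i\}} H(w''(\ell))$: every node $\ell\in Q_v$ is separated from $T'$ by $r_i$, so $w''(\ell)$ only ``sees'' edges of $W_i = W^v$; and by the structure of a final-component, the leaves of $Q_v$ coincide with the final Steiner nodes in $Q_v$ (the only final nodes in $T_i$ are its root and leaves, and $v$ itself is not a leaf because $v\notin F$), so the local indicator $\mathbbm{1}[\ell \text{ leaf of } Q_v]$ agrees with the global indicator $\mathbbm{1}[\ell\in F]$.

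At the shared node $r_i$, the only edge of $W_i$ whose $T_i$-path passes through $r_i$ is $e_v$, so $w''(r_i) = w'(r_i)+1$. Assuming $T'\neq\emptyset$, $r_i$ appears as a leaf in some previously-processed final-component, so it is an endpoint of at least one edge of $W'$; combined with $\mathbbm{1}[r_i\in F]=1$, this gives $w'(r_i)\geq 2$, and therefore $H(w''(r_i))-H(w'(r_i)) = \frac{1}{w'(r_i)+1}\leq \frac{1}{3}$.

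Putting these pieces together with Lemma~\ref{lem:invariant} applied at $u=v$ and the inductive hypothesis,
\begin{equation*}
\sum_{u\in T'\cup T_i} H(w''(u)) \;<\; 1.8917\,|T'| + \tfrac{1}{3} + 1.8917(|T_i|-1) - \sum_{\ell\in P(v)} \tfrac{1}{d_\ell} - \delta.
\end{equation*}
The key numerical point, and essentially the only thing left to check, is that the leaf $l(v)\in P(v)$ has $d_{l(v)}=3$, so $\sum_{\ell\in P(v)} 1/d_\ell\geq 1/3$, which exactly cancels the $1/3$ coming from $r_i$ and leaves a strict slack of $\delta = 7/120 > 0$; since $|T'\cup T_i| = |T'|+|T_i|-1$, dividing through gives the desired strict bound. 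The edge case $T'=\emptyset$ (first iteration) is handled by the same computation, with $w''(r_i)=2$ contributing $H(2)$ directly and one verifies $H(2)-1/3-\delta < 1.8917$ with ample slack. The only subtle step is the local/global equivalence $w^v = w''$ on $Q_v$, which is precisely what the two observations about $W_i = W^v$ and about leaves of $Q_v$ lying in $F$ are designed to establish.
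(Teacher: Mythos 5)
Your proof is correct and follows essentially the same route as the paper: apply Lemma~\ref{lem:invariant} to $Q_v$ for the unique child $v$ of $r_i$, bound the increase at the shared node $r_i$ by $H(w''(r_i))-H(w'(r_i))\le 1/3$ using $w'(r_i)\ge 2$, and cancel this against $\sum_{\ell\in P(v)}1/d_\ell+\delta\ge 1/3$ (which holds since $d_{l(v)}=3$), treating $T'=\emptyset$ separately. Your explicit verification that $W^v=W_i$ and that the local vector $w^v$ agrees with the global $w''$ on $Q_v$ fills in details the paper leaves implicit, but the argument is the same.
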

\begin{proof}
    We do some case analysis. We first consider the case of $i = 1$, which means that $r_i = r$ and $W' = \emptyset$. In this case, since $T_i \setminus \{r_i\} = Q_v$, we can apply Lemma~\ref{lem:invariant} and get that $\sum_{\ell \in T_i \setminus \{r_i\}} H(w''(\ell)) < 1.8917(|T_i|-1) - \sum_{\ell \in P(v)} \frac{1}{d_{\ell}} - \delta$. Moreover, we have $w''(r_i) = 2$. Thus, we get
    \begin{align*}
        \sum_{u \in T'\cup T_i} \frac{H(w''(u))}{|T' \cup T_i|} &= \frac{\sum_{\ell \in T_i \setminus \{r_i\}} H(w''(\ell)) + H(2)}{|T' \cup T_i|} < \frac{1.8917 (|T_i|-1)  + H(2)}{|T' \cup T_i|}  < 1.8917.
    \end{align*}

    We now turn to the case of $i > 1$, which means that $W' \neq \emptyset$. In this case, we note the following:
    \begin{enumerate}
        \item $w''(\ell) = w'(\ell)$ for all $\ell \in T_i \setminus r_i$,
        \item $w''(r_i) = w'(r_i) + 1$, and 
        \item $w'(u)=w''(u)$ for all $u \in T'\setminus \{r_i\}$.
    \end{enumerate}
    Since $r_i$ is a leaf of $T'$, we have $w'(r_i) \geq 2$, which implies that $H(w''(r_i)) \leq H(w'(r_i)) + \frac{1}{3}$. Moreover, since $T_i \setminus \{r_i\} = Q_v$, we can apply  Lemma~\ref{lem:invariant} to get $ \sum_{\ell \in T_i \setminus \{r_i\}} H(w''(\ell)) < 1.8917(|T_i|-1) - \sum_{\ell \in P(v)} \frac{1}{d_{\ell}} - \delta$. 
    Furthermore, $ \sum_{\ell \in P(v)} \frac{1}{d_{\ell}} + \delta \geq \frac{1}{3}$, as $d_{l(v)} = 3$ and $l(v) \in P(v)$. Hence we get that
    \begin{align*}
        \sum_{u \in T'\cup T_i} \frac{H(w''(u))}{|T' \cup T_i|} &\leq \frac{\sum_{\ell \in T_i \setminus \{r_i\}} H(w''(\ell)) + 1/3  + \sum_{u \in T'} H(w'(u))}{|T' \cup T_i|}\\
        &< \frac{1.8917 (|T_i|-1)  + 1.8917|T'|}{|T' \cup T_i|}  = 1.8917.
    \end{align*}
\end{proof}


\subsection{The proof of Theorem~\ref{thm:witness-tree}}

We are now ready to put everything together and prove Theorem~\ref{thm:witness-tree}.

\begin{proof}[Proof of Theorem~\ref{thm:witness-tree}]
Let $G = (R \cup S, E)$ be a CA-Node-Steiner-Tree instance, where $R$ is the set of terminals. Let $T = (R \cup S^*, E^*)$ be an optimal solution in which the terminals are leaves. By the discussion of Section~\ref{sec:removing-terminals}, it suffices to construct a tree $W = (F, E_W)$ spanning the set of final Steiner nodes of $T$, whose imposed $w$-value on $v \in S^*$ is defined as
\begin{equation*}
    w(v) \coloneqq |\{(p,q) \in E_W: v \textrm{ is a node of the path between }p \textrm{ and }q \textrm{ in } T[S^*]\}| + \mathbbm{1}[v \in F].
\end{equation*}
Following the discussion of Section~\ref{sec:decomposition}, we root $T$ at an arbitrary leaf $r$, and decompose it into a set of final-components $T_1, \ldots, T_\tau$, with the property that $r \in T_1$ and for every $i \in [\tau]$, $\bigcup_{j \leq i} T_j$ is connected. We process the trees in increasing index order and do the following. We maintain a tree $W'$ spanning the final Steiner nodes of $T' = \bigcup_{j < i} T_j$, and during iteration $i$, we compute a tree $W_i$ spanning the final Steiner nodes of $T_i$ and then merge it with $W'$, as discussed in Section~\ref{sec:main-analysis}. The resulting tree indeed is a tree spanning the final Steiner nodes of $\bigcup_{j \leq i} T_j$, as demonstrated in Section~\ref{sec:main-analysis}. The analysis we did now shows that after each iteration $i$, the resulting tree spanning the final Steiner nodes of $\bigcup_{j \leq i} T_j$ satisfies the desired bound of $1.8917$. More precisely, by setting $\gamma' = 1.8917$ in Lemma~\ref{lem:h3} and by using Lemma~\ref{lem:finalbound}, we get that after each iteration, the resulting tree has an average $H(w)$-value that is strictly smaller than $1.8917$. Thus, a straightforward induction shows that the final tree $W$ will have an average $H(w)$-value that is strictly smaller than $1.8917$ and will span the final Steiner nodes of $T$. This means that $\gamma_G < 1.8917$, which implies that $\gamma \leq 1.8917$.
\end{proof}

\section{Improved approximation for leaf-adjacent Block-TAP}
\label{sec:leaf-adjacent-upper-bound}

In this section, we consider Block-TAP instances where at least one endpoint of every link is a leaf; we call such instances \emph{leaf-adjacent} Block-TAP instances. We note that leaf-adjacent Block-TAP is a generalization of the case where both endpoints are leaves, often called \emph{leaf-to-leaf} Block-TAP. For leaf-to-leaf Block-TAP, Nutov gave a $1.6\bar{6}$-approximation~\cite{DBLP:journals/corr/abs-2009-13257}. Here, we give a $(1.8\bar{3} + \varepsilon)$-approximation for the more general setting of leaf-adjacent Block-TAP.

Throughout this section, we assume that we are given a Block-TAP instance, where $T=(V_T, E_T)$ is the input tree, $R_T \subseteq V_T$ is its set of leaves and $L \subseteq \binom{V_T}{2}$ is the set of links, such that $\ell \cap R_T \neq \emptyset$ for every $\ell \in L$. Using the reduction to CA-Node-Steiner-Tree (see Theorem~\ref{cor:approx-preserve-block}), we get a CA-Node-Steiner-Tree instance $G= (R\cup S,E)$ that satisfies the following property: every Steiner node in $G$ is adjacent to at least one terminal. We call such instances \emph{leaf-adjacent} CA-Node-Steiner-Tree instances.

We start by observing that any leaf-adjacent CA-Node-Steiner-Tree instance $G = (R \cup S, E)$ has an optimal solution $T^* = (R \cup S^*, E^*)$ such that every Steiner node $s \in S^*$ is adjacent to at least one terminal in $T^*$. To see this, suppose that there is a Steiner node $s \in S^*$ that is not adjacent to any terminal in $T^*$. Since $s$ is adjacent to some terminal $r \in R$ in $G$, we add the edge $(s,r)$ to $T^*$. This creates a cycle that goes through $s$. By removing the edge of the cycle that is adjacent to $s$ and is not equal to $(s,r)$, we end up with a different optimal solution where $s$ is adjacent to a terminal. This shows that we can transform any optimal solution to a solution that satisfies the desired property, namely, that every Steiner node is adjacent to a terminal. Thus, from now on we assume that $T^*$ satisfies this property.

Our analysis consists of demonstrating that the procedure laid out in Section~\ref{sec:main-analysis} finds a witness tree $W$ for $T^*$ such that if $w: S^* \to \mathbb{N}$ is the vector imposed on $S^*$ by $W$, then we have $\E[H(w)] \leq H(3)$. To see this, we first transform $T^*$ to a forest as follows. In case there is a terminal $r \in R$ of $T^*$ that is not a leaf, we split the tree $T^*$ at $r$ by first removing $r$, and then adding back a copy of $r$ as a leaf to the $d(r)$ trees of the forest $T^* \setminus \{r\}$, where $d(r)$ is the degree of $r$ in $T^*$. By performing this operation for all terminals whose degree is larger than $1$, we end up with a forest, where each terminal $r \in R$ appears in $d(r)$ trees, in total, such that for each tree of the forest, every final Steiner node is adjacent to a terminal and moreover, all terminals are leaves. Since each Steiner node belongs to exactly one tree of this forest, it is easy to see that if we compute a terminal spanning tree for each tree of the resulting forest, and take the union of these trees, then the $w$-value of each Steiner node imposed by the final terminal spanning tree is the same as the $w$-value imposed by the terminal spanning tree computed for the particular tree of the forest which the Steiner node belongs to. Thus, without loss of generality, we assume from now on that $T^*$ satisfies both desired properties, namely that every Steiner node is adjacent to a terminal and every terminal is a leaf.

We now remove the terminals from $T^*$ as in Section~\ref{sec:removing-terminals} and decompose the remaining tree $T^*[S^*]$, into final-components as in Section~\ref{sec:decomposition}, which we denote $T^*_1,\dots, T^*_\tau$. It then remains to see that when we construct the witness trees $\{W_i\}_{i=1}^\tau$ and merge them together to create the witness tree $W$, every tree in $\{T^*_i\}_{i=1}^\tau$ falls under Case 1 of the analysis. In particular, this implies that we only apply Lemma~\ref{lem:h3} when merging witness trees. Since the guarantee of Lemma~\ref{lem:h3} holds for any $\gamma' \geq H(3)$, we can use it with $\gamma' = H(3)$, thus giving the desired bound. 

More precisely, observe that each node of $T^*[S^*]$ is a final node by the construction of $T^*$, so the final-components $T^*_1,\dots, T^*_\tau$ are in fact the edges of $T^*[S^*]$. When the analysis of Section~\ref{sec:witness-tree} considers a fixed final-component $T^*_i$ rooted at $r_i$ with $(r_i,v)$ being the unique edge incident to $r_i$ inside $T^*_i$, we know that $v$ is a final node by construction, so we only consider Case 1 when analyzing the change of the average $H(w)$-value. Thus, we obtain the following theorem.

\begin{theorem}\label{thm:leaf-adjacent}
Given is a leaf-adjacent CA-Node-Steiner-Tree instance $G = (R \cup S, E)$, and let $T^* = (R \cup S^*, E^*)$ be an optimal Steiner tree such that each Steiner node $s \in S^*$ is adjacent to at least one terminal in $T^*$. Then, there exists a witness tree $W$ such that $\E[H(w)]\leq H(3)$, where $w$ is the vector imposed on $S^*$ by $W$.
\end{theorem}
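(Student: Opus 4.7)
The plan is to follow almost verbatim the construction already outlined in the discussion preceding the theorem statement, and to argue that in the leaf-adjacent setting the witness-tree analysis of Section~\ref{sec:witness-tree} collapses into its simplest sub-case. First I would justify the two structural normalizations on $T^*$: namely, that we may assume every Steiner node in $T^*$ is adjacent to at least one terminal (using the local swap described in the paragraph above: add the edge $(s,r)$ for a terminal neighbor $r$ of a Steiner node $s$ not adjacent to a terminal in $T^*$, and remove one of the edges of the resulting cycle at $s$, preserving optimality); and that every terminal is a leaf (using the terminal-splitting operation on any internal terminal $r$, which replaces $r$ by $d(r)$ leaf-copies, one per subtree of $T^* \setminus \{r\}$, and which as noted leaves all Steiner-node $w$-values invariant for any terminal spanning tree built component-wise).

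Next I would apply the reductions of Sections~\ref{sec:removing-terminals} and~\ref{sec:decomposition}. Removing the terminals yields the tree $T^*[S^*]$, and decomposing it into final-components gives a collection $T^*_1,\dots,T^*_\tau$ of rooted subtrees whose only final Steiner nodes are the root and the leaves. The central observation is that by our first normalization, \emph{every} node of $T^*[S^*]$ is a final Steiner node. Hence each final-component $T^*_i$ contains only final nodes, and since no final node can be a strictly internal node of a final-component, each $T^*_i$ consists of a single edge $(r_i,v)$ with $v \in F$.

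Because $v \in F$ holds for every $i$, when the iterative merging procedure of Section~\ref{sec:main-analysis} processes each $T^*_i$ we always fall into \emph{Case 1} of that analysis, never into Case 2. Consequently the only averaging lemma needed is Lemma~\ref{lem:h3}, which is valid for any $\gamma' \geq H(3)$, and in particular for $\gamma' = H(3)$ itself. I would then carry out a straightforward induction on $i$: the base case $i=1$ gives $W_1$ equal to a single edge with both endpoints having $w$-value equal to $2$, hence average $H$-value exactly $H(2) < H(3)$; and the inductive step applies Lemma~\ref{lem:h3} with $\gamma' = H(3)$ to conclude that after merging $W_i$ into the current $W'$ the average $H(w)$-value remains strictly below $H(3)$, and hence at most $H(3)$ in the limit.

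Finally I would translate back to the $w$-vector imposed on $S^*$ by the corresponding terminal spanning tree $W$ (obtained via the $\mathtt{rep}(\cdot)$ construction of Section~\ref{sec:removing-terminals}), noting that this only accounts for the $+\mathbbm{1}[v \in F]$ adjustment already built into the definition of the imposed vector in Section~\ref{sec:main-analysis}. The main (and essentially only) conceptual point is the collapse ``every Steiner node is final $\Rightarrow$ every final-component is a single edge $\Rightarrow$ only Case 1 is ever triggered''; once this is observed, everything else is a direct specialization of the machinery from Section~\ref{sec:witness-tree}, so I do not expect a serious obstacle beyond carefully stating the two WLOG reductions on $T^*$.
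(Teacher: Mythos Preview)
Your proposal is correct and follows essentially the same approach as the paper: reduce to the case where every terminal is a leaf by the splitting operation, observe that since every Steiner node is final each final-component of $T^*[S^*]$ is a single edge, and conclude that only Case~1 of the merging analysis is triggered so that Lemma~\ref{lem:h3} with $\gamma'=H(3)$ yields the bound by induction. One minor remark: the hypothesis that every Steiner node of $T^*$ is adjacent to a terminal is already part of the theorem statement, so you need not re-justify it.
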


The witness tree that achieves the stated value of $\E[H(w)]$ is in fact a very natural one. We perform the operation above to transform $T^*$ into a forest where the terminals are exactly the leaves of $T^*$. For every component, and for every edge $(u,v)$ of that component, there is a corresponding edge in the witness tree between one of the terminals adjacent to $u$ and one adjacent to $v$. Furthermore, if two terminals have the same adjacent Steiner node, then the witness tree has an edge between those two terminals as well.

The above theorem, along with Theorems~\ref{cor:approx-preserve-block} and~\ref{thm:main-approx}, immediately imply Theorem~\ref{thm:approx-leaf-adjacent-block-tap}.

\section{Limitations of the witness tree analysis}
\label{sec:lower-bound}

In this section, we show that we cannot get an approximation factor better than $1.8\bar{3}$ by using (deterministic/randomized) witness trees in the analysis of Algorithm~\ref{alg:rounding-scheme}. Moreover, we show that Algorithm~\ref{alg:computing_w} sometimes finds a witness tree that gives an average $H(w)$-value greater than $1.8504$. This shows that our analysis of the approximation factor of Algorithm~\ref{alg:rounding-scheme} with the witness tree generated by Algorithm~\ref{alg:computing_w} is off by at most $0.0416$.

\subsection{Lower bound for $\gamma$}

In this section, we show that $\gamma \geq H(3)$. To prove this, we construct a family of leaf-adjacent CA-Node-Steiner-Tree instances that are trees, i.e., the optimal Steiner tree is the input graph itself, along with a specific witness tree that we prove to be optimal, such that the average $H(w)$-value is at least $H(3) - \varepsilon$, for any fixed $\varepsilon > 0$. More formally, we prove the following theorem.


\begin{theorem}\label{thm:bad-example-H(3)}
For any $\varepsilon > 0$, there exists a leaf-adjacent CA-Node-Steiner-Tree instance $G_\varepsilon$ such that $\gamma_{G_\varepsilon} > H(3) - \varepsilon$.
\end{theorem}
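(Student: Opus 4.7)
The plan is to exhibit a simple path-like CA-Node-Steiner-Tree instance whose unique optimal Steiner tree forces every spanning tree of the terminals to impose a $w$-value of at least $3$ on almost every Steiner node. For a large integer $n$ depending on $\varepsilon$, I would take $G_\varepsilon$ to consist of a path $s_1 - s_2 - \cdots - s_n$ of Steiner nodes with two pendant terminals $t_i^1, t_i^2$ attached to each $s_i$. First I would verify that $G_\varepsilon$ is a valid leaf-adjacent CA-Node-Steiner-Tree instance: terminals are pairwise non-adjacent; each Steiner node has exactly two adjacent terminals (satisfying Property 2); the neighborhood of every terminal consists of a single vertex and is trivially a clique (Property 3); and every Steiner node is adjacent to a terminal, making the instance leaf-adjacent. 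Since $G_\varepsilon$ is itself a tree spanning all $2n$ terminals, it must be the unique optimal Steiner tree $T$, with $cost(T) = n$.

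The heart of the argument is a component-counting lower bound on $w(s_i)$ for any witness tree $W$. For each Steiner node $s_i$, let $c(s_i)$ denote the number of connected components of $T - s_i$ that contain at least one terminal. I claim that any spanning tree $W$ on $R$ must contain at least $c(s_i) - 1$ edges whose $T$-paths have $s_i$ as an internal node, hence $w(s_i) \geq c(s_i) - 1$. Indeed, contracting each terminal-bearing side of $T - s_i$ into a single vertex turns $W$ into a multigraph on $c(s_i)$ vertices that must be connected (since $W$ spans $R$), and the edges of $W$ that map to non-loops in this multigraph are precisely those whose $T$-path contains $s_i$ as an internal node; a connected graph on $c(s_i)$ vertices has at least $c(s_i) - 1$ edges. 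By the construction, $c(s_i) = 4$ for every $i \in \{2, \ldots, n-1\}$ (the two pendant terminals of $s_i$ together with the two halves of the remaining path), while $c(s_1) = c(s_n) = 3$. Therefore $w(s_i) \geq 3$ for internal $i$ and $w(s_i) \geq 2$ for the two endpoints.

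Summing and using $H(3) - H(2) = \tfrac{1}{3}$, this yields
\begin{equation*}
\gamma_{G_\varepsilon} \;\geq\; \frac{(n-2)\,H(3) + 2\,H(2)}{n} \;=\; H(3) - \frac{2\bigl(H(3) - H(2)\bigr)}{n} \;=\; H(3) - \frac{2}{3n},
\end{equation*}
which is strictly greater than $H(3) - \varepsilon$ as soon as $n > \tfrac{2}{3\varepsilon}$. Since $T$ is the unique optimal Steiner tree and the displayed bound holds uniformly over all spanning trees $W$ on $R$, the bound transfers directly to $\gamma_{G_\varepsilon}$.

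I do not anticipate a genuine obstacle: the argument is a direct pointwise lower bound combined with a counting identity. The only step that warrants care is the component-counting inequality $w(s_i) \geq c(s_i) - 1$, which must be stated in the right formulation of $w$ (whether using the terminal spanning tree directly, or the reformulation over final Steiner nodes from Section~\ref{sec:removing-terminals} with the $\mathbbm{1}[v \in F]$ correction). In either formulation, the harmonic function is monotone increasing and the bound $H(w(s_i)) \geq H(c(s_i)-1)$ survives, so the choice of formulation only changes constants absorbed into the $O(1/n)$ error.
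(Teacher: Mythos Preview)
Your proposal is correct and uses the same construction as the paper (a path of Steiner nodes, each with two pendant terminals), but the lower-bound argument is genuinely different. The paper proceeds by exhibiting a specific ``tree-following'' witness tree $W$, computing its average $H(w)$-value to be $H(3) - \tfrac{2}{3t}$, and then showing via a sequence of exchange arguments that no other witness tree can do better. Your route bypasses this entirely: the component-counting observation $w(s_i) \geq c(s_i) - 1$ gives a pointwise lower bound on every $w(s_i)$ uniformly over all witness trees $W$, so no specific witness tree needs to be constructed and no optimality needs to be verified. This is cleaner and shorter for the purpose of the theorem, which only asks for a lower bound on $\gamma_{G_\varepsilon}$. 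What the paper's approach buys in exchange is the exact value $\gamma_{G_\varepsilon} = H(3) - \tfrac{2}{3t}$ (you only prove $\geq$), and it exhibits the optimal witness tree explicitly, which connects more directly to the upper-bound construction of Section~\ref{sec:leaf-adjacent-upper-bound}. Your hedge about the two formulations of $w$ is unnecessary here: since $\gamma_G$ is defined via terminal spanning trees and $s_i$ is never an endpoint of a terminal-to-terminal path, ``internal node'' is automatic, and your contraction argument goes through verbatim.
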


\begin{proof}
We give an explicit construction that corresponds to a leaf-to-leaf Block-TAP instance, i.e., every Steiner node is adjacent to exactly two terminals. Recall that leaf-to-leaf instances are a specific case of leaf-adjacent instances. Consider the following graph.  Let $t = \lceil \frac{2}{3\varepsilon} \rceil + 1$ and let $S = \{s_1, \ldots, s_t\}$. Let $R = \bigcup_{i = 1}^t\{r_i^{(1)}, r_i^{(2)}\}$. For every $i \in [t-1]$, the node $s_i$ is adjacent to $s_{i+1}$. In other words, the nodes $s_1, \ldots, s_t$ form a path whose endpoints are $s_1$ and $s_t$. Finally, every Steiner node $s_i$ is adjacent to terminals $r_i^{(1)}$ and $r_i^{(2)}$. Let $G_\varepsilon = (R \cup S, E)$ be the resulting CA-Node-Steiner-Tree instance. Observe that there is a unique optimal Steiner tree, which is the graph $G_\varepsilon$ itself. This implies that
\begin{equation*}
    \gamma_{G_\varepsilon} = \frac{1}{t} \min_{\substack{W:\; W\textrm{ is }\\\textrm{witness tree}}} \;\left(\sum_{i = 1}^t H(w(s_i)) \right).
\end{equation*}

We now consider the following witness tree $W$, that simply ``follows" the path $s_1, \ldots, s_t$. More precisely, we define $W = (R, E_W)$ as follows:
\begin{itemize}
    \item There is an edge between $r_i^{(1)}$ and $r_i^{(2)}$ for every $i \in [t]$.
    \item There is an edge between $r_i^{(1)}$ and $r_{i+1}^{(1)}$ for every $i \in [t-1]$.
\end{itemize}
It is easy to see that $w(s_1) = w(s_t) = 2$, and  $w(s_i) = 3$ for every $i \in \{2, \ldots, t-1\}$, and so we get that
\begin{equation*}
    \frac{1}{t} \sum_{i = 1}^t H(w(s_i)) =  \frac{2H(2) + (t-2) H(3)}{t} = H(3) - \frac{2}{3t} > H(3) - \varepsilon.
\end{equation*}

Thus, the only thing remaining to show is that $W$ is an optimal witness tree with respect to minimizing $\sum_{i = 1}^t H(w(s_i))$. For that, let $W^*$ be an optimal witness tree with respect to minimizing $\sum_{i = 1}^t H(w(s_i))$, and let $w^*$ be the vector imposed on $S$ by $W^*$. We first check if $(r_i^{(1)}, r_i^{(2)}) \in W^*$ for every $i \in [t]$. Suppose that there is an $i \in [t]$ such that $(r_i^{(1)}, r_i^{(2)}) \notin W^*$. We now add the edge $(r_i^{(1)}, r_i^{(2)})$ to $W^*$ and a cycle is created. Clearly, there exists an edge $e' \in W^*$ adjacent to $r_i^{(2)}$ such that $W' = (W^* \setminus \{e'\}) \cup \{(r_i^{(1)}, r_i^{(2)})\}$ is a terminal spanning tree. It is also easy to see that $w'(s_j) \leq w^*(s_j)$ for every $j \in [t]$, where $w'$ is the vector imposed on $S$ by $W'$. We conclude that $\sum_{i = 1}^t H(w'(s_i)) \leq \sum_{i = 1}^t H(w^*(s_i))$, and so from now on we assume without loss of generality that $(r_i^{(1)}, r_i^{(2)}) \in W^*$ for every $i \in [t]$.

We now impose some more structure on $W^*$. In particular, we process the terminals in increasing order of index, and for each $i \in [t - 1]$, we replace any edge of the form $(r_i^{(2)}, r_j^{(x)})$, $j > i$ and $x \in \{1,2\}$, with the edge $(r_i^{(1)}, r_j^{(1)})$. It is easy to see that no $w$-value changes, and so from now on we assume without loss of generality that for every $i \in [t]$, the terminal $r_i^{(2)}$ is a leaf of $W^*$ that is connected with $r_i^{(1)}$.

Finally, we turn to the edges $(r_i^{(1)}, r_{i+1}^{(1)})$ that are existent in $W$ for every $i \in [t-1]$. If $W^* \neq W$, then there exist $1 \leq i < j \leq t$, with $j - i > 1$ such that $e = (r_i^{(1)}, r_j^{(1)}) \in W^*$. We remove $e$ from $W^*$, and get two subtrees $W_1^*$ and $W_2^*$, where $r_i^{(1)} \in W_1^*$ and $r_j^{(1)} \in W_2^*$. Without loss of generality, we assume that $r_{i + 1}^{(1)}\in W_1^*$; the case of $r_{i + 1}^{(1)}\in W_2^*$ is handled in a similar way. We now add the edge $e' = (r_{i+1}^{(1)}, r_j^{(1)})$ and obtain a new witness tree $W' = W_1^*\cup W_2^*\cup \{e'\}$. Note that the removal of the edge $e$ causes $w(s_k)$ to decrease by $1$ for every Steiner node $s_k$ with $i \leq k \leq j$, while the addition of the edge $e'$ increases $w(s_k)$ by $ 1$ for every Steiner node $s_k$ with $i+1 \leq k \leq j$. We conclude that $\sum_{i = 1}^t w'(s_i) \leq \sum_{i = 1}^t w^*(s_i)$, where $w'$ is the vector imposed on $S$ by $W'$. Putting everything together, we conclude that $W$ is an optimal witness tree with respect to minimizing $\sum_{i = 1}^t H(w(s_i))$, and so we get that $\gamma_{G_\varepsilon} > H(3) - \varepsilon$.
\end{proof}

An immediate corollary of the above theorem is the following.
\begin{corollary}
$\gamma \geq H(3) = 1.8\bar{3}$.
\end{corollary}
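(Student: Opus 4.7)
The plan is to derive the corollary in essentially one line from Theorem~\ref{thm:bad-example-H(3)} together with the definition of $\gamma$ as a supremum. Recall that
\[
\gamma = \sup\{\gamma_G : G \text{ is a CA-Node-Steiner-Tree instance}\}.
\]
First I would fix an arbitrary $\varepsilon > 0$ and invoke Theorem~\ref{thm:bad-example-H(3)} to produce an instance $G_\varepsilon$ satisfying $\gamma_{G_\varepsilon} > H(3) - \varepsilon$. Since $G_\varepsilon$ is itself a (leaf-adjacent) CA-Node-Steiner-Tree instance, it is a member of the family over which the supremum is taken, so
\[
\gamma \;\geq\; \gamma_{G_\varepsilon} \;>\; H(3) - \varepsilon.
\]

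Next I would observe that this lower bound holds for every $\varepsilon > 0$, so letting $\varepsilon \to 0^{+}$ yields $\gamma \geq H(3)$. Finally, a direct computation gives $H(3) = 1 + \tfrac{1}{2} + \tfrac{1}{3} = \tfrac{11}{6} = 1.8\bar{3}$, which is the numerical bound claimed in the statement.

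There is essentially no obstacle to overcome at this stage, because all of the technical content has already been packaged into Theorem~\ref{thm:bad-example-H(3)}: the explicit ``path of double leaves'' family and the verification that the natural path-following witness tree is optimal for those instances. The corollary is a purely formal consequence of that theorem together with the $\sup$-definition of $\gamma$, so the proof amounts to writing the two displays above and evaluating $H(3)$.
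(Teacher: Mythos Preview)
Your proposal is correct and matches the paper's own treatment: the paper states this corollary as ``an immediate corollary'' of Theorem~\ref{thm:bad-example-H(3)} without further proof, and your argument is exactly the one-line $\varepsilon \to 0$ step from that theorem via the definition of $\gamma$ as a supremum.
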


The above lower bound shows a limitation of our techniques. More precisely, it demonstrates that one cannot hope to obtain a better than $1.8\bar{3}$-approximation by selecting a different (deterministic/randomized) witness tree.

\subsection{Lower bound for the witness tree generated by Algorithm~\ref{alg:computing_w}}

Finally, besides the $1.8\bar{3}$ lower bound, we also give an explicit example of a CA-Node-Steiner-Tree instance for which Algorithm~\ref{alg:computing_w} finds a witness tree that gives an average $H(w)$-value greater than $1.8504$. This shows that our analysis of the approximation factor of Algorithm~\ref{alg:rounding-scheme} with the witness tree generated by Algorithm~\ref{alg:computing_w} is off by at most $0.0416$.

We now describe the CA-Node-Steiner-Tree instance $T = (R\cup S^*, E^*)$, which is a tree; thus the optimal solution is $T$ itself. We will show that Algorithm~\ref{alg:computing_w} finds a witness tree spanning the terminals of $T$ that gives an average $H(w)$-value greater than $1.8504$. We now describe the tree $T$, which consists of five layers:
\begin{itemize}
    \item the $1^{\mathrm{st}}$ layer contains a single node $r$.
    \item the $2^{\mathrm{nd}}$ layer consists of $9$ nodes $\{x_1, \ldots, x_9\}$. For each $i \in [9]$, there is an edge $(r, x_i)$.
    \item the $3^{\mathrm{rd}}$ layer consists of $9$ groups of $5$ nodes, denoted as $\bigcup_{i = 1}^9 \{y_{i1}, \ldots, y_{i5}\}$. For each $i \in [9]$ and $j \in [5]$, there is an edge $(x_i, y_{ij})$.
    \item the $4^{\mathrm{th}}$ layer consists of $9 \times 5$ groups of $4$ nodes, denoted as $\bigcup_{i = 1}^9 \bigcup_{j = 1}^5 \{z_{ij1}, \ldots, z_{ij4}\}$. For each $i \in [9]$, $j \in [5]$ and $k \in [4]$, there is an edge $(y_{ij}, z_{ijk})$.
    \item the $5^{\mathrm{th}}$ layer consists of $9 \times 5 \times  4 \times 2$ terminals, denoted as $R = \bigcup_{i = 1}^9 \bigcup_{j = 1}^5 \bigcup_{k = 1}^4 \{q_{ijk}^{(1)},q_{ijk}^{(2)}\}$. For each $i \in [9]$, $j \in [5]$ and $k \in [4]$, there is an edge $(z_{ijk}, q_{ijk}^{(1)})$ and an edge $(z_{ijk}, q_{ijk}^{(2)})$.
\end{itemize}
There are no other edges contained in $T$. As in Section~\ref{sec:removing-terminals}, the set of final Steiner nodes is the set of vertices in the $4^{\mathrm{th}}$ layer, and since there are exactly two distinct terminals adjacent to each final Steiner node, we remove all the terminals and simply increase the $w$-value of the each final Steiner node by $1$. A pictorial representation of $T \setminus R$ is provided in Figure~\ref{fig:tree-define}.
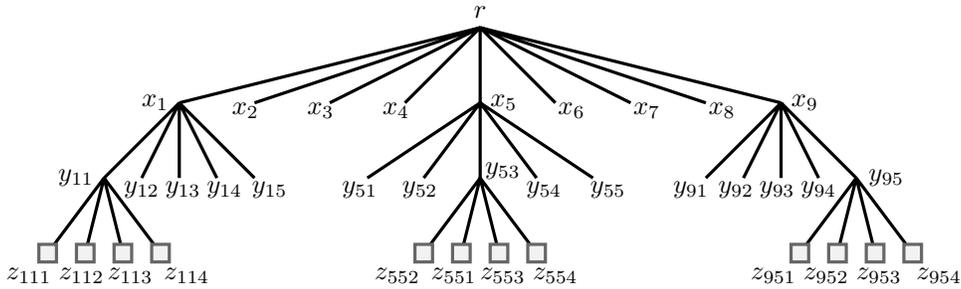
\begin{figure}[ht]
\centering
        \begin{tikzpicture}[]
            \draw[very thick]
                (0,0)  -- (-4,-1) node[anchor=east]{$x_1$}
                (0,0)  -- (-3,-1) 
                          (-2.8,-1.1) node[anchor=east]{$x_2$}
                (0,0)  -- (-2,-1)  
                          (-1.8,-1.1) node[anchor=east]{$x_3$}
                (0,0)  -- (-1,-1) 
                          (-0.8,-1.1) node[anchor=east]{$x_4$}
                (0,0) node[anchor=south]{$r$} -- (0,-1)node[anchor=west]{$x_5$}
                (0,0)  -- (1,-1) 
                          (0.9,-1.1)node[anchor=west]{$x_6$}
                (0,0)  -- (2,-1) 
                          (1.9,-1.1)node[anchor=west]{$x_7$}
                (0,0)  -- (3,-1) 
                          (2.9,-1.1)node[anchor=west]{$x_8$}
                (0,0)  -- (4,-1) node[anchor=west]{$x_9$}

                (-4,-1)  -- (-5,-2) node[anchor=east]{$y_{11}$}
                (-4,-1)  -- (-4.5,-2) 
                            (-4.5,-2.15) node{$y_{12}$}
                (-4,-1)  -- (-4,-2)
                            (-3.95,-2.15) node{$y_{13}$}
                (-4,-1)  -- (-3.5,-2) 
                            (-3.4,-2.15) node{$y_{14}$}
                (-4,-1)  -- (-3,-2) 
                            (-2.8,-2.15) node{$y_{15}$}
                
                (4,-1)  -- (5,-2) 
                           (5.4,-2) node{$y_{95}$}
                (4,-1)  -- (4.5,-2)
                           (4.5,-2.15) node{$y_{94}$}
                (4,-1)  -- (4,-2) 
                           (3.95,-2.15) node{$y_{93}$}
                (4,-1)  -- (3.5,-2) 
                           (3.4,-2.15) node{$y_{92}$}
                (4,-1)  -- (3,-2) 
                           (2.8,-2.15) node{$y_{91}$}
                
                (0,-1)  -- (1.5,-2) 
                           (1.7,-2.15) node{$y_{55}$}
                (0,-1)  -- (0.75,-2) 
                           (0.85,-2.15) node{$y_{54}$}
                (0,-1)  -- (0,-2) 
                            (0.3,-1.9) node{$y_{53}$}
                (0,-1)  -- (-0.75,-2) 
                            (-0.8,-2.15) node{$y_{52}$}
                (0,-1)  -- (-1.5,-2) 
                            (-1.6,-2.15) node{$y_{51}$}
                
                (-5,-2)  -- (-5.75,-3) circle (2pt) node[rectangle, draw=black!60, fill=black!5, very thick, minimum size=1mm]{}
                            (-6,-3.3)  node{$z_{111}$}
                (-5,-2)  -- (-5.25,-3) circle (2pt) node[rectangle, draw=black!60, fill=black!5, very thick, minimum size=1mm]{}
                            (-5.3,-3.3) node{$z_{112}$}
                (-5,-2)  -- (-4.75,-3) circle (2pt) node[rectangle, draw=black!60, fill=black!5, very thick, minimum size=1mm]{}
                            (-4.65,-3.3) node{$z_{113}$}
                (-5,-2)  -- (-4.25,-3) circle (2pt) node[rectangle, draw=black!60, fill=black!5, very thick, minimum size=1mm]{}
                            (-3.9,-3.3) node{$z_{114}$}
                
                (5,-2)  -- (5.75,-3) circle (2pt) node[rectangle, draw=black!60, fill=black!5, very thick, minimum size=1mm]{}
                            (6.1,-3.3) node{$z_{954}$}
                (5,-2)  -- (5.25,-3) circle (2pt) node[rectangle, draw=black!60, fill=black!5, very thick, minimum size=1mm]{}
                            (5.3,-3.3) node{$z_{953}$}
                (5,-2)  -- (4.75,-3) circle (2pt) node[rectangle, draw=black!60, fill=black!5, very thick, minimum size=1mm]{}
                            (4.6,-3.3) node{$z_{952}$}
                (5,-2)  -- (4.25,-3) circle (2pt) node[rectangle, draw=black!60, fill=black!5, very thick, minimum size=1mm]{}
                            (3.9,-3.3) node{$z_{951}$}
                
                (0,-2)  -- (0.75,-3) circle (2pt) node[rectangle, draw=black!60, fill=black!5, very thick, minimum size=1mm]{}
                            (1,-3.3) node{$z_{554}$}
                (0,-2)  -- (0.25,-3) circle (2pt) node[rectangle, draw=black!60, fill=black!5, very thick, minimum size=1mm]{}
                            (0.3,-3.3)node{$z_{553}$}
                (0,-2)  -- (-0.75,-3) circle (2pt) node[rectangle, draw=black!60, fill=black!5, very thick, minimum size=1mm]{}
                            (-1.1,-3.3)node{$z_{552}$}
                (0,-2)  -- (-.25,-3) circle (2pt) node[rectangle, draw=black!60, fill=black!5, very thick, minimum size=1mm]{}
                            (-.35,-3.3)node{$z_{551}$};
                
            
                
            
                             
                             
        \end{tikzpicture}

        \endpgfgraphicnamed
        
      \caption{The CA-Node-Steiner-Tree instance $T$ with its first four layers of Steiner nodes; the set of terminals is absent while final Steiner nodes are depicted with squares. Since $T$ is a tree, the solution to the CA-Node-Steiner-Tree problem is trivially $T$ itself.}
\label{fig:tree-define}
\end{figure}

From now on, we denote the set of vertices of the subtree rooted at $x_i$ as $Q_{x_i}$. To use Algorithm~\ref{alg:computing_w}, we root $T$ at a final Steiner node, and without loss of generality, by the symmetry of $T$, we can root $T$ at $z_{222}$. Similarly, without loss of generality we can let the ordering of the leaves of $T$ selected at step (1) of Algorithm~\ref{alg:computing_w}  coincide with the lexicographic ordering of the leaf indices. With this ordering, for each non-final Steiner node $u\in T$, we find the minimum-weight path $P(u)$ described at step (3) of Algorithm~\ref{alg:computing_w}, which we highlight in Figure~\ref{fig:bad-example}, along with the corresponding $w$-values. 

\input{./fig3.tex}

For each $Q_{x_i}$, $i\in \{3,\dots,9\}$, we have
\begin{equation*}
    \sum_{v\in Q_{x_i}} H(w(v)) = 15H(2)+4H(4)+5H(5)+H(8)+H(9),
\end{equation*}
and for $Q_{x_1}$ we have 
\begin{equation*}
    \sum_{v\in Q_{x_1}} H(w(v)) = 15H(2)+4H(4)+4H(5)+H(12)+H(15)+H(16).
\end{equation*}
We now turn to the nodes in $T\setminus \bigcup_{i \in [9] \setminus \{2\}}Q_{x_i}$. It is not hard to see that $w(r) = 8$, and
\begin{equation*}
    \sum_{v\in T\setminus \cup_{i \in [9] \setminus \{2\}} Q_{x_i}} H(w(v)) = 15H(2)+4H(4)+5H(5)+2H(8)+H(9).
\end{equation*}
Putting everything together, we get that the average $H(w)$-value of the Steiner nodes of $T$ is 
\begin{align*}
    \frac{\sum_{v\in S^*} H(w(v))}{|S^*|} &= \frac{\sum_{v\in T\setminus \cup_{i \in [9] \setminus \{2\}} Q_{x_i}} H(w(v)) +  \sum_{i=3}^9 \sum_{v\in Q_{x_i}} H(w(v)) + \sum_{v\in Q_{x_1}} H(w(v)) }{235}\\
    &= \frac{135H(2) + 36H(4) + 44H(5) + 9H(8) + 8H(9) + H(12) + H(15) + H(16)}{235}\\
    &> 1.8504.
\end{align*}
We conclude that for this particular instance, the analysis of of Algorithm~\ref{alg:rounding-scheme} with the witness tree generated by Algorithm~\ref{alg:computing_w} will necessarily give an approximation factor strictly larger than $1.8504$.

\bibliographystyle{plain}
\bibliography{references}
\appendix
\section{Missing proofs}
\label{appendix:proofs}

\subsection{Proof of Theorem~\ref{cor:approx-preserve}}\label{appendix:approx-preserve}

Let $(G=(V,E),L)$ be a given CacAP instance, where $L \subseteq \binom{V}{2}$. For every link $\ell = (v_0,v_{n+1})$ let $v_1,\dots,v_n$ be the sequence of nodes of degree at least $4$ that belong to every simple path from $v_0$ to $v_{n+1}$ in $G$, such that each pair $\ell_i = \{v_i, v_{i+1}\}$ lies on the same cycle $C_i$ in $G$. We call each $\ell_i$ a \emph{projection} of $\ell$ on $C_i$, and the set of projections of $\ell$ is denoted by $\mathrm{proj}(\ell)$. For projections $\ell_i = (v, u)$ and $\ell_i' = (v',u')$ of $\ell \in L$ and $\ell' \in L$ respectively, with endpoints on the same cycle $C_i$, then $\ell_i$ and $\ell_i'$ are said to \emph{cross} if they satisfy one of the following conditions: they share an endpoint, or one of the two simple $v$ to $u$ paths in $C_i$ contain exactly one of $u'$ or $v'$. Links $\ell$ and $\ell'$ are said to \emph{cross} if they have projections $\ell_i\in \mathrm{proj}(\ell)$ and $\ell_i'\in \mathrm{proj}(\ell')$ that cross. 

From $(G,L)$ we construct a Node Steiner Tree instance $G_{ST} = (R\cup L, E_{ST})$ in the following way. The set of links $L$ make up the Steiner nodes of $G_{ST}$ and the set of degree-$2$ nodes in $G$ make up the set of terminals $R$. For each link $\ell \in L$ with endpoint $r\in R$, we have $(\ell, r)\in E_{ST}$, and for each pair of links $\ell$ and $\ell'$ that cross, we have $(\ell,\ell') \in E_{ST}$. Notice that $G_{ST}$ is, in particular, a CA-Node-Steiner-Tree instance.

A key lemma that shows why the above reduction is useful is the following; its proof is given in~\cite{DBLP:conf/icalp/BasavarajuFGMRS14,DBLP:conf/stoc/Byrka0A20}.

\begin{lemma}[\cite{DBLP:conf/icalp/BasavarajuFGMRS14,DBLP:conf/stoc/Byrka0A20}]
    Let $(G, L)$ be a CacAP instance, and let $G_{ST} = (R\cup L, E_{ST})$ be the corresponding CA-Node-Steiner-Tree instance. A subset of links $L'\subseteq L$ is a feasible solution of $(G,L)$ if and only if $G_{ST}[R\cup L']$ is connected.
\end{lemma}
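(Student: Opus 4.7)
The plan is to prove the equivalence in two directions by relating coverage of the 2-edge-cuts of $G$ to connectivity in $G_{ST}[R \cup L']$. Recall that $L'$ is feasible for the CacAP instance $(G,L)$ if and only if every 2-edge-cut of $G$ is covered by some link in $L'$, and that in a cactus every 2-edge-cut consists of two edges $\{e_1,e_2\}$ on a common cycle $C$. The essential bookkeeping step is that removing $\{e_1,e_2\}$ splits $V(G)$ into two sides $A$ and $B$, each consisting of an arc of $C$ together with all subgraphs attached through its degree-$\geq 4$ nodes, and that a link $\ell \in L$ covers $\{e_1,e_2\}$ precisely when its projection on $C$ has one endpoint on each side.

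For the backward direction, I would assume $G_{ST}[R \cup L']$ is connected and suppose for contradiction that some 2-cut $\{e_1,e_2\}$ is uncovered. Then every link in $L'$ has both endpoints on the same side of the cut, partitioning $L'$ into $A$-links and $B$-links. The key sub-claim is that no $A$-link is adjacent to a $B$-link in $G_{ST}$: they cannot share a terminal endpoint (any such endpoint would need to lie in both $A$ and $B$); and if they crossed via projections on some cycle $C^*$, then either $C^*=C$ and their chords lie on opposite arcs $\alpha_A,\alpha_B$ (so they do not cross as chords), or $C^*\neq C$ and the entire cycle $C^*$ lies on only one of $A,B$, preventing the other side's link from projecting to $C^*$ at all. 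Hence the node set $R \cup L'$ of $G_{ST}[R\cup L']$ splits into $A$-terminals $\cup$ $A$-links and $B$-terminals $\cup$ $B$-links with no edges between them, contradicting the assumption that terminals on both sides of the cut can be connected.

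For the forward direction, assume $L'$ is feasible. I would show that any two terminals $r,r'$ are connected in $G_{ST}[R \cup L']$ by induction on the number of cycles traversed by the unique $r$-to-$r'$ trail in $G$. The crux is a single-cycle lemma: if every 2-cut on a single cycle $C$ is covered by $L'$, then any two terminals on $C$ are connected in $G_{ST}[R\cup L']$ through a chain of pairwise crossing links whose projections lie on $C$. I would prove this greedily: starting from a link in $L'$ incident to $r$, repeatedly pick a link whose projection on $C$ reaches strictly further along $C$ toward $r'$; feasibility forces the new link to cross the previous, because otherwise the gap between their projections on $C$ would itself be an uncovered 2-cut. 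The inductive step stitches single-cycle chains together across the shared degree-$\geq 4$ nodes on the cactus trail.

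The hardest step, as I expect it, is the single-cycle lemma in the forward direction: one must formalize the greedy progress argument carefully and handle degenerate situations such as projections that share an endpoint or links whose endpoints are themselves degree-$\geq 4$ nodes rather than terminals. Once the $A$-link/$B$-link non-adjacency sub-claim is in hand, the backward direction reduces to a single connectivity observation.
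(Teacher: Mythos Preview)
The paper does not actually prove this lemma in the compiled text; it defers to the cited sources. The \LaTeX{} source does contain a commented-out sketch, and that sketch follows exactly your two-direction plan: for the backward direction it fixes an uncovered $2$-cut $\{e_1,e_2\}$, takes a terminal on each side, and argues that any $G_{ST}$-path between them would contain two consecutive links lying on opposite sides of the cut, which therefore cannot cross; for the forward direction it isolates a single-cycle lemma (feasibility on one cycle forces the projected links to be connected in $G_{ST}$) and then stitches the per-cycle connected pieces across the shared degree-$\geq 4$ vertices of the cactus. Your proposal is a more careful version of the same argument: your $A$-link/$B$-link non-adjacency case split (same cycle vs.\ different cycle) and your greedy progress proof of the single-cycle lemma fill in exactly the details the sketch leaves implicit. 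So your approach is correct and coincides with the intended one.
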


Clearly, a feasible solution $L'$ to a CacAP instance $(G,L)$ has size $|L'|$ and implies a feasible solution to the corresponding CA-Node-Steiner-Tree instance $G_{ST} = (T\cup L, E_{ST})$ of size $|L'|$ and vice versa, so we can conclude Theorem~\ref{cor:approx-preserve}.


\subsection{Proof of Theorem~\ref{cor:approx-preserve-block}}\label{appendix:approx-preserve-block}
In this section we slightly generalize the reduction from a Block-TAP instance to CA-Node-Steiner-Tree that was described by Nutov~\cite{nutov20202nodeconnectivity}, and give a reduction from 1-Node-CAP to CA-Node-Steiner-Tree that shows that an $\alpha$-approximation for CA-Node-Steiner-Tree implies an $\alpha$-approximation for 1-Node-CAP.

The reduction has two steps. We first reduce the original problem to a weighted Block-TAP instance with specific properties, where each link has weight either $0$ or $1$. We then show that Nutov's reduction~\cite{nutov20202nodeconnectivity} still applies, and it gives a final CA-Node-Steiner-Tree instance that is unweighted.

\subsubsection{The reduction to Block-TAP with $\{0,1\}$ weights}\label{appendix:1node-block-tap}

In this section, we use one of the standard ways to reduce a 1-Node-CAP instance to a Block-TAP instance. Let $G = (V, E)$ be a connected graph, and let $L \subseteq \binom{V}{2}$ be a set of links. Let $C \subseteq V$ be the set of cut nodes of $G$. As a reminder, a node $c \in V$ is a cut node of $G$ if $G \setminus \{c\}$ is not connected. Note that if $C = \emptyset$, then $G$ is already $2$-node-connected, and so the problem is trivial. So, from now on we assume that $C \neq \emptyset$.

We start by computing the block-cut tree $T = (V_T, E_T)$ of $G$. It is well-known that the block-cut tree of a graph can be computed in polynomial time. $T$ contains one copy of each cut node of $G$ and one node for each maximal 2-node connected component of $G$. Let $C = \{c_1, \ldots, c_n\} \subseteq V$ be the set of cut nodes of $G$, and let $\mathcal{B} = \{B_1, \ldots, B_m\}$ be the collection of maximal 2-node connected components. We have $V_T = C \cup \mathcal{B}$. The edge set $E_T$ is defined as follows: for every $i \in [n], j \in [m]$, $(c_i, B_j) \in E_T$ if and only if $c_i \in B_j$. It is easy to verify that the resulting graph $T$ is indeed a tree. 

Along with a block-cut tree $T$ of a graph $G$, we also define a function $f_G$ that maps the nodes of $G$ to the nodes of $T$. More precisely, we define a map $f_G: V\to V_T$ as follows: $f_G(c_i) = c_i$, for every $i \in [n]$, and $f_G(v) = B_j$ for every $j \in [m]$, $v \in B_j \setminus C$. Note that $f_G$ is well-defined, i.e., for every $v \in V \setminus C$, there exists exactly one block $B_j$ such that $v \in B_j$. We also extend the map $f_G$ and define $f_G(\ell) \coloneqq (f_G(u), f_G(v))$, for any $\ell = (u,v) \in L$. Finally, the set of links $L_T \subseteq \binom{V_T}{2}$ is defined as follows:
\begin{itemize}
    \item for each link $\ell = (u,v) \in L$ there is a link $\ell_T \coloneqq f_G(\ell) \in L_T$ of weight $1$. Let $f_G(L)$ denote the set of all such links.
    \item for each block $B_j$, $j \in [m]$, and every pair of cut vertices $c, c' \in B_j \cap C$, where $c \neq c'$, there is a link $(c,c') \in L_T$ of weight $0$. Let $L_0$ denote the set of all the links of this form.
\end{itemize}

Throughout the rest of this section, we will use the notation introduced in these last two paragraphs. The following observation is now easy to verify.

\begin{observation}\label{obs:block-cut}
Let $G = (V, E)$ be a connected graph and let $T = (V_T, E_T)$ be a block-cut tree of $G$. Let $f_G: V \to V_T$ be the associated map. Let $c$ be a cut vertex of $G$, and let $G_1, \ldots, G_k$ be the connected components of $G \setminus \{c\}$, where $k \in \mathbb{N}_{\geq 2}$. Let $T_1, \ldots, T_{k'}$ be the connected components of $T \setminus \{c\}$. Let $V_i \subseteq V$ be the set of nodes contained in the blocks that are nodes of $T_i$, for every $i \in [k']$. Then, the following hold:
\begin{enumerate}
    \item $k = k'$,
    \item there is a bijection $g_c: [k] \to [k]$, such that $V_{g_c(i)} = V(G_i) \cup \{c\}$ for every $i \in [k]$, where $V(G_i)$ is the set of nodes contained in $G_i$,
    \item $V_i \cap V_j = \{c\}$ for every $i \neq j \in [k]$.
\end{enumerate}
\end{observation}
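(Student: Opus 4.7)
The plan is to use the classical correspondence between the blocks of $G$ containing $c$ and the connected components of $G \setminus \{c\}$. By construction of the block-cut tree, the neighbors of $c$ in $T$ are exactly the blocks of $G$ that contain $c$, so removing $c$ from $T$ yields $\deg_T(c)$ disjoint subtrees, one per block containing $c$; hence $k'$ equals the number of such blocks. For the other side of the correspondence, each block $B$ containing $c$ is $2$-node-connected, so $B \setminus \{c\}$ is non-empty and connected in $G$, and therefore lies inside a unique component of $G \setminus \{c\}$. This ``block $\mapsto$ component'' map is injective (if two distinct blocks $B_1, B_2$ both containing $c$ had $B_i \setminus \{c\}$ in the same component, one could form a simple cycle through edges $c u_1 \in B_1$ and $c u_2 \in B_2$, forcing $B_1 = B_2$ by maximality of blocks) and surjective (connectedness of $G$ forces each component of $G \setminus \{c\}$ to contain a neighbor of $c$, which lies in some block containing $c$). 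Thus $k = k'$, proving part~(1).

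For part~(2), for each $i \in [k]$ let $B(i)$ denote the unique block containing $c$ with $B(i) \setminus \{c\} \subseteq V(G_i)$ (guaranteed by the bijection above), and let $g_c(i)$ be the index of the unique subtree of $T \setminus \{c\}$ that contains $B(i)$. Since both underlying assignments are bijections, so is $g_c:[k] \to [k']$. The inclusion $c \in V_{g_c(i)}$ is immediate because $B(i) \in V(T_{g_c(i)})$ and $c \in B(i)$.

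The main technical step is to show that paths in $T$ translate into walks in $G$ avoiding $c$. For any path $B_0, c_1, B_1, \ldots, c_m, B_m$ in $T$ lying entirely inside a single subtree of $T \setminus \{c\}$, every internal cut vertex $c_\ell$ is distinct from $c$ (since $c$ is not a node of that subtree). Consecutive blocks $B_\ell, B_{\ell+1}$ share $c_{\ell+1}$, so in $G$ one can walk from any vertex of $B_0$ to any vertex of $B_m$ via $c_1, \ldots, c_m$, staying in $G \setminus \{c\}$. Applying this inside $T_{g_c(i)}$ with $B_m = B(i)$: if $v \in V_{g_c(i)} \setminus \{c\}$ lies in some block $B_0 \in V(T_{g_c(i)})$, the walk reaches $B(i) \setminus \{c\} \subseteq V(G_i)$ without using $c$, so $v \in V(G_i)$. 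Conversely, if $v \in V(G_i) \setminus \{c\}$ lies in some block $B$, the $T$-path from $B$ to $c$ ends at some block $B^* \in \{B(1), \ldots, B(k)\}$, and the same walking argument forces $B^* \setminus \{c\}$ and $v$ to lie in the same component of $G \setminus \{c\}$, so $B^* = B(i)$ by uniqueness of components; hence $B \in V(T_{g_c(i)})$ and $v \in V_{g_c(i)}$.

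Part~(3) then follows immediately: for $i \neq j$ in $[k']$, $V_i \cap V_j = (V(G_{g_c^{-1}(i)}) \cup \{c\}) \cap (V(G_{g_c^{-1}(j)}) \cup \{c\}) = \{c\}$, since distinct connected components of $G \setminus \{c\}$ are disjoint. I expect the main obstacle to be the careful path-translation argument in the third paragraph, i.e.\ verifying that every intermediate cut vertex on a path inside $T \setminus \{c\}$ is different from $c$; everything else is routine bookkeeping on standard block-cut tree properties.
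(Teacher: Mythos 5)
The paper does not actually prove this observation---it is stated with the remark that it is ``easy to verify''---so there is no proof of record to compare against; what you have written is a correct, self-contained verification. Your structure (blocks containing $c$ $\leftrightarrow$ neighbours of $c$ in $T$ $\leftrightarrow$ components of $T\setminus\{c\}$, and blocks containing $c$ $\leftrightarrow$ components of $G\setminus\{c\}$ via the standard cycle argument, then the path-to-walk translation for part (2)) is the natural one. One small point to tighten: in the third paragraph you justify that the $G$-walk avoids $c$ only by noting that the intermediate cut vertices $c_1,\dots,c_m$ differ from $c$, but the walk also traverses interior vertices of the blocks $B_0,\dots,B_m$, and some of these blocks may themselves contain $c$ (e.g.\ $B_m=B(i)$ does). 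The segment of the walk inside such a block must be rerouted within $B_\ell\setminus\{c\}$; this is possible because $B_\ell\setminus\{c\}$ is connected (by $2$-connectivity of blocks, or trivially if $B_\ell$ is a bridge), a fact you already invoke in the first paragraph but should also cite here. With that one-line addition the argument is complete.
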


We will now show that the two instances are equivalent. More precisely, we prove the following lemma.

\begin{lemma}\label{lemma:reduction-to-block-cut}
Let $G = (V, E)$ be a connected graph and let $L$ be a set of links in $G$. Let $T = (V_T, E_T)$ be the corresponding block-cut tree of $G$, let $f_G$ be the associated map between $V$ and $V_T$, and let $L_T = f_G(L) \cup L_0$ be the corresponding set of links. Let $L' \subseteq L(G)$. Then, $G \cup L'$ is 2-node-connected if and only if $T \cup f_G(L') \cup L_0$ is 2-node-connected.
\end{lemma}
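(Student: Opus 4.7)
The plan is to reduce $2$-node-connectivity of both graphs to the absence of cut vertices (valid because $G$ and $T$ are each connected, so adding links preserves connectivity), and then to argue that the cut vertices of $G\cup L'$ and of $T\cup f_G(L')\cup L_0$ are the same subset of $C$. First, for any $v\in V\setminus C$, the graph $G\setminus\{v\}$ is connected, and since $(G\cup L')\setminus\{v\}$ contains $G\setminus\{v\}$ as a spanning subgraph, it is connected as well; hence cut vertices of $G\cup L'$ must lie in $C$. On the tree side, for any block vertex $B_j$, its neighbors in $T$ are exactly the cut vertices in $B_j\cap C$, and $L_0$ turns this set into a clique; so removing $B_j$ from $T\cup f_G(L')\cup L_0$ leaves $B_j\cap C$ pairwise connected via $L_0$-edges, with the rest of the graph attached through these cut vertices, and the result is still connected. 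Thus cut vertices of $T\cup f_G(L')\cup L_0$ also lie in $C$.

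Next, fix $c\in C$ and invoke Observation~\ref{obs:block-cut} to obtain a bijection $g_c$ between the components $G_1,\ldots,G_k$ of $G\setminus\{c\}$ and the components $T_1,\ldots,T_k$ of $T\setminus\{c\}$ satisfying $V_{g_c(i)}=V(G_i)\cup\{c\}$. The key correspondence is as follows: for a link $\ell=(u,v)\in L'$ with $u,v\neq c$ and $u\in V(G_a)$, $v\in V(G_b)$, we have $f_G(u)\in V(T_{g_c(a)})$ and $f_G(v)\in V(T_{g_c(b)})$ (since $f_G$ fixes cut vertices and sends any non-cut vertex of a block $B\in V(T_{g_c(a)})$ to $B$ itself). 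Hence $\ell$ joins distinct components of $G\setminus\{c\}$ if and only if $f_G(\ell)$ joins distinct components of $T\setminus\{c\}$. Meanwhile, every link $(c',c'')\in L_0$ has both endpoints inside a common block $B_j$: if $c\in B_j$ then $c$ is one of the two endpoints and the link vanishes upon removing $c$; otherwise $c',c''$ lie in the same component of $T\setminus\{c\}$ (the one containing $B_j$). So $L_0$ contributes no inter-component edges after deleting $c$.

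Combining these observations, the ``component graph'' obtained by contracting each $G_i$ in $(G\cup L')\setminus\{c\}$ is isomorphic, via $g_c$, to the component graph obtained by contracting each $T_i$ in $(T\cup f_G(L')\cup L_0)\setminus\{c\}$. Therefore $(G\cup L')\setminus\{c\}$ is connected if and only if $(T\cup f_G(L')\cup L_0)\setminus\{c\}$ is connected, i.e., $c$ is a cut vertex on one side iff it is on the other. Since both graphs are connected and neither admits cut vertices outside $C$, we conclude that $G\cup L'$ is $2$-node-connected iff $T\cup f_G(L')\cup L_0$ is.

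The main obstacle lies in the bookkeeping around the map $f_G$: handling links with an endpoint equal to $c$ (which vanish on both sides), links internal to a single block (which become self-loops under $f_G$ but are also intra-component on the $G$-side, since the block is $2$-node-connected), and identifying a cut vertex $u\in V_{g_c(i)}\cap C$ with its copy in $V(T_{g_c(i)})$. Once these corner cases are checked against Observation~\ref{obs:block-cut}, the argument above goes through essentially unchanged.
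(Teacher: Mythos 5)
Your proof is correct and follows essentially the same route as the paper's: both arguments rest on the facts that block nodes cannot be cut vertices of the augmented tree because $L_0$ turns the neighborhood of each block into a clique, and that for each cut vertex $c$ Observation~\ref{obs:block-cut} gives a component correspondence under which a link of $L'$ crosses components of $G\setminus\{c\}$ exactly when its image under $f_G$ crosses the matching components of $T\setminus\{c\}$, while $L_0$ contributes no cross-component edges. The only difference is organizational (you prove a direct per-cut-vertex equivalence of the contracted component graphs, whereas the paper runs two separate contradiction arguments), and the corner cases you flag are handled correctly.
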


Before proving the lemma, we note that we might have two (or more) links $\ell\neq \ell' \in L$ being mapped, via $f_G$, to the same $\bar{\ell}$ link in $T$; in other words, $f_G^{-1}(\bar{\ell})$ might be a set with more than one link. In such a case, we map back $\bar{\ell}$ to any link of $f_G^{-1}(\bar{\ell})$ in an arbitrary way.

\begin{proof}[Proof of Lemma~\ref{lemma:reduction-to-block-cut}]

Let $L' \subseteq L$ be a set of links such that $G \cup L'$ is $2$-node-connected. We will show that $T \cup f_G(L') \cup L_0$ is $2$-node-connected. Suppose otherwise. This means that there exists a node $u \in V_T$ that, if removed, disconnects the tree. We observe that $u$ must necessarily be a cut node, i.e., $u = c \in C$. To see this, note that in the graph $T \cup L_0$, all nodes that are adjacent to a node $B_j$, for any $j \in [m]$, form a clique, and thus, the removal of $B_j$ cannot disconnect the graph $T \cup L_0(T)$. Since $u = c$ is a cut node, it means that $G \setminus \{c\}$ is not connected. By Observation~\ref{obs:block-cut}, the connected components $G_1, \ldots, G_k$ of $G \setminus \{c\}$ correspond, one by one, to the components $T_1, \ldots, T_k$ of $T \setminus \{c\}$. This implies that any link $\ell \in L'$ connecting component $G_i$ with component $G_j$, $i \neq j \in [k]$, corresponds to a link $f_G(\ell)$ connecting $T_i$ with $T_j$. Thus, since $(G \cup L') \setminus \{c\}$ is connected, this means that $(T \cup f_G(L')) \setminus \{c\}$ is connected, and so we get a contradiction. We conclude that $T \cup f_G(L') \cup L_0(T)$ is $2$-node-connected.

Conversely, let $L' \subseteq L$ such that $T \cup f_G(L') \cup L_0$ is $2$-node-connected. We will show that $G \cup L'$ is $2$-node-connected. Suppose otherwise. This means that there exists a cut node $c \in C$ such that $(G \cup L') \setminus \{c\}$ is not connected. In particular, there are at least $2$ connected components $H_1$ and $H_2$ in $(G \cup L') \setminus \{c\}$, and these components are simply unions of the connected components $G_1, \ldots, G_k$ of $G \setminus \{c\}$. Since there are no links with $c$ as an endpoint in the graph $(T \cup f_G(L') \cup L_0) \setminus \{c\}$, this means that the connected components of $(T \cup f_G(L') \cup L_0) \setminus \{c\}$ are exactly the same as the connected components of $(T \cup f_G(L')) \setminus \{c\}$. Moreover, by Observation~\ref{obs:block-cut}, the components $G_1, \ldots, G_k$ correspond, one by one, to the components $T_1, \ldots ,T_k$ of $T \setminus \{c\}$. Since $(T \cup f_G(L')) \setminus \{c\}$ is connected, this means that the set $f_G(L')$ contains links that do not have $c$ as an endpoint which connect all of the components $T_1, \ldots, T_k$ to a single component; in particular, if $u \in H_1$ and $v \in H_2$, then there is a path between $f_G(u)$ and $f_G(v)$ in $(T \cup f_G(L')) \setminus \{c\}$. This means that $H_1$ and $H_2$ must be connected via $L'$, and so we get a contradiction. We conclude that $G \cup L'$ is $2$-node-connected.
\end{proof}

\subsubsection{The reduction to CA-Node-Steiner-Tree}

Given the Block-TAP instance $T = (V_T, E_T)$, as constructed in the previous section, we now modify Nutov's reduction~\cite{nutov20202nodeconnectivity} in order to reduce our weighted Block-TAP instance to an unweighted CA-Node-Steiner-Tree instance. To simplify notation, we denote the set of links in $T$ as $L = L_0 \cup L_1$, where $L_0$ is the set of links of weight $0$ and $L_1$ is the set of links of weight $1$. Let $R \subseteq E_T$ be the set of edges of $T$ that have a leaf as an endpoint; these edges will correspond to terminals in the resulting CA-Node-Steiner-Tree instance. We define three graphs, with the last one being the final resulting CA-Node-Steiner-Tree instance.

\begin{definition}
\phantom{}
\begin{enumerate}
    \item The \emph{$(L,E_T)$-incidence graph} is the graph with node set $L \cup E_T$ and edge set defined as follows: there is an edge between $\ell = (u,v) \in L$ and $e\in E_T$ if $e$ belongs to the unique path in $T$ connecting $u$ and $v$.
        
    \item The \emph{short-cut $(L, E_T)$-incidence graph} is obtained from the $(L, E_T)$-incidence graph by short-cutting the set $E_T$, where short-cutting a node $e \in E_T$ means adding a clique between the neighbors of $e$.
        
    \item The \emph{reduced $(L,E_T)$-incidence graph} is obtained from the short-cut $(L,E_T)$-incidence graph as follows. For every $\ell \neq \ell' \in L_1$ such that there is a path between $\ell$ and $\ell'$ in the short-cut $(L,E_T)$-incidence graph whose nodes, other than the endpoints, all belong to $L_0$, we add an edge between $\ell$ and $\ell'$.We then delete the set $E_T \setminus R$ and the set $L_0$. In this resulting instance, let $R$ be the set of terminals and $L_1$ be the set of Steiner nodes.
    \end{enumerate}
\end{definition}
We stress that the final reduced $(L,E_T)$-incidence graph is treated as an unweighted Node Steiner Tree instance, i.e, all Steiner nodes have weight $1$.

For $J = T \cup L$, we define $\kappa_J(s,t)$ to be the maximum number of internally node disjoint paths from $s$ to $t$ in $J$. Given a block-cut tree $T = (V_T, E_T)$ with a set of links $L = L_0 \cup L_1$, let $H$ be the $(L,E_T)$-incidence graph, and let $H_R$ be the reduced $(L,E_T)$-incidence graph. For distinct vertices $s,t\in V_T$, we say $(s,t)$ are \emph{$H$-reachable} if $H$ has a path connecting $e^{(s)}$ with $e^{(t)}$, where $e^{(s)}$ and $e^{(t)}$ are the edges of the unique path from $s$ to $t$ in $T$ that are incident to $s$ and $t$, respectively.

We are now ready to prove the key lemma of this section; its proof closely follows~\cite{nutov20202nodeconnectivity}.
\begin{lemma}
\label{lemma:connected-block}
Let $T = (V_T,E_T)$ be a block-cut tree with a set of links $L = L_0 \cup L_1$. Let $H$ be the $(L,E_T)$-incidence graph. Let $J = T \cup L$. Then for $s,t\in V_T$ such that $(s,t) \notin E_T$, $\kappa_J (s,t) \geq 2$ if and only if $(s,t)$ are $H$-reachable.
\end{lemma}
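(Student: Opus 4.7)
The plan is to prove the two implications separately, handling the harder $(\Rightarrow)$ direction by contrapositive.

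For the backward direction $(\Leftarrow)$, I will upgrade an $H$-walk into a $2$-vertex-connected subgraph of $J$ containing both $s$ and $t$. Given the walk $e^{(s)} = g_0, \ell_1, g_1, \ldots, \ell_k, g_k = e^{(t)}$, each $\ell_i$ is $H$-adjacent to both $g_{i-1}$ and $g_i$, meaning its $T$-path $T_{\ell_i}$ contains both tree edges. The fundamental cycles $C_i := \ell_i \cup T_{\ell_i}$ are cycles in $J$, and consecutive cycles $C_i, C_{i+1}$ share the tree edge $g_i$, hence at least two adjacent vertices. Using the standard fact that the union of two $2$-vertex-connected subgraphs sharing at least two vertices is again $2$-vertex-connected, induction yields that $U := \bigcup_i C_i$ is $2$-vertex-connected. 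Since $s$ is incident to $g_0 \in T_{\ell_1}$ we have $s \in C_1$, and similarly $t \in C_k$, so Menger's theorem applied to $U \subseteq J$ gives $\kappa_J(s,t) \geq 2$.

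For the forward direction $(\Rightarrow)$, assume $(s,t)$ are not $H$-reachable, let $\mathcal{C}$ be the $H$-component of $e^{(s)}$, and set $F := \mathcal{C} \cap E_T$ and $L' := \mathcal{C} \cap L$. The proof hinges on a structural \emph{dichotomy}: for every link $\ell \in L$, either $T_\ell \subseteq F$ and $\ell \in L'$, or $T_\ell \cap F = \emptyset$ and $\ell \notin L'$. This holds because $\ell$'s $H$-neighbors among tree edges are exactly the edges of $T_\ell$, so any overlap with $F$ forces the whole $T$-path into $\mathcal{C}$. A short induction along $H$-walks then shows $(V(F), F)$ is a connected subtree of $T$ containing $s$; moreover $t \notin V(F)$, since otherwise the subtree property would force the $s$-$t$ tree path and hence $e^{(t)}$ into $F$, contradicting non-reachability. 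Define $z := x_i$, where $x_0 = s, \ldots, x_n = t$ is the $s$-$t$ tree path and $i$ is the largest index with $x_i \in V(F)$; then $z \notin \{s,t\}$.

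The main technical step is to show that $z$ is a cut vertex of $J$ separating $s$ and $t$, which gives $\kappa_J(s,t) \leq 1$. Let $T_t$ be the component of $T \setminus \{z\}$ containing $t$; one checks $V(F) \cap T_t = \emptyset$, so $T_t$ is precisely the hanging subtree of $V(F)$ attached at $z$ via $\epsilon_{i+1}$. I will prove that the connected component of $s$ in $J \setminus \{z\}$ is contained in $(V(F) \setminus \{z\}) \cup \bigcup_{a \neq z} H_a$, where $H_a$ ranges over the hanging subtrees attached at boundary vertices $a \neq z$. This is a closure argument: edges of $J \setminus \{z\}$ that start inside this set always end inside it, since $L'$-links and $F$-edges keep us inside $V(F) \setminus \{z\}$, boundary tree edges at some $a \neq z$ lead into $H_a \neq T_t$, and any link $\ell \notin L'$ incident to $V(F)$ is forced by the dichotomy to start at a boundary vertex $a \neq z$ with a non-$F$ tree edge and then remain in the single hanging subtree $H_a$ until its other endpoint.

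The main obstacle is ruling out a link $\ell$ that bypasses $z$ and lands in $T_t$, either from $V(F) \setminus \{z\}$ or from some non-$T_t$ hanging subtree. Any such $\ell$ would have a $T$-path crossing from the $V(F)$-side of $\epsilon_{i+1}$ into $T_t$, and the portion of this $T$-path lying inside the subtree $V(F)$ (from $a$ to $z$) consists of $F$-edges. The dichotomy then forces $\ell \in L'$, hence $T_\ell \subseteq F$, hence $\epsilon_{i+1} \in F$, contradicting $x_{i+1} \notin V(F)$. Once this exclusion is established, the closure and the cut-vertex conclusion follow routinely, finishing the contrapositive.
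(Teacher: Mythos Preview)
Your proof is correct and takes a genuinely different route from the paper's. The paper argues by induction on the number of edges in the $s$--$t$ path in $T$: the base case (length two) splits on whether the middle vertex is a block node or a cut node, and the inductive step is driven by three auxiliary facts relating $\kappa_J$ and $H$-reachability for the shorter pairs $(s,v)$ and $(u,t)$. Your argument is direct and structural. For $(\Leftarrow)$ you glue the fundamental cycles $C_i = \ell_i \cup T_{\ell_i}$ along shared tree edges into a single $2$-connected subgraph of $J$ containing both $s$ and $t$. For $(\Rightarrow)$ you exhibit an explicit cut vertex $z$, read off from the boundary of the subtree $(V(F),F)$ induced by the $H$-component of $e^{(s)}$; the dichotomy ``$T_\ell \subseteq F$ or $T_\ell \cap F = \emptyset$'' is the single structural observation that makes the closure argument work. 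A notable byproduct is that your proof nowhere uses the bipartite block/cut structure of $T$ or the special $L_0$ links, so it in fact establishes the statement for an arbitrary tree with an arbitrary link set---strictly more than the paper's induction, whose base case and Fact~3 both lean on that structure. One small suggestion: in the backward direction take a simple $H$-\emph{path} rather than a walk, so that $g_{i-1}\neq g_i$ guarantees each $C_i$ has at least three vertices and the $2$-connectivity of a single $C_i$ is unambiguous.
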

\begin{proof}
Let $s \neq t\in V_T$ such that $(s,t) \notin E_T$, and let $P$ be the unique path between $s$ and $t$ in $T$. We treat $P$ both as a set of edges and a set of vertices. Let $e^{(s)}$ and $e^{(t)}$ be the edges of the unique path from $s$ to $t$ in $T$ that are incident to $s$ and $t$, respectively. We will prove the statement by using induction on the number of edges in $P$.

\paragraph{Base case.} The base case is when $P$ has $2$ edges. Let $P = (s,u,t)$. If $u = B_j$, for some $j \in [m]$, then $s$ and $t$ are necessarily cut nodes that belong to the same block $B_j$. In this case, we have $\kappa_{T \cup L_0} (s,t) \geq 2$, and by construction, we also have the path $(e^{(s)}, \ell, e^{(t)})$ in $H$, where $\ell = (e^{(s)}, e^{(t)}) \in L_0$. Thus, in this case the statement trivially holds. 

So, suppose now that $u = c \in C$. In this case, we have $s = B_i$ and $t = B_j$, $i \neq j$, such that $c \in B_i \cap B_j$. Suppose that $\kappa_J(s,t)\geq 2$. Consider $T\setminus \{c\}$ and let $T^{(s)}, T^{(t)}$ be the components of $T\setminus \{c\}$ that contain $s$ and $t$, respectively. Then, $J\setminus \{c\}$ has a path from $s$ to $t$. Consider such a path $Q$, and suppose that $Q$ goes through each of the components $T^{(s)}=T_0,T_1,\dots,T_{l-1},T_l=T^{(t)} \subset T\setminus\{c\}$ exactly once, where $l \in \mathbb{N}_{\geq 1}$; without loss of generality the components are labelled in the order $Q$ visits them. It is easy to see that we must have a link $\ell_x \in L_1$ between $T_x$ and $T_{x+1}$, for every $x \in \{0, 1, \ldots, l-1\}$. Let $e_x$ denote the edge in $T$ that is adjacent to $c$ and has an endpoint in $T_x$, for $x \in \{0, 1, \ldots, \ell\}$. Note that we have $e_0=e^{(s)}$ and $e_{l}=e^{(t)}$. It is not hard to see that we have the following path between $e^{(s)}$ and $e^{(t)}$ in $H$: $e^{(s)},\ell_0,e_1,\ell_1,\dots,\ell_{l-1},e^{(t)}$. We conclude that $(s,t)$ are $H$-reachable.
    
Suppose now that $\kappa_J(s,t)=1$. This implies that $J\setminus \{c\}$ has no path between $s$ and $t$. Let $J^{(s)}$ be the component of $J\setminus\{c\}$ that contains $s$. Let $L^{(s)}$ and $L^{(t)}$ be the set of links that have both endpoints in $J^{(s)}\cup\{c\}$ and $V\setminus J^{(s)}$, respectively. Since there is no path between $s$ and $t$ in $J\setminus \{v\}$, it is clear that $L^{(s)}$ and $L^{(t)}$ partition $L$. In particular, this means that for any link $\ell \in L^{(s)}$, the path in $T$ connecting the endpoints of $\ell$ shares no edges with the corresponding path for any link in $L^{(t)}$. Thus, there is no path between $e^{(s)}$ and $e^{(t)}$ in $H$, and the claim holds.

\paragraph{Induction step.} We now suppose that $P$ has at least $3$ edges, and that the claim holds for all paths whose number of edges is strictly smaller than the number of edges in $P$. Let $(u,v)$ and $(v,t) = e^{(t)}$ be the last two edges in $P$. In order to prove that the claim holds, we first prove several other facts we will need. From now on, let $P_{xy}$ denote the subpath of $P$ with endpoints $x \in V_T$ and $y \in V_T$. 

\begin{fact}\label{fact1}
If $\kappa_J(s,t) \geq 2$, then $\kappa_J(s,v) \geq 2$.
\end{fact}
\begin{proof}
Suppose that $\kappa_J(s,t) \geq 2$, and let $Z$ be the cycle in $J$ formed by taking the union of two node disjoint paths from $s$ to $t$ in $J$. Obviously, if $v$ is part of $Z$, we have two internally disjoint paths from $s$ to $v$ in $J$, and in this case the claim holds. If $v$ is not part of $Z$, then let $a$ be the node in $P_{sv}$ such that the path $P_{av}$ only shares a single node with the cycle $Z$; clearly, such a node $a$ always exists, as $s$ belongs both to $Z$ and $P_{sv}$. We claim that the set of edges $Z \cup P_{av} \cup \{(v,t)\}$ contains two internally disjoint paths from $s$ to $v$. In particular, the first path is the union of the shortest path $P'$ in $Z$ from $s$ to $a$ along with $P_{av}$ and the edge $(v,t)$, while the second path is a subset of the edges $Z \setminus P'$.
\end{proof}

\begin{fact}\label{fact2}
If $(s,v)$ and $(u,t)$ are both $H$-reachable, then $(s,t)$ are $H$-reachable.
\end{fact}
\begin{proof}
By assumption, there is a path between $e^{(s)}$ and $(u,v) \in E_T$ in $H$, and a path between $(u,v) \in E_T$ and $(v,t) = e^{(t)} \in E_T$. Combining these paths gives the result. 
\end{proof}

\begin{fact}\label{fact3}
If $(s,t)$ are $H$-reachable, then $(s,v)$ are $H$-reachable.
\end{fact}
\begin{proof}
Suppose that $Q$ is the path between $e^{(s)}$ and $e^{(t)}$ in $H$. If there is a link $\ell\in L$ that is part of $Q$ and is adjacent to $(u,v)$, then the claim holds. So, suppose that there is no such link. Observe that this cannot happen if $t \in C$, since in that case, $v = B_j$ and $u = c'$, which means that there exists a link $\ell' \in L_0$ that is adjacent to both $e^{(t)}$ and $(u,v)$. Thus, we must have $u = B_i$, $v = c \in C$ and $t = B_j$, for some $i \neq j \in [m]$. Let $L^{(s)}$ and $L^{(t)}$ denote the set of links in $L$ with both ends in the connected components of $T \setminus \{(u,v)\}$ containing $s$ and $t$, respectively. The only links not included in $L^{(s)} \cup L^{(t)}$ are those where the path between their endpoints in $T$ contains the edge $(u,v) \in E_T$. In particular, this implies that $Q$ only contains links from $L^{(s)} \cup L^{(t)}$. However, there is no path between $e^{(s)}$ and $e^{(t)}$ using only edges from $L^{(s)}$ and $L^{(t)}$, which contradicts the existence of the path $Q$. We conclude that the claim holds.
\end{proof}

Resuming the induction, suppose that $\kappa_J(s,t)\geq 2$. Then by Fact~\ref{fact1}, we know that $\kappa_J(s,v)\geq 2$ and $\kappa_J(u,t)\geq 2$. Using the induction hypothesis, this implies that $(s,v)$ and $(u,t)$ are $H$-reachable, and by Fact~\ref{fact2} we conclude that $(s,t)$ are $H$-reachable. 
    
Suppose now that $(s,t)$ are $H$-reachable, and assume for contradiction that $\kappa_J(s,t) =1$. By Fact~\ref{fact3}, we know that $(s,v)$ and $(u,t)$ are $H$-reachable. Using the induction hypothesis, we get that $\kappa_J(s,v) \geq 2$ and $\kappa_J(u,t)\geq 2$. Since $\kappa_J(s,t)=1$, there is a node $a \in P \setminus \{s,t\}$ such that $s$ and $t$ are in separate connected components of $J\setminus \{a\}$. If $a \in P_{su} \setminus \{s\}$, then this contradicts the fact that $\kappa_J(s,v)\geq 2$. If $a=v$, then this contradicts the fact that $\kappa_J(u,t) \geq 2$. Thus, we must have $\kappa_J(s,t) \geq 2$. This concludes the induction step, and the proof.
\end{proof}

Using the above lemma, we can show that we only need to focus on the reduced $(L,E_T)$-incidence graph.
\begin{lemma}\label{lemma:reduced-incidence}
$J$ is $2$-node-connected if and only if $H_R$ has a path between every two terminals.
\end{lemma}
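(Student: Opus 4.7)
The plan is to split the equivalence into two parts. First, I will show that $J = T \cup L$ is $2$-node-connected if and only if $\kappa_J(s,t) \geq 2$ holds for every pair of (necessarily non-adjacent) leaves $s,t$ of $T$. Combined with Lemma~\ref{lemma:connected-block}, this reduces the statement to: every pair of leaves $s,t$ is $H$-reachable if and only if their corresponding terminal edges $e^{(s)}, e^{(t)}$ lie in the same connected component of $H_R$. This second equivalence is the main content of the proof.

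For the leaf-pair reduction, one direction is immediate from the definition of $2$-node-connectivity. For the converse, I suppose $J$ has a cut vertex $c$. Since $T \subseteq J$, each connected component of $J \setminus \{c\}$ is a union of components of $T \setminus \{c\}$, and each component of $T \setminus \{c\}$ must contain at least one leaf of $T$: a size-$1$ component is itself a leaf of $T$ (its only $T$-neighbor was $c$), while any larger component is a tree with at least two leaves, and at most one of these two leaves can be adjacent to $c$ in $T$, else $T$ would contain a cycle. Choosing leaves $s,t$ in different components of $T \setminus \{c\}$ then yields $(s,t) \notin E_T$ and $\kappa_J(s,t) = 1$. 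We may also assume $|V_T| \geq 3$ (else the problem is trivial), and then no two leaves of $T$ are adjacent in $T$, so the reduction is complete.

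For the main step, I will handle both directions by explicit path manipulation. Going from $H$ to $H_R$: given a path from $e^{(s)}$ to $e^{(t)}$ in $H$, I can repeatedly use the clique edges added around each internal $E_T$-vertex (which, by construction, make any two neighbors of such a vertex adjacent in the short-cut graph) to produce an equivalent path of the form $e^{(s)}, \ell_0, \ell_1, \dots, \ell_{m-1}, e^{(t)}$ in the short-cut graph, where every intermediate vertex lies in $L$. A key structural observation is that no link in $L_0$ can be adjacent in $H$ to a terminal edge $e^{(s)}$, because $L_0$-links have only cut-vertex endpoints while $e^{(s)}$ is incident to a leaf of $T$ (which is a block); hence $\ell_0, \ell_{m-1} \in L_1$. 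Letting $\ell_{i_1}, \dots, \ell_{i_r}$ be the $L_1$-links in order along the sequence, each consecutive pair $\ell_{i_j}, \ell_{i_{j+1}}$ is either adjacent in the short-cut graph (a clique edge between $L_1$-links, which survives deletion of $L_0$ and $E_T \setminus R$) or separated only by $L_0$-links (in which case step~$2$ of the $H_R$-construction installs a direct $H_R$ edge). Concatenating gives a path $e^{(s)}, \ell_{i_1}, \ldots, \ell_{i_r}, e^{(t)}$ in $H_R$. Conversely, each edge of $H_R$ expands back into a walk in $H$: original $R$-$L_1$ edges are edges of $H$; $L_1$-$L_1$ clique edges come from a shared $E_T$-vertex that can be reinserted; and step-$2$ edges are certified by an $L_0$-only path in the short-cut graph, each of whose clique edges can itself be expanded via a shared $E_T$-vertex.

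The main obstacle I expect is the careful bookkeeping in the forward direction: verifying that (i) an arbitrary $H$-path between two terminals can always be reduced to one whose only $E_T$-vertices are the two endpoints, and (ii) the first and last links of this reduced path necessarily lie in $L_1$, so that the $L_0$-shortcut rule of step~$2$ of the $H_R$-construction can be invoked cleanly without ever needing phantom $R$-$L_0$ edges. Once these observations are in place, both directions follow by a straightforward unwinding of the definitions.
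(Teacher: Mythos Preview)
Your proposal is correct and follows essentially the same approach as the paper: reduce $2$-node-connectivity of $J$ to $\kappa_J(s,t)\ge 2$ for all leaf pairs, invoke Lemma~\ref{lemma:connected-block} to pass to $H$-reachability, and then argue that the short-cutting plus $L_0$-bypass operations preserve connectivity between leaf-edges in both directions. Your treatment is in fact more explicit than the paper's in two places: you justify the leaf-pair reduction (the paper just asserts it), and you spell out why the first and last links of the short-cut path must lie in $L_1$ and how the step-$2$ edges stitch together the $L_1$-segments, whereas the paper compresses this into a single sentence.
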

\begin{proof}
We first observe that $J$ is $2$-node-connected if and only if $\kappa_J(s,t)\geq 2$ for every pair of leaves $s,t$ of $T$. Combined with Lemma~\ref{lemma:connected-block}, this implies that $J$ is $2$-node-connected if and only if for any 2 leaves $s,t\in T$, $(s,t)$ are $H$-reachable. To obtain the reduced $(L,E_G)$-incidence graph $H_R$, we perform the following steps. We first short-cut the set $E_T$ of nodes of $H$; clearly this does not modify the connectivity of $H$. Afterwards, we extend the neighborhood of each link $\ell$ of weight $1$, and make $\ell$ adjacent to any other link $\ell'$ of weight $1$ that can be reached from $\ell$ in the incidence graph via a path that only uses links of weight $0$.

We now make some observations. We observe that short-cutting ensures that if there is a path between two leaf-edges in the original incidence graph, then there is also a path in the incidence graph whose endpoints are leaf-edges and whose internal nodes are all links. Moreover, since no link of $L_0$ is adjacent to a leaf edge, this means that if there is a path between two leaf-edges in the original incidence graph, then there is also a path whose endpoints are leaf-edges and whose internal nodes are all links in $L_1$. This means that, by performing the final step where we remove the nodes of $L_0$ and $E_T \setminus R$, connectivity between leaf-edges is maintained. We conclude that for any $2$ leaves $s,t \in T$, $(s,t)$ are $H$-reachable if and only if they are $H_R$-reachable. Putting everything together, the claim holds.
\end{proof}

We now have all the ingredients necessary to provide a proof for Theorem~\ref{cor:approx-preserve-block}.

\begin{proof}[Proof of Theorem~\ref{cor:approx-preserve-block}]
Given a connected graph $G = (V, E)$ with a set of links $L$, we construct the block-cut tree $T = (V_T, E_T)$ and the reduced $(f_G(L) \cup L_0, E_T)$-incidence graph $H_R$. The resulting instance is viewed as a Node Steiner Tree instance, where $R$, the set of leaf-edges of $T$, is the set of terminals and $f_G(L)$ is the set of links. Lemmas~\ref{lemma:reduced-incidence} and~\ref{lemma:reduction-to-block-cut} imply that $J' = G\cup L'$ is 2-node-connected, for some $L' \subseteq L$, if and only if the set $f_G(L')$ is a feasible solution for the constructed Node Steiner Tree $H_R$, and moreover, the cost of $L'$ in both instances is the same and equal to $|L'|$. 

It remains to see that the resulting Node Steiner Tree instance is in fact a CA-Node-Steiner-Tree instance, which can be done by verifying that the requirements for a CA-Node-Steiner-Tree instance hold. By definition of the incidence graph it is clear that terminals are only adjacent to Steiner Nodes. By definition of the short-cut $(L,E_G)$-incidence graph we know that the neighbors of a terminal form a clique. Finally, any link $\ell$ in $H$ is adjacent only to the edges that appear on the path in $T$ between its endpoints, and for any such path at most two edges are leaf-edges. Since we only have leaf-edges in the reduced incidence graph, we get that each Steiner node is adjacent to at most two terminals. This concludes the proof.
\end{proof}

\subsection{Proof of Theorem~\ref{thm:k-restricted-decomp}}\label{appendix:k-restricted-decomp}

Before proving Theorem~\ref{thm:k-restricted-decomp}, we start with the following observations and lemmas.

\begin{observation}\label{obs:lower-bound}
    Let $\OPT$ be the optimal cost of a CA-Node-Steiner-Tree instance with $t$ terminals. Then, we have $\OPT \geq t/2$.
\end{observation}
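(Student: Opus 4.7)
The plan is to do a simple counting argument on the edges of an optimal Steiner tree that go between terminals and Steiner nodes. The two key properties of CA-Node-Steiner-Tree instances from Definition~\ref{def:ca-node-steiner} that I will use are: (i) no two terminals are adjacent in $G$, and (ii) each Steiner node has at most two terminals in its neighborhood in $G$.

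First, I would fix an optimal solution $T^* = (R \cup S^*, E^*)$ with $|S^*| = \OPT$, where $R$ is the set of terminals (so $|R| = t$). Since $T^*$ connects all terminals and $t \geq 1$, every terminal $v \in R$ has at least one neighbor in $T^*$. By property~(i) of Definition~\ref{def:ca-node-steiner}, this neighbor cannot be another terminal, so it must be a Steiner node. Hence each terminal contributes at least one edge to the bipartite edge set $B \coloneqq \{(u, v) \in E^* : u \in S^*, v \in R\}$, which gives $|B| \geq t$.

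Next, I would bound $|B|$ from above by summing over Steiner nodes. Since $T^*$ is a subgraph of $G$, the neighborhood of any Steiner node $s \in S^*$ inside $T^*$ is contained in its neighborhood $N_G(s)$ in $G$. By property~(ii), each Steiner node $s$ has at most two terminals in $N_G(s)$, and therefore at most two terminals as neighbors in $T^*$. Summing over all Steiner nodes yields $|B| \leq 2|S^*| = 2\OPT$.

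Combining the two inequalities gives $t \leq |B| \leq 2\OPT$, i.e., $\OPT \geq t/2$, as desired. There is no real obstacle here; the entire argument is a double-counting estimate that uses only two of the three structural properties of CA-Node-Steiner-Tree instances (property~(iii), the clique condition on terminal neighborhoods, is not needed for this bound).
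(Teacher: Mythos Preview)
Your argument is correct and is exactly the natural double-counting proof one would expect; the paper states this as an observation without proof, so there is nothing to compare against beyond noting that your reasoning from properties~(i) and~(ii) of Definition~\ref{def:ca-node-steiner} is precisely what the authors had in mind. One tiny quibble: the step ``every terminal has at least one neighbor in $T^*$'' tacitly assumes $t\geq 2$ (for $t=1$ the tree is a single node and the bound $\OPT\geq 1/2$ is vacuously false), but this degenerate case never arises in the paper's setting and is harmless.
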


\begin{definition}
    A rooted binary tree is called regular if every non-leaf node has exactly $2$ children.
\end{definition}

As a clarification, a leaf in this rooted tree is a vertex with no children. For the rest of this section, we use the notation $P_T(u,v)$ to denote the unique path from $u$ to $v$ in a tree $T$.

\begin{lemma}[see also Lemma 3.1 in~\cite{DBLP:journals/siamcomp/BorchersD97}]\label{lemma:map}
    For any rooted regular binary tree $T = (V, E)$ with a set of leaves $C \subseteq V$, there exists a one-to-one mapping $f: V \setminus C \to C$ from non-leaf nodes to leaves, such that
    \begin{itemize}
        \item for every $u \in V \setminus C$, $f(u)$ is a descendent of $u$, and
        \item all paths $P_T(u,f(u))$ from $u \in V \setminus C$ to $f(u)$ are pairwise edge disjoint, and internally node disjoint (where internally node disjoint means that there is no node that belongs to both paths and is also internal for both paths).
    \end{itemize}
\end{lemma}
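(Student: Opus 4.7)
The plan is to prove a strengthened statement by induction on $|V|$, which additionally tracks a distinguished ``free leaf'' used to glue the inductive pieces. Specifically, I will show: for every rooted regular binary tree $T = (V, E)$ with root $r$ and leaf set $C$, there exist a one-to-one map $f \colon V \setminus C \to C$ and a leaf $c^* \in C \setminus f(V \setminus C)$ such that (i) $f(u)$ is a descendant of $u$ for every internal $u$; (ii) the paths $\{P_T(u, f(u))\}_{u \in V \setminus C}$ are pairwise edge-disjoint and internally node-disjoint; and (iii) the additional path $P_T(r, c^*)$ is also edge-disjoint from each $P_T(u, f(u))$ and shares no internal node with any of them. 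The lemma follows at once by discarding $c^*$.

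The base case is $|V| = 1$, where $r$ is the only leaf, $f = \emptyset$, and $c^* = r$ trivially satisfies (i)--(iii). For the inductive step, since $T$ is regular, $r$ has exactly two children $u_L, u_R$ rooting proper subtrees $T_L$ and $T_R$. Applying the induction hypothesis to $T_L$ and $T_R$ yields pairs $(f_L, c_L^*)$ and $(f_R, c_R^*)$. One then defines $f$ on $T$ by letting it agree with $f_L$ (resp.\ $f_R$) on the internal nodes of $T_L$ (resp.\ $T_R$), by setting $f(r) := c_L^*$, and by taking the new free leaf to be $c^* := c_R^*$. In this construction, $P_T(r, f(r))$ is the edge $(r, u_L)$ followed by $P_{T_L}(u_L, c_L^*)$, and similarly $P_T(r, c^*)$ is $(r, u_R)$ followed by $P_{T_R}(u_R, c_R^*)$.

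The main step is then to verify that (i)--(iii) propagate to $T$. Injectivity of $f$ and property (i) are immediate from the recursive definition and from the fact that $c_L^* \neq c_R^*$ lie in disjoint leaf sets. For (ii) and (iii), the potential conflicts split into the usual cases: two paths both lying inside $T_L$ (or both inside $T_R$) are handled by the induction hypothesis; paths lying in different subtrees share no nodes or edges because the two subtrees intersect only at $r$, which belongs to neither of them. The only remaining pairs to check are the interaction of $P_T(r, f(r))$ with the paths $P_{T_L}(v, f_L(v))$, and the interaction of $P_T(r, c^*)$ with the paths $P_{T_R}(v, f_R(v))$; these are handled precisely by the strengthened property (iii) of the hypothesis applied to $T_L$ and $T_R$. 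A small additional check is that $u_L$, which becomes an internal node of $P_T(r, f(r))$, is not internal to any other path in $T_L$: this holds because $u_L$ is the root of $T_L$, so it can only appear as the starting endpoint of $P_{T_L}(u_L, f_L(u_L))$ and does not lie on any path staying inside a proper subtree of $T_L$.

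The main obstacle is really just setting up the right strengthened hypothesis so that the inductive gluing is clean: once property (iii) is included, the choice $f(r) := c_L^*$ uses exactly the ``slack'' leaf left by the left recursion, while the matching slack from the right recursion is passed upward as the new free leaf $c^*$. Every disjointness requirement then reduces to a case already guaranteed by one of (ii) or (iii) of the hypothesis, and the lemma drops out.
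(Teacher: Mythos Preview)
Your proposal is correct and follows essentially the same approach as the paper: both prove the same strengthened statement (carrying an extra ``free leaf'' $c^*$ whose root-to-leaf path is also disjoint from the others), with the same recursive construction $f(r):=c_L^*$, $c^*:=c_R^*$. The only cosmetic difference is that you induct on $|V|$ while the paper inducts on the height of the tree; your verification of the disjointness properties is in fact slightly more explicit than the paper's.
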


\begin{proof}
We will prove a slightly stronger statement which implies the lemma. We will show that for any regular binary tree $T = (V, E)$ rooted at a vertex $r \in V$ with a set of leaves $C \subseteq V$, there exists a one-to-one mapping $f: V \setminus C \to C$ from non-leaves to leaves such that:
\begin{enumerate}
    \item for every $u \in V \setminus C$, $f(u)$ is a descendant of $u$,
    \item all paths $P_T(u,f(u))$ from $u$ to $f(u)$ are pairwise edge disjoint, and internally node disjoint,
    \item there exists a leaf $c_0 \in C$ that is not the image of any non-leaf node, such that the path $P_T(r, c_0)$ from the root $r$ of the tree to $c_0$ is edge disjoint and internally node disjoint from any other path $P_T(u, f(u))$, for any $u \in V \setminus C$.
\end{enumerate}

To prove the above statement, we use induction on the height $h$ of the tree, where we define the height of the tree as the maximum number of nodes that appear in any path from the root to a leaf, including the endpoints. The base case is $h = 1$, in which case, the tree consists of a single leaf, and thus the statement is trivially true. Suppose now that the statement holds for all regular binary trees of height at most $h \geq 1$. We will show that the statement also holds for all regular binary trees of height $h+1$. 

Let $T = (V, E)$ be a tree of height $h+1$, let $r$ be its root, and let $u_1$ and $u_2$ be its two children. Let $C \subseteq V$ be its set of leaves. The subtrees $T_1$ and $T_2$, rooted at $u_1$ and $u_2$, respectively, are both regular binary trees of height at most $h$. Let $V_i$ denote the set of vertices of $T_i$, for $i \in \{1,2\}$, and $C_i \subseteq V_i$ denote the set of leaves of $T_i$, for $i \in \{1,2\}$. Note that $V = \{r\} \cup V_1 \cup V_2$ and $C = C_1 \cup C_2$. By the induction hypothesis, there exist maps $f_1: V_1 \setminus C_1 \to C_1$ for $T_1$ and $f_2: V_2 \setminus C_2 \to C_2$ for $T_2$ that satisfy the desired properties. Let $c_i \in C_i$ be the leaf of $T_i$ that satisfies the third property of the statement, for $i \in \{1,2\}$. We now define the map $f: V \setminus C \to C$ for $T$ as follows:
\begin{itemize}
    \item $f(r) = c_1$,
    \item for every $i \in \{1,2\}$ and $u \in V_i \setminus C_i$, we set $f(u) = f_i(u)$,
\end{itemize}

Clearly, the first property of the statement is satisfied. By the induction hypothesis, and since $T_1$ and $T_2$ are disjoint, we get that all paths $\{P_T(u,f(u))\}_{u \in V \setminus (C \cup \{r\})}$ are pairwise edge disjoint, and internally they are node disjoint. Thus, we only need to check the path $P_T(r,c_1))$. We observe that the path $P_T(r,c_1)$ from $r$ to $c_1$ is clearly node disjoint from all paths of the set $\{P_T(u,f(u))\}_{u \in V_2 \setminus C_2}$ and by the induction hypothesis, it is also edge disjoint and internally node disjoint from any path $P_T(u,f(u))$ where $u \in V_1 \setminus C_1$. Again, we clarify here that internally node disjoint means that there is no node that belongs to both paths and is also internal for both paths. Thus, the first two properties are satisfied. In order to prove the third property, we set $c_0 \equiv c_2$. By the induction hypothesis, the path $P_T(v_2, c_2)$ is edge disjoint and internally node disjoint from any other path of the set $\{P_T(u,f(u))\}_{u \in V_2 \setminus C_2}$. Moreover, it only intersects the path from $r$ to $f(r)$ at its endpoint. Thus, the third property is satisfied for $T$. This concludes the proof.
\end{proof}

\begin{lemma}\label{lemma:high-degree-vertices-vs-leaves}
Let $T = (V, E)$ be a tree with $t$ leaves, and let $d(v)$ be the degree of $v \in V$. Then, we have $\sum_{v\in V : d(v) \geq 3} (d(v) - 2) \leq t$.
\end{lemma}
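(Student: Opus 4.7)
The plan is to prove this by a direct double-counting argument based on the handshake lemma, exploiting the fact that in a tree the number of edges is exactly one less than the number of vertices. The inequality is in fact slightly loose; we will show the sharper identity $\sum_{v \in V : d(v) \geq 3} (d(v) - 2) = t - 2$, from which the claim follows immediately (as long as $T$ has at least one vertex).

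First, I would partition $V$ into three sets: $V_1$ (the leaves, of size $t$), $V_2$ (the degree-$2$ vertices), and $V_{\geq 3}$ (all vertices of degree at least $3$). Since $T$ is a tree, $|E| = |V| - 1$, so the handshake lemma gives
\begin{equation*}
\sum_{v \in V} d(v) \;=\; 2(|V| - 1) \;=\; 2(|V_1| + |V_2| + |V_{\geq 3}|) - 2 \;=\; 2t + 2|V_2| + 2|V_{\geq 3}| - 2.
\end{equation*}

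On the other hand, splitting the same sum according to the three parts of the partition yields
\begin{equation*}
\sum_{v \in V} d(v) \;=\; t + 2|V_2| + \sum_{v \in V_{\geq 3}} d(v).
\end{equation*}
Equating these two expressions, the $2|V_2|$ terms cancel, and after rearranging I obtain
\begin{equation*}
\sum_{v \in V_{\geq 3}} \bigl(d(v) - 2\bigr) \;=\; \sum_{v \in V_{\geq 3}} d(v) - 2|V_{\geq 3}| \;=\; t - 2 \;\leq\; t,
\end{equation*}
which is the required bound. There is no real obstacle here; the only subtlety is to make sure to count each vertex exactly once in the partition and to remember that the tree has $|V|-1$ edges (so its degree sum is $2|V|-2$, not $2|V|$). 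The degenerate case $|V| = 1$ has $t = 1$ and an empty sum on the left, so the inequality holds trivially there as well.
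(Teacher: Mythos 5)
Your proof is correct and is essentially identical to the paper's: both partition the vertices by degree (leaves, degree-$2$, degree-$\geq 3$), apply the handshake lemma with $|E|=|V|-1$, and derive the exact identity $\sum_{v:d(v)\geq 3}(d(v)-2)=t-2\leq t$. Nothing further is needed.
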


\begin{proof}
Let $n$ be the total number of vertices of $T$, which we decompose as $n = t + n_2 + n_{\geq 3}$, where $n_2$ is the number of vertices of degree exactly 2, and $n_{\geq 3}$ is the number of vertices of degree at least $3$. We have $t + 2n_2 + \sum_{v \in V: d(v) \geq 3} d(v) = 2(n-1) = 2t + 2n_2 + 2n_{\geq 3} - 2$, which gives $\sum_{v \in V: d(v) \geq 3} (d(v) - 2) = t - 2 \leq t$.
\end{proof}

We are now ready to prove Theorem~\ref{thm:k-restricted-decomp}.

\begin{proof}[Proof of Theorem~\ref{thm:k-restricted-decomp}]
    We begin with an optimal solution $Q$ of cost $\OPT$ to the given CA-Node-Steiner-Tree instance, whose set of leaves is identical to the set of terminals. We modify $Q$ as described in Xu et al.~\cite{xu2010approximations} to get a new tree $Q^+$ of equal cost. More precisely, we first create a dummy Steiner node of zero weight in the middle of an edge that serves as the root of the tree; thus, from now on $Q$ will be a rooted tree. For each Steiner node $v$ that has at least three children, we expand $v$ into a binary tree by using duplicate nodes of zero weight such that every node in the resulting tree has at most two children. The following lemma, whose proof is straightforward and thus omitted, shows that the number of auxiliary nodes is proportional to the degree of the node that is modified.
    
    \begin{lemma}\label{lemma:degree}
        Let $v$ be a node of $Q$ with at least three children. Then, if we apply the transformation described above to $v$, we get a new tree that contains $d(v)-3$ extra auxiliary nodes, where $d(v)$ is the degree of $v$ in $Q$.
    \end{lemma}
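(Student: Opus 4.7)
The plan is to pin down exactly which binary-tree expansion is applied to $v$ and then count the inserted zero-weight nodes. First I would observe that the preprocessing step subdivided one edge of the original tree with a dummy root of zero weight, so the root of $Q$ has exactly two children. Hence any node $v$ with at least three children must be a non-root internal node, and so its degree in $Q$ satisfies $d(v) = k+1$, where $k \geq 3$ denotes the number of children of $v$ in $Q$.

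Next I would make the expansion rule explicit. Let $c_1, c_2, \ldots, c_k$ be the children of $v$. Introduce $k-2$ new auxiliary Steiner nodes $v_2, v_3, \ldots, v_{k-1}$ of zero weight, and replace the $k$ parent–child edges incident to $v$ by the following edges: $v$ keeps its parent edge and receives children $c_1$ and $v_2$; each intermediate $v_j$ for $2 \leq j \leq k-2$ receives children $c_j$ and $v_{j+1}$; and $v_{k-1}$ receives children $c_{k-1}$ and $c_k$. By construction every node in this local subtree has at most two children, and the $c_i$'s remain leaves of the subtree connecting to their original subtrees. Thus the transformation does yield a tree in which $v$ and all of its original descendants satisfy the at-most-two-children requirement, while preserving terminal connectivity and total cost (the added nodes have zero weight).

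Finally, the counting is immediate: the set of auxiliary nodes introduced is exactly $\{v_2, v_3, \ldots, v_{k-1}\}$, whose cardinality is $k - 2$. Substituting $k = d(v) - 1$ gives $k - 2 = d(v) - 3$, which is the claimed number of extra auxiliary nodes. There is no real obstacle here beyond fixing the specific ``caterpillar'' expansion; once the construction is spelled out the degree arithmetic is one line.
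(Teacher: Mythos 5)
Your proof is correct and fills in exactly the argument the paper omits as ``straightforward'': the root has two children by construction, so any $v$ with $k\geq 3$ children satisfies $d(v)=k+1$, and the caterpillar expansion into a binary subtree with $k$ leaves introduces $k-2=d(v)-3$ zero-weight auxiliary nodes. This is consistent with how the lemma is used later in the vertex-count bound $n+\sum_{v: d(v)\geq 4}(d(v)-3)$, so nothing further is needed.
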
%

    We repeatedly apply the transformation described above, and we get a new tree $Q^+$ where every Steiner node has at most two children. This new tree $Q^+$ has at most $n + \sum_{v \in Q: d(v) \geq 4}(d(v) - 3) \leq 3n$ vertices, where $n$ is the number of vertices of the rooted tree $Q$. Finally, for any non-leaf node different from the root that has degree $2$, we add a dummy terminal as a child, and we end up with a regular node weighted binary tree (where only Steiner nodes have weight), which, by slightly abusing notation, we call $Q^+$. 
    
    We now label the Steiner nodes of $Q^+$ with labels $\{0,\dots, m-1\}$ as follows; we stress that the labeling and the arguments used closely follow the techniques of Borchers and Du~\cite{DBLP:journals/siamcomp/BorchersD97}. We first clarify some terminology. We say that the root is at the $0^{\mathrm{th}}$ layer of $Q^+$, the children of the root are at the $1^{\mathrm{st}}$ layer, and so on. We label the root with $0$, the $1^{\mathrm{st}}$ layer of nodes with $1$, and so on until the $(m-1)^{\mathrm{th}}$ layer. We then repeat the labelling with label $0$, starting from layer $m$. More precisely, the Steiner nodes at layer $i$ all get the same label $i\mod m$. We obviously have the property that any $m$ consecutive layers have different labels.
    
    For each label $j \in \{0, \ldots, m-1\}$, we will define a feasible $k$-restricted Steiner tree $Q_j^+$. We first define its components; the tree $Q_j^+$ is simply the union of these components and its cost is the sum of the costs of its components. For a fixed label $j$, each component of $Q_j^+$ is rooted at either the root of $Q^+$ or a Steiner node with label $j$. We define the component rooted at Steiner node $v$ labelled with $j$ as follows: the component rooted at $v$ contains the paths in $Q^+$ to the first nodes below $v$ that are also labelled with $j$; call these nodes \emph{intermediate leaves}. More precisely, the component rooted at $v$ connects to $u$ if $u$ is a descendant of $v$ labelled with $j$, such that there is no other node of label $j$ on the path from $v$ to $u$ besides $v$ and $u$. Then, for an intermediate leaf $u$ that is a Steiner node, the component contains the unique path $p(u)$, from $u$ to $f(u)$, as defined in Lemma~\ref{lemma:map}. If there is a path from $v$ to a leaf of $Q^+$ that is a descendant of $v$ such that the path does not contain a node with label $j$, then that path is also in the component. The component of $Q_j^+$ that is rooted at the root of $Q^+$ is constructed in a similar way. Finally, $Q_j^+$ is the union of all components rooted at a Steiner node with label $j$, along with the component rooted at the root of $Q^+$, which is always included, even if $j > 0$. 
    
    \begin{lemma}
        Each component of $Q_j^+$ has at most $k$ terminals.
    \end{lemma}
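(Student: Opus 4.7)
The plan is to exploit the cyclic structure of the labeling together with the regular binary structure of $Q^+$. The key observation is that, since Steiner nodes at layer $i$ are labeled $i \bmod m$, if $v$ is a Steiner node with label $j$, then every descendant of $v$ in $Q^+$ that is also labeled $j$ must lie at depth at least $m$ below $v$, and in fact the closest such descendants lie at depth exactly $m$.

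Given this observation, I will describe the component of $Q_j^+$ rooted at $v$ as a ``truncated subtree'' $T_v$ together with appended paths. Formally, $T_v$ is obtained from the subtree of $Q^+$ rooted at $v$ by cutting every descending path at the first node that is either a terminal (leaf of $Q^+$) or a label-$j$ Steiner node distinct from $v$. By the observation above, $T_v$ has depth at most $m$, and, being a subgraph of a regular binary tree, has at most $2^m = k$ leaves. The leaves of $T_v$ come in two types: terminals of $Q^+$ reached from $v$ by a path avoiding label-$j$ internal nodes, and intermediate label-$j$ Steiner nodes at depth exactly $m$ below $v$. The component rooted at $v$ consists of $T_v$ together with, for each intermediate leaf $u$ of the second type, the appended path $p(u)$ from $u$ to the terminal $f(u)$ guaranteed by Lemma~\ref{lemma:map}.

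To conclude, I will show that each leaf of $T_v$ contributes exactly one distinct terminal to the component. Leaves of the first type are themselves terminals and are counted directly. For the second type, Lemma~\ref{lemma:map} states that $f$ is one-to-one, so distinct intermediate leaves $u \neq u'$ yield distinct terminals $f(u) \neq f(u')$; moreover, $f(u)$ is a \emph{proper} descendant of $u$ and hence lies at depth strictly greater than $m$ below $v$, so it cannot coincide with any terminal of the first type (which lies at depth at most $m$). Hence the number of terminals in the component is at most the number of leaves of $T_v$, which is at most $2^m = k$.

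The only remaining case is the component rooted at the root $r$ of $Q^+$, which by construction is always included in $Q_j^+$ regardless of $j$. If $j = 0$ this is a special case of the above. If $j > 0$, the first descendants of $r$ with label $j$ are at depth exactly $j < m$, so the corresponding truncated subtree $T_r$ has depth at most $j$ and hence at most $2^j \leq 2^m = k$ leaves; the same counting argument yields the bound. The whole argument is a clean combinatorial counting, and I do not anticipate any serious obstacle beyond carefully handling the two edge cases (the root component and branches of $T_v$ that terminate early at a terminal of $Q^+$).
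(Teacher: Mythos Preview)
Your proof is correct and follows essentially the same approach as the paper: both argue that the truncated subtree below $v$ has depth at most $m$ in a binary tree, hence at most $2^m$ leaves, and that each such leaf (whether a direct terminal or an intermediate label-$j$ node replaced by $f(u)$) accounts for exactly one terminal of the component. Your version is in fact more careful than the paper's, which somewhat loosely bounds the two types of leaves separately; you make explicit that the two types together partition the leaves of $T_v$, that $f$ is injective, and that the $f(u)$'s lie at depth strictly greater than $m$ and so cannot collide with direct-path terminals, and you also treat the root component for $j>0$ explicitly.
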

    \begin{proof}
        It is clear that a component has at most $2^m$ intermediate leaves, and thus at most $2^m$ leaves connected by paths from these intermediate leaves. Moreover, if there is a direct path from the root of the component to a leaf that does not contain an intermediate leaf, then the leaf must be no more than $m$ layers below the component root. So, the total number of leaves for any component is at most $2^m = k$.
    \end{proof}

    \begin{lemma}\label{lemma:feasible}
        Let $j \in \{0, \ldots, m-1\}$. The tree $Q_j^+$ is a feasible $k$-restricted CA-Node-Steiner-Tree.
    \end{lemma}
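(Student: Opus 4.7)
The plan is to verify the two remaining conditions needed to conclude that $Q_j^+$ is a feasible $k$-restricted Steiner tree: every terminal must lie in some component, and the union of the components must be connected. The per-component bound of at most $k$ terminals is already established by the preceding lemma.

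For terminal coverage I will take an arbitrary terminal $c$, i.e.\ a leaf of $Q^+$, and trace the unique root-to-$c$ path $P$ in $Q^+$. If $P$ contains no Steiner node with label $j$, then $c$ is reachable from the root of $Q^+$ along a label-$j$-avoiding path and hence lies in the component rooted at the root of $Q^+$. Otherwise, let $v$ be the deepest label-$j$ node on $P$; the sub-path from $v$ down to $c$ has no label-$j$ node in its interior, so $c$ is included in $C_v$ by the ``direct path to a $Q^+$-leaf'' rule of the construction. Thus every terminal is captured by at least one component.

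For connectivity I will show that the components glue pairwise along shared label-$j$ Steiner nodes. Fix any non-root component $C_v$ of $Q_j^+$ and let $w$ be the nearest strict ancestor of $v$ in $Q^+$ that carries label $j$, or the root of $Q^+$ if no such ancestor exists. By the definition of intermediate leaves, $v$ is a first label-$j$ descendant of $w$ along its branch, and therefore $v \in C_w$; since $v$ is also the root of $C_v$, the two components share the Steiner node $v$, and so $C_v \cup C_w$ is connected. Iterating this parent-child relation, which organizes the components of $Q_j^+$ into a tree topped by the always-present root component, and using that each individual component is itself internally connected (it is a subtree of $Q^+$), the entire union $Q_j^+$ is connected. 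Combined with the preceding lemma this yields the claim.

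I do not anticipate a serious obstacle: the argument is purely structural and requires no calculation. The only minor subtlety is the uniform treatment of the root component when the root of $Q^+$ happens to carry label $j$, which is immaterial because the component rooted at the root of $Q^+$ is always included in $Q_j^+$ by construction.
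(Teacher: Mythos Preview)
Your argument is correct and follows the same idea as the paper's proof. The paper compresses your two steps into a single observation: the ``skeleton'' parts of the components (from each label-$j$ node to its intermediate leaves) together with the root component already cover every node of $Q^+$, and since $Q^+$ itself is connected and spans all terminals, feasibility follows immediately. Your separate treatment of terminal coverage and of connectivity via gluing at shared label-$j$ nodes is simply a more explicit unpacking of that one-line argument.
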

    \begin{proof}
        To see this, we only need to look at the part of each component that connects a node $v$ labelled as $j$ to its intermediate leaves. It is easy to see that the union of all these components, along with the component rooted at the root of $Q^+$, contains all nodes of $Q^+$, and so $Q_j^+$ is a feasible $k$-restricted Steiner tree.
    \end{proof}

    Given a $k$-restricted Steiner tree $Q_j^+$ of $Q^+$, we now explain how to construct a $k$-restricted Steiner tree $Q_j$ of $Q$. For any component $B$ of $Q_j^+$ we construct a component $B'$ in $Q$ by contracting every expanded node and removing every dummy terminal. Moreover, if $B'$ contains the global root, we remove it as well and ``restore" the edge that was split in order to create the root. $B'$ clearly still has at most $k$ terminals, and thus the union of these components of $Q$ is a $k$-restricted Node Steiner Tree, denoted as $Q_j$. For the final step of our argument, we require the following observation about $k$-restricted trees that will be useful in bounding the optimal cost.

    \begin{observation}\label{obs1}
        Consider the $k$-restricted Steiner trees $Q_0^+, \ldots, Q_{m-1}^+$ of $Q^+$, as defined above, and their corresponding components. Any non-root Steiner node appears as an intermediate leaf in exactly one component of exactly one such Steiner tree.
    \end{observation}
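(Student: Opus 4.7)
The plan is to use the fact that the round-robin labeling gives each Steiner node of $Q^+$ a unique label, and then to exploit the ancestor/descendant structure of $Q^+$ to pinpoint the unique component where such a node serves as an intermediate leaf. Let $v$ be a non-root Steiner node, and let $j^* \in \{0, 1, \ldots, m-1\}$ be the label assigned to $v$ (namely, the index of its layer modulo $m$).

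First I would observe that $v$ can only possibly act as an intermediate leaf inside $Q_{j^*}^+$. Indeed, by the definition of the components, every intermediate leaf of any component of $Q_j^+$ is a Steiner node with label $j$; since $v$ has the unique label $j^*$, any $Q_j^+$ with $j \neq j^*$ is automatically ruled out.

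It then remains to show that within $Q_{j^*}^+$ the node $v$ appears as an intermediate leaf of exactly one component. I would walk from $v$ up the path towards the global root of $Q^+$ and split into two cases. Either there exists an ancestor of $v$ that is a Steiner node labeled $j^*$, in which case I let $u$ be the closest such ancestor; or no such ancestor exists. In the first case, by construction $v$ is a $j^*$-labeled descendant of $u$ with no intervening $j^*$-labeled node, so $v$ is an intermediate leaf of the component rooted at $u$. Moreover $v$ cannot be an intermediate leaf of any other component: an intermediate leaf of a component rooted at $u'$ must be a descendant of $u'$, so $u'$ must itself lie on the $v$-to-global-root path, but for any such $j^*$-labeled $u' \neq u$ (including the case of $u'$ being the global root when $j^*=0$), the node $u$ lies strictly between $v$ and $u'$ and blocks $v$ from being an intermediate leaf of $u'$'s component. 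In the second case, $v$ is the first $j^*$-labeled node encountered on the path from the global root of $Q^+$ down to $v$, so $v$ is an intermediate leaf of precisely the component rooted at the global root. Either way, uniqueness follows.

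I do not anticipate any real obstacle here; the only mild subtlety is keeping track of the special role played by the global-root component, which is included in every $Q_j^+$ regardless of $j$, together with the boundary case $j^* = 0$ in which the global root itself carries the label $0$. Once this is handled carefully, the observation follows directly from the definitions.
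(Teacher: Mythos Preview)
Your argument is correct and is precisely the natural unpacking of the definitions: intermediate leaves in $Q_j^+$ carry label $j$, and the unique component containing $v$ as an intermediate leaf is the one rooted at the nearest $j^*$-labeled proper ancestor of $v$ (or the global root if none exists). The paper does not give a proof of this observation at all; it simply states it as evident, so there is nothing to compare against beyond noting that your write-up spells out exactly what the authors take for granted.
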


    Using this observation we will now find a bound on the sum $\sum_{j = 0}^{m-1}cost(Q_j)$. For that, we consider a Steiner node $v$ of $Q$. We first note that $v$ has $d(v) - 2$ ``copies'' in $Q^+$. By Observation~\ref{obs1}, each such copy appears as an intermediate leaf in exactly one component of exactly one Steiner tree among $Q_0^+, \ldots, Q_{m-1}^+$, and thus it is counted twice, since it also appears once as the root of a component. Finally, each such copy can appear at most one more time as an internal node of a path from an intermediate leaf to a leaf (as implied by Lemma~\ref{lemma:map}). Thus, $v$ appears at most $2(d(v) - 2) + m$ times in the sum $\sum_{j = 0}^{m-1}cost(Q_j)$. Let $S$ denote the set of Steiner nodes of $Q$, $S_2 = \{s \in S: d(s) = 2\}$ and $S_{\geq 3} = \{ s \in S: d(s) \geq 3\}$. Note that $S = S_2 \cup S_{\geq 3}$ and $\OPT = |S|$. Let $t$ be the number of terminals. We have
    \begin{align*}
    \sum_{j = 0}^{m-1}cost(Q_j) &\leq m|S| + 2\sum_{s \in S} (d(s)-2) = m|S| - 4|S|+ 2\sum_{s \in S} d(s) = (m-4)|S| + 2\sum_{s \in S_2} d(s) + 2\sum_{s \in S_{\geq 3}} d(s)\\
            &\leq (m-4)|S| + 4 \left(|S| - |S_{\geq 3}| \right)  + 2\sum_{s \in S_{\geq 3}} d(s) = m \cdot \OPT + 2\sum_{s \in S_{\geq 3}} (d(s) - 2) \leq m \cdot \OPT + 2t\\
            &\leq m \cdot \OPT + 4\OPT = (m+4)\OPT,
    \end{align*}
    where in the above derivations we used Observation~\ref{obs:lower-bound} and Lemma~\ref{lemma:high-degree-vertices-vs-leaves}. Thus, there exists a label $j \in \{0, 1, \ldots, m-1\}$ such that $cost(Q_j) \leq \left(1 + \frac{4}{\log k} \right) \OPT$, where we recall that $k = 2^m$.
\end{proof}

\subsection{Proof of Theorem~\ref{thm:eps_approx}}\label{appendix:proof-eps_approx}

\begin{proof}[Proof of Theorem~\ref{thm:eps_approx}]
Given an $\varepsilon > 0$, we set $k = 2^{\lceil 4/\varepsilon \rceil}$. Since
$k$ is fixed, computing a minimum Node Steiner tree connecting $k$ terminals can be done in polynomial time. To see this, we note that the Node Steiner Tree problem can be reduced to the Directed Edge Steiner Tree problem without modifying the number $k$ of terminals (see~\cite{DBLP:journals/networks/Segev87}). Thus, one can use existing algorithms for optimally solving Directed Edge Steiner Tree instances with a fixed number of terminals (see, e.g.,~\cite{DBLP:conf/focs/FeldmanR99}). We conclude that the set of components $\mathbf{C}$ can be computed in polynomial time, and thus we can generate the variables of $k$-DCR LP in polynomial time. Although the set of constraints is exponential in $|R|$, an optimal solution for it can be computed in polynomial time by standard flow techniques (see~[Lemma 8, \cite{DBLP:conf/stoc/Byrka0A20}]).

Regarding the LP value, it is easy to see that $\OPT_{\LP}(|R|) \leq \OPT$, as the entire optimal tree can be viewed as an $|R|$-restricted Steiner Tree which can be directed towards $r$, and  whose cost is equal to $\OPT$.  We now argue that $\OPT_{\LP}(k)$ is at most $\left(1 + \frac{4}{\log k} \right)\OPT_{\LP}(|R|)$. For this, let $x$ be an arbitrary feasible solution of the $|R|$-DCR LP. We look at every component $Q$ with terminals $R'$ and sink $c' \in R'$ whose $x$-value is strictly positive, i.e., $x(Q) > 0$, and we do the following: if the component is a $k$-restricted component, we set $x'(Q) =x(Q)$. Otherwise, by Theorem~\ref{thm:k-restricted-decomp}, we compute a $k$-restricted Steiner Tree $Q'$ with components $Q_1', \ldots, Q_j'$, each containing at most $k$ terminals, such that $cost(Q') = \sum_{i = 1}^j cost(Q_i') \leq \left(1 + \frac{4}{\log k} \right) cost(Q)$. Moreover, we ``direct'' the components consistently towards the sink $c'$ of $Q$, and increase the corresponding $x'$-variables by $x(Q')$ for each such directed component. The resulting vector $x'$ is a feasible solution for $k$-DCR, whose objective function value is at most $\left(1 + \frac{4}{\log k} \right)$ times the objective function value of $x$. We therefore get $\OPT_{\LP}(k) \leq \left(1 + \frac{4}{\log k} \right) \OPT_{\LP}(|R|) \leq  (1 + \varepsilon) \OPT$.
\end{proof}

\end{document}